\newcommand{\comp}{\sim}
\newcommand{\cc}[1]{\mathcal{#1}}
\newcommand{\bb}[1]{\mathbb{#1}}
\DeclareMathOperator*{\argmax}{arg\,max}
\DeclareMathOperator*{\argmin}{arg\,min}
\DeclareMathOperator*{\eqdist}{\overset{!}{=}}
\DeclareMathOperator*{\neqdist}{\not\overset{d}{\neq}}
\newcommand{\indep}{\perp \!\!\! \perp}
\newcommand{\p}{P}
\newcommand{\pa}{\textnormal{PA}}
\newcommand{\diag}[1]{\textnormal{diag}\left(#1\right)}
\newcommand{\supp}[1]{\textnormal{supp}({#1})}
\newcommand{\norm}[1]{\left\lVert#1\right\rVert}
\newcommand{\hatimec}{\hat{\cc{I}}\text{-MEC}}
\newcommand{\imec}{\cc{I}\mathrm{-MEC}}
\newcommand{\simec}{\cc{I}^\star\text{-MEC}}
\newcommand{\hmec}{\cc{H}\text{-MEC}}
\newcommand{\ymec}{\cc{Y}\text{-MEC}}
\newcommand{\preds}[1][p]{\{1,...,#1\}}
\newcommand{\var}{\textnormal{Var}}
\newcommand{\abs}[1]{\left\lvert {#1} \right\rvert}
\newcommand{\absb}[1]{\big\lvert {#1} \big\rvert}
\renewcommand{\dag}{G}
\declaretheorem[name=Lemma]{lemma} 
\declaretheorem[name=Proposition]{proposition}
\declaretheorem[name=Theorem]{theorem}
\newtheorem{definition}{Definition}
\newtheorem{corollary}{Corollary}
\newtheorem{assumption}{Assumption}
\newcommand{\ntext}[1]{#1}
\newcommand\rmj[1]{}
\newcommand{\comm}[1]{}
\newcommand{\juan}[1]{}
\newcommand\Peter[1]{}
\newcommand\Armeen[1]{}
\newcommand\Christina[1]{}
\definecolor{g}{RGB}{106, 201, 151}
\newcommand{\ie}{i.e., }
\renewcommand{\glossarysection}[2][]{\paragraph{Notation} Used throughout the proofs of this section.}
\newglossaryentry{p}
{
    name=\ensuremath{p},
    description={the number of observed variables}
}
\newglossaryentry{preds}
{
    name=\ensuremath{[p]},
    description={the set of variable indices, \ie $\{1,...,p\}$}
}
\newglossaryentry{id}
{
    name=\ensuremath{I},
    description={the $p \times p$ identity matrix}
}
\newglossaryentry{unit_vector}
{
    name=\ensuremath{e_i},
    description={the $i^\text{th}$ canonical unit vector in $\bb{R}^p$, \ie the vector that is all zeros except for a 1 at the $i^\text{th}$ coordinate}
}
\newglossaryentry{gdiag}
{
    name=\ensuremath{\diag{.}},
    description={the diagonal elements of a matrix}
}
\newglossaryentry{support}
{
    name=\ensuremath{\supp{.}},
    description={the indices of a vector's non-zero elements}
}
\newglossaryentry{geqdist}
{
    name=\ensuremath{\eqdist},
    description={equality in distribution}
}
\newglossaryentry{row_a}
{
    name=\ensuremath{A_{i:}},
    description={the $i^\text{th}$ row of A}
}
\newglossaryentry{col_a}
{
    name=\ensuremath{A_{:i}},
    description={the $i^\text{th}$ column of A}
}
\newglossaryentry{row_expr}
{
    name=\ensuremath{[.]_{i:}},
    description={the $i^\text{th}$ row of the expression in brackets}
}
\newglossaryentry{col_expr}
{
    name=\ensuremath{[.]_{:i}},
    description={the $i^\text{th}$ column of the expression in brackets}
}
\newglossaryentry{submatrix}
{
    name=\ensuremath{A_{R,C}},
    description={the submatrix of $A$ formed by selecting all elements from rows $R$ and columns $C$}
}
\newglossaryentry{dsep}
{
    name=\ensuremath{i \indep_{\dag} j | S},
    description={$i$ and $j$ are d-separated in graph $\dag$ given $S$}
}
\newglossaryentry{parents}
{
    name=\ensuremath{\pa_\dag(i)},
    description={the parents of node $i$ in the DAG $\dag$}
}
\newglossaryentry{dag_of}
{
    name=\ensuremath{\cc{G}(B)},
    description={the directed graph resulting from the matrix $B$, that is, $B_{ij}\neq 0 \iff j \to i \in \cc{G}(B)$}
}
\newglossaryentry{comp}
{
    name=\ensuremath{B \comp \dag},
    description={the matrix $B$ is compatible with the graph $\dag$, \ie all edges in $\cc{G}(B)$ appear in $\dag$}
}
\author{Juan L. Gamella\\
\addr Seminar for Statistics\\ETH Zürich\\Zürich, Switzerland
\email juan.gamella@stat.math.ethz.ch
\AND
Armeen Taeb\\
\addr Department of Statistics, University of Washington
\email ataeb@uw.edu
\AND
Christina Heinze-Deml\\
\addr Seminar for Statistics, ETH Zürich
\email heinzedeml@stat.math.ethz.ch
\AND
Peter B{\"u}hlmann\\
\addr Seminar for Statistics, ETH Zürich
\email peter.buehlmann@stat.math.ethz.ch
}
\begin{document}

\title{Characterization and Greedy Learning of Gaussian Structural Causal Models under Unknown Interventions}
\ShortHeadings{Gaussian Causal Models with Unknown Interventions}{Gamella, Taeb, Heinze-Deml and B\"uhlmann}
\editor{}

\maketitle

\begin{abstract}
We consider the problem of recovering the causal structure underlying observations from different experimental conditions when the targets of the interventions in each experiment are unknown. We assume a linear structural causal model with additive Gaussian noise and consider interventions that perturb their targets while maintaining the causal relationships in the system. Different models may entail the same distributions, offering competing causal explanations for the given observations. We fully characterize this equivalence class and offer identifiability results, which we use to derive a greedy algorithm called GnIES to recover the equivalence class of the data-generating model without knowledge of the intervention targets. In addition, we develop a novel procedure to generate semi-synthetic data sets with known causal ground truth but distributions closely resembling those of a real data set of choice. We leverage this procedure and evaluate the performance of GnIES on an array of synthetic\rmj{, real,} and semi-synthetic data sets, \ntext{and real data from a biological system and a tightly controlled physical system}.
\rmj{Despite the strong Gaussian distributional assumption, GnIES is robust to an array of model violations and competitive in recovering the causal structure in small- to large-sample settings.} We provide, in the Python packages \href{https://github.com/juangamella/gnies}{\texttt{gnies}} and \href{https://github.com/juangamella/sempler}{\texttt{sempler}}, implementations of GnIES and our semi-synthetic data generation procedure. 
\end{abstract}%
\begin{keywords}
causality, causal discovery, graphical models, equivalence classes, greedy search
\end{keywords}
\section{Introduction}
\label{s:intro}
Knowledge of the causal relationships underlying a system or phenomenon allows us to predict its behavior as it undergoes manipulations. Examples of such knowledge are ubiquitous in human cognition starting from an early age \citep{gopnik2004theory, muentener2018development}. It is also of particular importance in science, where understanding a phenomenon usually means understanding the causal relationships which govern it.
Scientists are typically interested in causal relationships because they want to intervene on the system \citep{didelez2005statistical}, with ambitious goals ranging from curing diseases to limiting the effects of climate change.

Causal models \citep{pearl2009causal} aim at approximating causal relations. They can be seen as abstractions of more accurate physical or mechanistic models that retain enough power to answer interventional or counterfactual queries \citep{peters2017elements}. A causal understanding also benefits traditional prediction and regression problems, as it allows for robust predictions which generalize to new and unseen environments \citep{Subbaswamy2018LearningPM, buhlmann2018invariance, Rothenhausler2018AnchorRH, pfister2019stabilizing, heinze2021conditional}.

Inferring causal models from data is commonly referred to in the literature as \emph{causal learning} or \emph{causal discovery}. Akin to statistical learning, this task is challenged by the difficulty of trying to infer properties of a distribution from a finite sample. However, even full knowledge of the underlying distribution does not make causal learning trivial, as different causal models can entail the same distribution, and thus offer competing explanations for the observations gathered from the system. Under additional assumptions about the model class or noise distributions, identifiability of the true causal model is possible \citep{shimizu2006linear, hoyer2009nonlinear, peters2011identifiability, peters2014identifiability, buhlmann2014cam}. Without making such assumptions,
identifiability can
be improved through observations of the system under additional experimental conditions, where some variables have received interventions. This gain in identifiability strongly depends on the nature and targets of the experiments \citep{eberhardt2007causation, eberhardt2007interventions, hauser2014two, gamella2020active}.

\subsection{Related Work}

Motivated by the gain in identifiability, several works in the causal discovery literature deal with data collected from different interventional environments, for example, observations gathered under different experimental conditions \citep{hauser2012characterization, magliacane2016ancestral, zhang2017causal, yang2018characterizing}. However, they require full knowledge of the intervention targets, an assumption that may not be satisfied in many settings, for example, when interventions may have off-target effects.
Methods based on the invariance principle \citep{peters2016causal,meinshausen2016methods,ghassami2017learning,heinze2018invariant,pfister2019invariant} do not require such knowledge but still place restrictions on which targets are allowed; as a result, they often estimate only parts of the causal graph, such as the direct causes of a variable of interest. Moreover, while some of these methods allow control against false causal selection, they are often overly conservative.

Other recently proposed methods place weaker assumptions on the intervention targets. Joint causal inference\footnote{This paper also contains a detailed (but non-exhaustive) summary of methods that deal with data from multiple environments and their assumptions \citep[table 4]{mooij2020joint}.}
\citep{mooij2020joint} treats environments as additional nodes in the causal graph, with some flexible assumptions on their structure; this allows extending standard methods based on conditional independence testing to this setting, such as the PC \citep{spirtes2000causation}, IC \citep{pearl2009causal}, FCI \citep{spirtes1999algorithm, zhang2008completeness}, and GSP \citep{solus2017consistency} algorithms. An extension of the latter, the recent UT-IGSP algorithm additionally employs invariance tests to estimate the causal graph without knowledge of the intervention targets, under flexible assumptions on the nature of the interventions \citep[Assumptions 1 and 2]{squires2020permutation}. \citet{jaber2020causal} study the same setting in the presence of hidden confounders; the authors provide an algorithm based on the same kind of tests but provide no empirical evaluation of its performance. A drawback of methods based on conditional independence and invariance tests is that these usually require large samples, making them ill-suited for situations where only a few observations of the system per environment are available.

Recent works dealing with a small-sample setting employ gradient-based optimization on a continuous relaxation of the causal discovery problem \citep{zheng2018dags,yu2019dag,lachapelle2019gradient,zheng2020learning}. Extensions have been made to the setting with interventional data with unknown targets \citep{ke2019learning,brouillard2020differentiable,faria2022differentiable,hagele2022bacadi}. However, there are indications that the remarkable performance of some of these methods relies on an artifact of the synthetic evaluation data, and can be matched by a simple algorithm that directly exploits it \citep{reisach2021varsort}. Furthermore, \citet{seng2022notears} found this reliance to be a vulnerability to adversarial attacks.

\subsection{Outline and Contributions}

We consider linear Gaussian structural causal models for interventional or perturbation data as detailed in \autoref{s:model}. Such models, often used for different environments or domain sources, have become increasingly popular in the advent of heterogeneous data \citep{hauser2012characterization,hauser2015jointly,peters2016causal,Rothenhausler2018AnchorRH,taeb2021perturbations}. Our model is a special case of the one used in \citet{taeb2021perturbations}, but without considering the setting with hidden variables. The simpler model allows us to derive a precise characterization of identifiable structures (\autoref{s:equivalence}). Based on this result, we build a greedy algorithm (\autoref{s:greedy}) which we call \emph{greedy noise-interventional equivalence search} (GnIES) and which is less heuristic-based than the ad-hoc optimization in \citet{taeb2021perturbations}. Yet, our model in \autoref{s:model} is still much more flexible than the ones for observational data only \citep{chickering2002optimal, van_de_Geer_2013} or when interventions have known targets \citep{hauser2012characterization,hauser2015jointly,wang2017permutation}.

Our proposed algorithm GnIES can be seen as an extension of the celebrated greedy equivalence search (GES) of the penalized maximum likelihood estimator for observational data \citep{chickering2002optimal} to the setting of interventional data with unknown intervention targets. By aggregating statistical strength across environments, the maximum likelihood estimator becomes particularly attractive when there are many different environments or domain sources with few observations only. In \autoref{s:experiments} we evaluate the performance of the algorithm on synthetic and real data sets. The results show that despite the strong distributional assumptions (linearity and normality), GnIES is robust to an array of model violations, and is competitive in recovering the causal structure in small-sample settings. We study its computational complexity and find it to be on par with competing methods.

The algorithm is made available in the Python package \href{https://github.com/juangamella/gnies}{\texttt{gnies}}. More details about the package and the code to reproduce the experiments and figures from the paper can be found in the repository \href{https://github.com/juangamella/gnies-paper}{\texttt{github.com/juangamella/gnies-paper}}. A detailed summary of the software contributions of the paper, including Python implementations of the baselines and the package \href{https://github.com/juangamella/sempler}{\texttt{sempler}} to generate semi-synthetic data, can be found in Appendix \ref{s:software}.

\section{Our Model}
\label{s:model}
We consider the setting where we have access to observations of variables under different environments, such as experimental conditions in which some of the variables may have been manipulated, that is, received interventions. To represent this setting, we consider a vector $X^e \in \bb{R}^p$ of random variables observed in environment $e$, and a collection $\cc{E}$ of such environments. We denote the resulting collection of distributions as $\left\{\p^e\right\}_{e \in \cc{E}}$. We model these distributions via a parametric model of the form
\begin{align}
\label{eq:model}
X^e = B X^e + \epsilon^e \quad \forall e \in \cc{E},  
\end{align}
where:
\begin{enumerate}[label=\roman*)]
    \item $B \in \bb{R}^{p \times p}$ corresponds to the adjacency matrix of a directed acyclic graph (DAG), that is, it has zeros on the diagonal and is lower triangular up to a permutation of rows and columns.
    \item $\epsilon^e = (\epsilon^e_1, ..., \epsilon^e_p) \sim \cc{N}(\nu^e, \Omega^e)$ are noise terms with $\epsilon^e_i \indep \epsilon^e_j \; \forall i \neq j$, that is, $\Omega^e$ is a diagonal matrix with positive entries on the diagonal.
\end{enumerate}
\ntext{The independence requirement in (ii) corresponds to the assumption of causal sufficiency.} Furthermore, we assume that the observed variables $X^e$ and noise terms $\epsilon^e$ are independent across environments. These assumptions mean that for each environment $e$, the variables $X^e$ follow a structural causal model (SCM) with the same coefficients $B$ across all environments $\cc{E}$. The effect of interventions is modeled by allowing the distributions of the noise terms to change across environments. We say that
\begin{center}
    variable $X_i$ has received an intervention if $\epsilon^e_i \neqdist \epsilon_i^{f}$ for some $e,f \in \mathcal{E}$,
\end{center}
that is, if the distribution of its noise term is different for at least two environments. The distribution of the observed variables in each environment is the multivariate Gaussian distribution, that is, $X^e \sim \p^e$ with $\p^e = \cc{N}(\mu^e, \Sigma^e)$
where
\begin{align}
\label{eq:obs_mean_cov}
    \mu^e = (I-B)^{-1}\nu^e \ \text{ and } \ \Sigma^e = (I-B)^{-1}\Omega^e(I-B)^{-T}.
\end{align}
\noindent
The matrix $B$ is the adjacency matrix of the graph underlying the SCM, which we denote by
\begin{align}
    \label{eq:graph}
    \cc{G}(B) := (V, E) \text{ where } V = \preds \text{ and } E = \{(j \to i) : B_{ij} \neq 0\}.
\end{align}
\noindent
Each distribution $P^e$ satisfies the Markov property with respect to the graph $\cc{G}(B)$ \citep[Proposition 6.31]{pearl2009causality, peters2017elements}.\rmj{ No assumption of faithfulness is made unless otherwise stated.}

\subsection{Equivalent Models}

Throughout the paper, we refer to the term ``model'' as the one from \eqref{eq:model}, which contains a causal structure reflected in the DAG $\cc{G}(B)$ \eqref{eq:graph}. Under our modeling assumptions, we find that different models may entail the same set of distributions $\left\{\p^e\right\}_{e \in \cc{E}}$. Since the distribution over the observed variables is Gaussian, it suffices for two models to entail distributions with the same mean and covariance. Together with \eqref{eq:obs_mean_cov} we arrive at the following definition:

\begin{definition}[distribution equivalent models]
\label{def:d_equiv}
Given a collection of environments $\cc{E}$, we call two models $\big(B,\{\nu^e, \Omega^e\}_{e \in \cc{E}}\big)$ and $\big(\tilde{B},\{\tilde{\nu}^e, \tilde{\Omega}^e\}_{e \in \cc{E}}\big)$ distribution equivalent if, for all $e \in \cc{E}$,
\begin{enumerate}[label=\roman*)]
    \item $(I-B)^{-1}\nu^e = (I-\tilde{B})^{-1}\tilde{\nu}^e$, and
    \item $(I-B)^{-1}\Omega^e(I-B)^{-T} = (I-\tilde{B})^{-1}\tilde{\Omega}^e(I-\tilde{B})^{-T}$.
\end{enumerate}
\end{definition}
\ntext{The definition does not include intervention targets because these are assumed to be unknown. They are part of the graphical characterization in \autoref{s:equivalence}}. The true data-generating model, and importantly, its causal structure, may not be identifiable without additional assumptions. As we will show in the next section, if the collection of environments is a singleton ($\abs{\cc{E}} = 1$), the class of distribution equivalent models is a superset of the Markov equivalence class. Under faithfulness, \ntext{and without additional assumptions---e.g.\ equal-noise variances \citep{peters2014identifiability}---}the distribution equivalent models with the sparsest connectivity matrices are the same as the Markov equivalence class (see \autoref{prop:imec_faithfulness}). Additionally, two immediate results follow from \autoref{def:d_equiv}. First, the class of equivalent models cannot grow as we add additional environments:

\begin{corollary}
\label{corr:envs_superset}
Let $\cc{E}_1, \cc{E}_2$ be collections of environments such that $\cc{E}_1 \subseteq \cc{E}_2$. If two models are distribution equivalent under $\cc{E}_2$, they are also under $\cc{E}_1$.
\end{corollary}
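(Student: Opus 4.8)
The plan is to reduce the statement to the logical observation that the two defining conditions in \autoref{def:d_equiv} are universally quantified over the environment index $e$, so that shrinking the index set can only preserve them. First I would fix two models $\big(B,\{\nu^e, \Omega^e\}_{e \in \cc{E}_2}\big)$ and $\big(\tilde{B},\{\tilde{\nu}^e, \tilde{\Omega}^e\}_{e \in \cc{E}_2}\big)$ that are distribution equivalent under $\cc{E}_2$, and note that restricting each to $\cc{E}_1$ amounts to keeping the shared matrices $B$ and $\tilde{B}$ and discarding every per-environment noise parameter indexed by an environment in $\cc{E}_2 \setminus \cc{E}_1$.

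The core step is then immediate. Distribution equivalence under $\cc{E}_2$ asserts that conditions i) and ii) of \autoref{def:d_equiv} hold for every $e \in \cc{E}_2$. Since $\cc{E}_1 \subseteq \cc{E}_2$, each $e \in \cc{E}_1$ is in particular an element of $\cc{E}_2$, so both equalities $(I-B)^{-1}\nu^e = (I-\tilde{B})^{-1}\tilde{\nu}^e$ and $(I-B)^{-1}\Omega^e(I-B)^{-T} = (I-\tilde{B})^{-1}\tilde{\Omega}^e(I-\tilde{B})^{-T}$ continue to hold for every $e \in \cc{E}_1$. Invoking \autoref{def:d_equiv} in the other direction, the restricted models are distribution equivalent under $\cc{E}_1$, which is the claim.

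There is essentially no mathematical obstacle here; the only point requiring care is the bookkeeping around what it means to \emph{restrict} a model from $\cc{E}_2$ to $\cc{E}_1$ --- namely that the coefficient matrix $B$ is shared across environments and is left unchanged, while only the collection of per-environment noise parameters $\{\nu^e, \Omega^e\}$ is truncated to the sub-collection indexed by $\cc{E}_1$. Once this is made precise, the result is simply the monotonicity of a ``for all $e$'' statement under passing to a subset of the index set.
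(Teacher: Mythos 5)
Your argument is correct and matches the paper's reasoning exactly: the paper states this corollary as an immediate consequence of \autoref{def:d_equiv} without a separate proof, precisely because the two defining conditions are universally quantified over $e$ and therefore persist when the index set is shrunk. Your write-up simply makes that immediacy explicit, including the correct bookkeeping that $B$ is shared and only the per-environment noise parameters are truncated.
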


Second, the class does not shrink under interventions that change only the means of the noise terms. \ntext{To see this, consider that for any $(B, \{\nu_e\}_{e \in \cc{E}})$, condition (i) is satisfied by any $(\tilde{B}, \{\tilde{\nu}^e\}_{e \in \cc{E}})$ where $\tilde{\nu}^e =  (I-\tilde{B})(I-B)^{-1}\nu^e$ for all $e \in \cc{E}$. Without additional assumptions on their targets or size, these interventions do not provide additional identifiability.}  This argument does not hold for interventions that affect the noise-term variances, as we impose the constraint that $\tilde{\Omega}^e$ be diagonal.
Thus, from now on we consider the data to be centered; for a model $\big(B,\{\nu^e, \Omega^e\}_{e \in \cc{E}}\big)$, by \eqref{eq:obs_mean_cov} and the fact that $(I-B)$ is always a full-rank matrix, this assumption results in $\nu^e = 0$ for all $e \in \cc{E}$. Therefore, from this point on we will denote a model by a tuple $(B,\{\Omega^e\}_{e \in \cc{E}})$, and by $\left[\left(B,\{\Omega^e\}_{e \in \cc{E}}\right)\right]$ its class of distribution equivalent models.

\subsection{\ntext{A practical discussion about the intervention model}}
\label{ss:intervention_model}


\ntext{Do-, hard or perfect interventions \citep{pearl2009causality, eberhardt2007interventions, peters2017elements} remove the incoming edges of the target, removing the influence of its parents. Thus, they model settings where interventions isolate the target from the effect of other variables in the system. In many settings, such as the real examples in \autoref{s:experiments}, interventions can only hope to perturb a variable, and the assumption above may be unrealistic. Instead, we follow previous work in modeling interventions that \emph{modify} the causal relationships of the target, retaining its parental set \citep{tian2001causal, squires2020permutation, jaber2020causal, taeb2021perturbations}. For the class of SCMs we consider, this can be achieved by either changing the parental coefficients of the target in $B$, or modifying the distribution of the noise term. We leave the former as future work (see \autoref{ss:future_work} for a discussion) and focus instead on the latter type of intervention. Because the noise term of a variable propagates through the causal path, it models the uncertainty arising from unmeasured causes of the variable, instead of measurement noise---which would not propagate downstream. Therefore, changes in the mean and variance of the noise terms can potentially capture various interventions, as highlighted by the real-data experiments in \autoref{s:experiments} (e.g., \autoref{fig:lt}c).}

\section{Graphical Characterization of Distributional Equivalence}
\label{s:equivalence}

In this section, we give a graphical characterization of distribution equivalent models, which we then use in \autoref{s:greedy} to construct an algorithm that performs a greedy search over the space of distributional equivalence classes. The proofs for these results can be found in Appendix \ref{s:proofs}. We refer to a \emph{graphical} characterization because we describe the graphs underlying the models in the distributional equivalence class. More formally, for a class $\left[\big(B,\{\Omega^e\}_{e \in \cc{E}}\big)\right]$ of distribution equivalent models, we provide results describing the set

\begin{align}
\label{eq:g_operator}
\bb{G}\left(B,\{\Omega^e\}_{e \in \cc{E}}\right) :=
   \left\{
      \dag : 
      \exists ( \tilde{B},\{\tilde{\Omega}^e\}_{e \in \cc{E}})
      \in \left[\left(B,\{\Omega^e\}_{e \in \cc{E}}\right)\right]
      \text{ such that } \tilde{B} \comp \dag
    \right\},
\end{align}
where $\tilde{B} \comp \dag$ denotes that the adjacency matrix $\tilde{B}$ satisfies the sparsity pattern of the graph $\dag$, that is, that $(i \to j) \notin \dag \implies \tilde{B}_{ji} = 0$.
Our results are based on the notion of \emph{transition sequence equivalence}, introduced by \citet{tian2001causal}. For notational convenience and to account for the different terminology, we refer to it here as $\cc{I}$-equivalence, where $\cc{I} \subseteq \preds$ corresponds to a set of intervention targets. 

\begin{definition}[$\cc{I}$-equivalence]
\label{def:i_equiv}
Two DAGs $\dag$ and $\dag'$ are said to be \emph{$\cc{I}$-equivalent} under interventions on targets $\cc{I} \subseteq \preds$ if
\begin{enumerate}[label=\roman*)]
    \item they have the same skeleton,
    \item they have the same v-structures, and
    \item for all $i \in \cc{I}$, node $i$ has the same parents in $\dag$ and $\dag'$.
\end{enumerate}
\end{definition}

We denote $\cc{I}$-equivalence between graphs $\dag$ and $\dag'$ by $\dag \sim_\cc{I} \dag'$. Furthermore, we denote by $\imec(\dag)$ the class of graphs $\cc{I}$-equivalent to $\dag$. This nomenclature is motivated by the fact that two graphs that satisfy conditions $(i)$ and $(ii)$ are Markov equivalent \citep{verma1990equivalence}. As such, a class of $\cc{I}$-equivalent DAGs is a subset of their Markov equivalence class, and both classes are the same when $\cc{I} = \emptyset$. Together, conditions $(i)$ and $(iii)$ imply that two $\cc{I}$-equivalent graphs also have the same children for all variables $i \in \cc{I}$. In other words, an intervention orients all the edges around its target, and then possibly more following an application of the Meek rules \citep[Figure 1]{Meek1995CausalIA}.

In the remainder of this section, we give results on the assumptions for which the $\cc{I}$-equivalence class of the graph $\cc{G}(B)$ underlying a model $\left(B,\{\Omega^e\}_{e \in \cc{E}}\right)$ fully describes the set $\bb{G}\left(B,\{\Omega^e\}_{e \in \cc{E}}\right)$ of its distribution equivalent graphs. We use
\begin{align}
\label{eq:i_operator}
\bb{I}\big(\{\Omega^e\}_{e \in \cc{E}}\big) := \left \{j : \exists e,f \in \cc{E} \text{ s.t. } \Omega^e_{jj} \neq \Omega^f_{jj}\right \}
\end{align} 
to denote the indices of variables that have received an intervention in at least one of the environments in $\cc{E}$. Note that when $\abs{\cc{E}}=1$, then $\bb{I}\big(\{\Omega^e\}_{e \in \cc{E}}\big) = \emptyset$. Without additional assumptions, we have that the $\cc{I}$-equivalence class of the data-generating graph $\cc{G}(B)$ is a subset of the class of distribution equivalent graphs:

\begin{restatable}{lemma}{propimec}
\label{prop:imec}
Consider a model $\big(B,\{\Omega^e\}_{e \in \cc{E}}\big)$ with underlying graph $\cc{G}(B)$. Let $\cc{I} := \bb{I}\big(\{\Omega^e\}_{e \in \cc{E}}\big)$ be the indices of variables that have received interventions in $\cc{E}$. Then,
    $$\imec\left(\cc{G}(B)\right) \subseteq \bb{G}\left(B,\{\Omega^e\}_{e \in \cc{E}}\right).$$
\end{restatable}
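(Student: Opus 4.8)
The plan is to prove the inclusion by an explicit construction: given an arbitrary $\dag' \in \imec(\cc{G}(B))$, I will build a single matrix $\tilde{B} \comp \dag'$ together with diagonal positive matrices $\{\tilde{\Omega}^e\}_{e \in \cc{E}}$ such that $(\tilde{B}, \{\tilde{\Omega}^e\}_{e \in \cc{E}})$ is distribution equivalent to $(B, \{\Omega^e\}_{e \in \cc{E}})$, which immediately places $\dag'$ in $\bb{G}(B, \{\Omega^e\}_{e \in \cc{E}})$. Since the data are centered ($\nu^e = 0$), condition (i) of \autoref{def:d_equiv} holds trivially with $\tilde{\nu}^e = 0$, so it suffices to reproduce every covariance, i.e. to find $\tilde{B}$ and diagonal $\tilde\Omega^e$ with $\Sigma^e = (I - \tilde B)^{-1}\tilde\Omega^e(I-\tilde B)^{-T}$ for all $e$. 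For any fixed $e$ this is just the statement that the Gaussian $\cc{N}(0,\Sigma^e)$, being Markov with respect to $\cc{G}(B)$ and hence, by conditions (i)--(ii) of \autoref{def:i_equiv} and Markov equivalence, with respect to $\dag'$, factorizes along $\dag'$; the coefficients are the population regressions of each node on its $\dag'$-parents and the diagonal entries the conditional variances. The only genuine difficulty is that this $\tilde B$ must be the \emph{same} across all environments --- the residual variances $\tilde\Omega^e_{ii}$ may then vary freely with $e$.

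The strategy for obtaining an environment-invariant $\tilde B$ is to connect $\cc{G}(B)$ to $\dag'$ by a sequence of covered-edge reversals that stays inside the $\cc{I}$-equivalence class, and to track the parameters along this path. Write $\dag_0 := \cc{G}(B), \dag_1, \dots, \dag_m := \dag'$ for such a sequence, each $\dag_{t+1}$ obtained from $\dag_t$ by reversing one covered edge and every $\dag_t \in \imec(\cc{G}(B))$. The key observation enabled by \autoref{def:i_equiv}(iii) is that every reversed edge has both endpoints \emph{outside} $\cc{I}$: reversing $i \to j$ changes the parent sets of both $i$ and $j$, so if either lay in $\cc{I}$ the parent set of an intervention target would differ between two $\cc{I}$-equivalent graphs, a contradiction.

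I will then propagate the parameters by induction, maintaining the invariant that there is one matrix $B^{(t)} \comp \dag_t$ and diagonals $\{\Omega^{e,(t)}\}$ reproducing every $\Sigma^e$, with the structural equation of each node $k \notin \cc{I}$ --- both its coefficient row and its residual variance, equivalently the conditional $\p^e(X_k \mid X_{\pa_{\dag_t}(k)})$ --- independent of $e$, and with $B^{(t)}_{k,:} = B_{k,:}$ for $k \in \cc{I}$. This holds at $t = 0$ with $B^{(0)} = B$ and $\Omega^{e,(0)} = \Omega^e$. When reversing a covered edge $i \to j$ (both outside $\cc{I}$, with $\pa_{\dag_t}(j) = \pa_{\dag_t}(i) \cup \{i\}$ by coveredness), only the equations for $i$ and $j$ change. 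Writing $S := \pa_{\dag_t}(i)$, the induction hypothesis makes both $\p^e(X_i \mid X_S)$ and $\p^e(X_j \mid X_S, X_i)$ environment-invariant, hence so is the joint conditional $\p^e(X_i, X_j \mid X_S)$; re-ordering the chain rule expresses it as $\p^e(X_j \mid X_S)\,\p^e(X_i \mid X_S, X_j)$, which is therefore also environment-invariant and yields exactly the new structural equations of $j$ and $i$ along $\dag_{t+1}$. Reading off their coefficient rows and (positive) conditional variances gives $B^{(t+1)} \comp \dag_{t+1}$ and $\{\Omega^{e,(t+1)}\}$; the joint covariance is unchanged because we merely re-expressed the same conditional, and the invariant is preserved. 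At $t = m$ this produces $\tilde B := B^{(m)} \comp \dag'$ and $\tilde\Omega^e := \Omega^{e,(m)}$, completing the construction.

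I expect the main obstacle to be the second step, namely establishing that $\cc{G}(B)$ and $\dag'$ can in fact be joined by covered-edge reversals that never leave $\imec(\cc{G}(B))$ --- an interventional analogue of Chickering's transformation theorem. The natural route is to induct on the number of edges oriented differently in $\dag_t$ and $\dag'$ and to show one can always select a covered edge whose reversal reduces this count while preserving the skeleton, the v-structures, and the parent sets of all $i \in \cc{I}$; the last requirement is what distinguishes this from the classical argument and is where care is needed. Once this connectivity is in hand, the local chain-rule re-expression above --- whose validity rests entirely on \autoref{def:i_equiv}(iii) forcing the reversed edges off $\cc{I}$ --- does the rest essentially for free.
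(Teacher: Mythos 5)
Your construction is sound and reaches the statement by a genuinely different technical route from the paper's. The paper proves \autoref{prop:imec} by combining \autoref{lemma:imaps_dec}, \autoref{corr:coefficients}, \autoref{lemma:structure} and \autoref{lemma:imaps_idec}: it also walks a Chickering-style sequence from $\cc{G}(B)$ to $\dag'$, but it executes each covered-edge reversal by an explicit linear-algebra construction (solving for vectors $\beta^i,\beta^j$ so that $(I-B')(I-B)^{-1}\Omega^{1/2}$ retains full rank and orthogonal rows), and it handles the multiple environments in a single stroke at the end: $\tilde{B}$ is built for one reference environment, and the same $\tilde{B}$ is shown to work for every $e$ because the rescaling $(\Omega^0)^{-1/2}(\Omega^e)^{1/2}$ only affects columns indexed by $\cc{I}$, which are canonical unit vectors since $\tilde{B}_{i:}=B_{i:}$ there. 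You instead re-factorize the Gaussian conditional $\p^e(X_i,X_j\mid X_S)$ at each reversal and carry environment-invariance of the non-$\cc{I}$ structural equations along the whole path. Your version is more probabilistic and makes it more transparent why fixing the parent sets of the targets is what matters; the paper's version is stated for arbitrary independence maps (edge additions included, via \autoref{thm:meek}), which it needs to reuse elsewhere, whereas for this lemma only Markov-equivalent graphs are required, so your reliance on covered reversals alone is enough.

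The one gap you flag, an ``interventional analogue of Chickering's transformation theorem,'' is real but much smaller than you anticipate, and you do not need to redo the induction on differently oriented edges. \autoref{thm:chickering} already provides a sequence of exactly $\delta$ \emph{distinct} covered-edge reversals with every intermediate graph Markov equivalent; since the sequence ends at $\dag'$ and reverses $\delta$ distinct edges, each reversed edge must be one of the $\delta$ edges oriented differently in $\cc{G}(B)$ and $\dag'$. No edge incident to a node of $\cc{I}$ can be among these, because condition (iii) of \autoref{def:i_equiv} together with the common skeleton forces every edge at an $\cc{I}$-node to have the same orientation in both graphs. Hence the off-the-shelf Chickering sequence never touches the parent set of any $i\in\cc{I}$, every intermediate graph lies in $\imec(\cc{G}(B))$, and every reversal has both endpoints outside $\cc{I}$, which is exactly what your induction step needs. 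With that short observation supplied, your argument closes.
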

\noindent
If we additionally assume faithfulness for the data-generating model, its graph $\cc{G}(B)$ has the minimal number of edges\footnote{See \autoref{lemma:min_edges_imec} in Appendix \ref{s:proofs}.} and the $\cc{I}$-equivalence class is a subset of the sparsest graphs in $\bb{G}\left(B,\{\Omega^e\}_{e \in \cc{E}}\right)$:

\begin{restatable}{proposition}{propimecfaithfulness}
\label{prop:imec_faithfulness}
Consider a model $\big(B,\{\Omega^e\}_{e \in \cc{E}}\big)$ resulting in a set of distributions which are faithful with respect to its underlying graph $\cc{G}(B)$. Let $\cc{I} := \bb{I}\big(\{\Omega^e\}_{e \in \cc{E}}\big)$ be the indices of variables that have received interventions in $\cc{E}$. Then,
$$
    \imec\left(\cc{G}(B)\right) \subseteq \argmin_{\dag \in \bb{G}\left(B,\{\Omega^e\}_{e \in \cc{E}}\right)} \|\dag\|_{0},
$$
where $\|\dag\|_{0}$ is the number of edges in $\dag$.
\end{restatable}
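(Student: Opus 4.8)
The plan is to combine the containment already proved in \autoref{prop:imec} with two facts: that $\cc{I}$-equivalent graphs share a skeleton, and that under faithfulness no graph in $\bb{G}(B,\{\Omega^e\}_{e\in\cc{E}})$ can have fewer edges than $\cc{G}(B)$. By \autoref{prop:imec} we already have $\imec(\cc{G}(B)) \subseteq \bb{G}(B,\{\Omega^e\}_{e\in\cc{E}})$, so it remains only to show that every $\dag \in \imec(\cc{G}(B))$ attains the minimum of $\|\cdot\|_0$ over $\bb{G}(B,\{\Omega^e\}_{e\in\cc{E}})$. I would split this into an ``achieved'' part and a ``lower bound'' part.

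First I would observe that, by condition (i) of \autoref{def:i_equiv}, every graph in $\imec(\cc{G}(B))$ has the same skeleton as $\cc{G}(B)$ and therefore exactly $\|\cc{G}(B)\|_0$ edges; since $\cc{G}(B)$ itself lies in $\imec(\cc{G}(B))$ and hence in $\bb{G}(B,\{\Omega^e\}_{e\in\cc{E}})$, the value $\|\cc{G}(B)\|_0$ is actually attained inside the class. It thus suffices to prove the lower bound $\|\dag\|_0 \ge \|\cc{G}(B)\|_0$ for every $\dag \in \bb{G}(B,\{\Omega^e\}_{e\in\cc{E}})$, which is precisely the edge-minimality asserted in the text before the proposition (the content of the referenced \autoref{lemma:min_edges_imec}).

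For this lower bound, which is the crux, fix $\dag \in \bb{G}(B,\{\Omega^e\}_{e\in\cc{E}})$ and pick a witness $(\tilde{B}, \{\tilde{\Omega}^e\}_{e\in\cc{E}})$ in the equivalence class with $\tilde{B} \comp \dag$. Since this witness generates the distributions $\{\p^e\}_{e\in\cc{E}}$ through an SCM whose graph $\cc{G}(\tilde{B})$ is a subgraph of $\dag$, each $\p^e$ satisfies the Markov property with respect to $\dag$: whenever $i \indep_{\dag} j \mid S$ we get $X^e_i \indep X^e_j \mid X^e_S$. I would then read this against faithfulness of $\{\p^e\}_{e\in\cc{E}}$ to $\cc{G}(B)$, which supplies the converse implication $X^e_i \indep X^e_j \mid X^e_S \Rightarrow i \indep_{\cc{G}(B)} j \mid S$. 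Chaining the two shows that any d-separation in $\dag$ is a d-separation in $\cc{G}(B)$; equivalently, if $i$ and $j$ were adjacent in $\cc{G}(B)$ but non-adjacent in $\dag$, some set $S$ would d-separate them in $\dag$ and hence in $\cc{G}(B)$, contradicting their adjacency. Thus the skeleton of $\cc{G}(B)$ is contained in that of $\dag$, giving $\|\cc{G}(B)\|_0 \le \|\dag\|_0$ and completing the lower bound; together with the previous paragraph this proves the proposition.

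The main obstacle I anticipate is the precise handling of faithfulness in the multi-environment setting: the Markov property I invoke produces conditional independences that hold in each individual environment, so I must use the faithfulness hypothesis at the matching level (in a single fixed environment, or as a statement about the common d-separation model shared by all $\p^e$) so that the implication $X^e_i \indep X^e_j \mid X^e_S \Rightarrow i \indep_{\cc{G}(B)} j \mid S$ is legitimately available for the very same $S$ produced by the d-separation in $\dag$. The remaining steps—the skeleton and edge-count bookkeeping, the observation that Markovianity with respect to $\cc{G}(\tilde{B})$ passes to the supergraph $\dag$, and the appeal to \autoref{prop:imec}—are routine once this translation from conditional independence to d-separation is pinned down.
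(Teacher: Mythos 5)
Your proposal is correct, and its top-level decomposition matches the paper's: invoke \autoref{prop:imec} for membership in $\bb{G}\left(B,\{\Omega^e\}_{e \in \cc{E}}\right)$, establish that $\|\cc{G}(B)\|_0$ is the minimum edge count over that set, and finish by noting that all graphs in $\imec(\cc{G}(B))$ share the skeleton of $\cc{G}(B)$ and hence attain that minimum. Where you genuinely diverge is in the proof of the lower bound (the content of \autoref{lemma:min_edges_imec}, which reduces to \autoref{lemma:min_edges_mec}). The paper shows that any $\dag$ in the class is an independence map of $\cc{G}(B)$ and then appeals to Chickering's transformational characterization (\autoref{thm:meek}): there is a sequence of covered-edge reversals and edge additions turning $\cc{G}(B)$ into $\dag$, so $\dag$ cannot have fewer edges. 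You instead argue skeleton containment directly: non-adjacency of $i,j$ in $\dag$ yields a d-separating set $S$ in $\dag$, the Markov property of $\p^e$ with respect to $\dag$ (inherited from the witness $\cc{G}(\tilde{B}) \subseteq \dag$) turns this into a conditional independence, and faithfulness turns that into a d-separation in $\cc{G}(B)$, forcing non-adjacency there; contrapositively every edge of $\cc{G}(B)$ appears in $\dag$. Your route is more elementary and self-contained---it needs only the fact that two nodes of a DAG are adjacent iff no set d-separates them---and it actually delivers the stronger conclusion of skeleton containment rather than a bare edge-count inequality; the paper's route buys economy within the appendix, since \autoref{thm:meek} is already imported for \autoref{lemma:imaps_dec} and \autoref{prop:mec_equals_dec}. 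Your closing caveat about matching the environment in the Markov and faithfulness steps is resolved exactly as you suggest: the hypothesis makes each $\p^e$ individually faithful to $\cc{G}(B)$, so fixing a single environment suffices.
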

Note that the set $\bb{G}\left(B,\{\Omega^e\}_{e \in \cc{E}}\right)$ contains all graphs obtained by adding edges to $\cc{G}(B)$ (see \autoref{lemma:imaps_idec} in Appendix \ref{s:proofs}). This motivates considering the sparsest graphs, that is, the most succinct causal explanations which (under faithfulness) comprise the true graph and its equivalents. Without additional assumptions on the parameters of the interventions, \autoref{prop:imec_faithfulness} is the best one can do. To see this, consider a data-generating model $(B, \{\Omega^e, \Omega^f\})$ with two interventional environments $e$ and 
 $f$, and suppose that $\bb{I}\big(\{\Omega^e, \Omega^f\}) = [p]$, that is, there has been an intervention on all variables. For $\cc{I} = \bb{I}\big(\{\Omega^e, \Omega^f\})$, it follows from \autoref{def:i_equiv} that $\imec\left(\cc{G}(B)\right)$ is a singleton. However, if the interventions simply scale all noise terms by the same factor, that is, $\forall i \;\Omega^e_{ii}/\Omega^f_{ii} = \alpha$, then the set $\argmin_{\dag \in \bb{G}\left(B,\{\Omega^e, \Omega^f\}\right)} \|\dag\|_{0}$ still contains the complete Markov equivalence class of $\cc{G}(B)$, which is in general not a singleton. To guard against such pathologically selected intervention parameters, we introduce an assumption which we call \emph{intervention-heterogeneity}:

\begin{assumption}[Intervention-heterogeneity]
\label{assm:int_heter}
    A model $\big(B,\{\Omega^e\}_{e \in \cc{E}}\big)$ is said to satisfy \emph{intervention-heterogeneity} if for every pair of environments $e,f \in \cc{E}$, it holds that
    $$\frac{\Omega^e_{ii}}{\Omega^f_{ii}} \neq \frac{\Omega^e_{jj}}{\Omega^f_{jj}} \text{ for all } i\neq j \in \bb{I}\big(\{\Omega^e, \Omega^f\}\big).$$
\end{assumption}
In words, this assumption means that the noise-term variances which change between environments do so by a different factor. This is trivially satisfied when environments contain interventions on a single target. Furthermore, if the intervention variances are independently sampled from an absolutely continuous distribution, violations of \autoref{assm:int_heter} have zero probability. To fully characterize the equivalence class, we need the following additional assumption.

\begin{restatable}[Model truthfulness]{assumption}{modeltruth}
\label{assm:model_truth}
    Let $\big(B,\{\Omega^e\}_{e \in \cc{E}}\big)$ be a faithful model. Then, for any equivalent model $(\tilde{B},\{\tilde{\Omega}^e\}_{e \in \cc{E}})$, it holds that $(I-\tilde{B})(I-B)^{-1}$ has non-zero diagonal entries.
\end{restatable}

\noindent
We conjecture that this technical assumption directly holds as a consequence of faithfulness and the modeling assumptions. A violation would imply that, given some conditioning set, the conditional variance of a variable is not a function of the variance of its noise term. Furthermore, we provide additional empirical support by showing that it holds for all the random SCMs generated in our synthetic experiments. These results and a detailed discussion can be found in Appendix \ref{s:assm}.

Under faithfulness and these additional assumptions, we have that the $\cc{I}$-equivalence class is the set of sparsest distribution equivalent graphs:

\begin{restatable}[]{theorem}{propimecfull}
\label{prop:imec_full}
Consider a model $\big(B,\{\Omega^e\}_{e \in \cc{E}}\big)$ satisfying Assumptions \ref{assm:int_heter} and \ref{assm:model_truth} and resulting in a set of distributions which are faithful with respect to its underlying graph $\cc{G}(B)$. Let $\cc{I} := \bb{I}\big(\{\Omega^e\}_{e \in \cc{E}}\big)$ be the indices of variables that have received interventions in $\cc{E}$. Then,
$$\imec\left(\cc{G}(B)\right) = \argmin_{\dag \in \bb{G}\left(B,\{\Omega^e\}_{e \in \cc{E}}\right)} \|\dag\|_{0}.$$
\end{restatable}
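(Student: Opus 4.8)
The plan is to combine \autoref{prop:imec_faithfulness}, which already establishes $\imec(\cc{G}(B)) \subseteq \argmin_{\dag \in \bb{G}(B,\{\Omega^e\}_{e\in\cc{E}})}\|\dag\|_0$, with a proof of the reverse inclusion. So I fix a graph $\dag$ in the right-hand set and aim to show $\dag \sim_\cc{I}\cc{G}(B)$ by verifying the three conditions of \autoref{def:i_equiv}. Since $\dag\in\bb{G}$, there is an equivalent model $(\tilde B,\{\tilde\Omega^e\}_{e\in\cc{E}})$ with $\tilde B\comp\dag$; because $\cc{G}(\tilde B)$ is a subgraph of $\dag$ that also lies in $\bb{G}$ (as $\tilde B\comp\cc{G}(\tilde B)$), minimality of $\|\dag\|_0$ forces $\dag=\cc{G}(\tilde B)$, so $\tilde B$ is exactly the adjacency matrix of $\dag$.

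For conditions (i) and (ii) I would argue in a single environment. Each $P^e=\cc{N}(0,\Sigma^e)$ is Markov with respect to $\dag$ and, by hypothesis, faithful with respect to $\cc{G}(B)$; moreover \autoref{prop:imec_faithfulness} together with the minimal-edge property of $\cc{G}(B)$ (\autoref{lemma:min_edges_imec}) gives $\|\dag\|_0=\|\cc{G}(B)\|_0$. The standard fact that an I-map of a faithful distribution with the minimal possible number of edges is Markov equivalent to the faithful DAG then yields that $\dag$ and $\cc{G}(B)$ share the same skeleton and the same v-structures.

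The core of the argument, and the main obstacle, is condition (iii): the intervened nodes must have identical parents in $\dag$ and $\cc{G}(B)$. Here I introduce $W:=(I-\tilde B)(I-B)^{-1}$, so that the noise vectors satisfy $\tilde\epsilon=W\epsilon$ and \autoref{def:d_equiv}(ii) becomes $\tilde\Omega^e=W\Omega^e W^{\top}$ for every $e\in\cc{E}$. Fixing a pair $e,f$ and setting $Q:=(\tilde\Omega^f)^{-1/2}\,W\,(\Omega^f)^{1/2}$, one checks from $W\Omega^f W^{\top}=\tilde\Omega^f$ that $Q$ is orthogonal, and from $W\Omega^e W^{\top}=\tilde\Omega^e$ that $Q\,\diag{\Omega^e_{mm}/\Omega^f_{mm}}\,Q^{\top}=\diag{\tilde\Omega^e_{mm}/\tilde\Omega^f_{mm}}$. \autoref{assm:int_heter} guarantees that every coordinate $m$ intervened between $e$ and $f$ contributes a simple eigenvalue $\Omega^e_{mm}/\Omega^f_{mm}\neq 1$ to this ratio matrix, so its eigenvector $e_m$ is carried by $Q$ to a signed canonical basis vector; \autoref{assm:model_truth}, i.e. $W_{mm}\neq 0$, then pins this image down to $\pm e_m$. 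Consequently both the $m$-th row and the $m$-th column of $W$ are supported on $\{m\}$ for every coordinate intervened between $e$ and $f$.

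Taking the union over all environment pairs, for each $i\in\cc{I}$ the $i$-th row of $W$ equals $W_{ii}\,e_i^{\top}$, that is $e_i^{\top}(I-\tilde B)=W_{ii}\,e_i^{\top}(I-B)$; reading off the $i$-th entry gives $W_{ii}=1$ and hence $\tilde B_{i:}=B_{i:}$, so $\pa_\dag(i)=\pa_{\cc{G}(B)}(i)$, which is precisely (iii). The delicate point, and where I expect the real work to lie, is the eigenvalue--eigenvector bookkeeping of the previous paragraph: one must verify carefully that the multiplicity-one structure supplied by \autoref{assm:int_heter} survives the orthogonal conjugation and that \autoref{assm:model_truth} removes the remaining permutation and sign ambiguity, since dropping either assumption would permit a nontrivial rotation inside an eigenspace, exactly as in the uniform-scaling counterexample preceding the theorem, and would then allow the parents of an intervened node to change.
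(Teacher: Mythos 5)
Your proposal is correct, and its overall skeleton matches the paper's: the inclusion $\subseteq$ is quoted from \autoref{prop:imec_faithfulness}; the skeleton and v-structures of a minimizer are obtained by reducing to a single environment via \autoref{corr:envs_superset} and lemmas \ref{lemma:min_edges_mec}, \ref{lemma:min_edges_imec}, \ref{prop:mec_equals_dec}; and the identification $\dag = \cc{G}(\tilde{B})$ via minimality is exactly the paper's footnote argument. Where you genuinely diverge is the proof that intervened nodes keep their parents. The paper delegates this to \autoref{lemma:targets_parents}, which partitions the indices into $S=\{i: B_{i:}=\tilde{B}_{i:}\}$ and its complement, shows $M^e=(I-\tilde{B})(I-B)^{-1}(\Omega^e)^{1/2}$ is block-diagonal with orthogonal rows, invokes \autoref{assm:model_truth} together with \autoref{lemma:structure_diff} to guarantee that every row of the off-block has at least two non-zero entries, and then derives a contradiction from \autoref{assm:int_heter} via the column-rescaling \autoref{lemma:scaling}. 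You instead conjugate: $Q:=(\tilde{\Omega}^f)^{-1/2}(I-\tilde{B})(I-B)^{-1}(\Omega^f)^{1/2}$ is orthogonal and satisfies $Q\,\Omega^e(\Omega^f)^{-1}Q^T=\tilde{\Omega}^e(\tilde{\Omega}^f)^{-1}$, so \autoref{assm:int_heter} makes each intervened coordinate's variance ratio a simple eigenvalue (distinct from the common ratio $1$ of the non-intervened coordinates and from all other intervened ratios), forcing $Qe_m=\pm e_{m'}$, and \autoref{assm:model_truth} rules out $m'\neq m$ since that would give $W_{mm}=0$. The bookkeeping you flag as delicate does go through: similarity preserves multiplicities, the eigenspace of a simple eigenvalue of a diagonal matrix is a coordinate axis, and orthogonality of $Q$ upgrades the column statement $Q_{:m}=\pm e_m$ to the row statement $Q_{m:}=\pm e_m^T$, after which the unit diagonal of $I-B$ and $I-\tilde{B}$ fixes the scale to $1$. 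The two arguments use the assumptions in the same roles, but yours replaces the paper's three auxiliary lemmas (\ref{lemma:scaling}, \ref{lemma:structure_diff}, \ref{lemma:targets_parents}) with a single, self-contained spectral observation, at the cost of not isolating a reusable statement like \autoref{lemma:targets_parents}, which the paper also calls on in the consistency proof of \autoref{prop:score_properties}.
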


\rmj{\subsection*{[Removed] 3.1 $\cc{I}$-equivalence beyond Gaussian Models}
\label{ss:beyond}

We have used $\cc{I}$-equivalence, or \emph{transition sequence equivalence} as originally called by \citet[Theorem 3]{tian2001causal}, to describe the set of distribution equivalent Gaussian models. However, it is a graphical notion that applies beyond this setting, and describes how the Markov equivalence class shrinks when interventions are mechanism changes \citep[Definition 1]{tian2001causal} which do not alter the parental set of the target. The intuition behind it is as follows: the effect of an intervention travels downstream through the graph, resulting in changes in the marginal distributions of other variables. These cannot occur in non-descendants of the target \citep[Theorem 4]{tian2001causal} and occur in its descendants under a type of faithfulness which the authors call \emph{influentiality} \citep[Definition 2]{tian2001causal}. Theorem~\ref{prop:imec_full} constitutes sufficient conditions for which influentiality is satisfied for Gaussian structural causal models under noise interventions.

Under an equivalent definition of $\cc{I}$-equivalence, we observe an interesting connection to the causal invariance framework \citep{peters2016causal}. As pointed out by \citet[Section 7.6]{peters2017elements}, \autoref{def:i_equiv} can be expressed in terms of augmented graphs: given a graph $\dag$ and a set of targets $\cc{I}$, we build its augmented graph $\dag^\cc{I}$ by adding one node per intervention target to $\dag$, and a single edge from it to the intervened variable; if for some targets $\cc{I}$ the augmented versions of two graphs are Markov equivalent, the original graphs are $\cc{I}$-equivalent. The proof can be found in Appendix \ref{s:proofs_other_classes}. When built this way, the augmented graphs match those employed by \citet[setting 2]{pfister2019stabilizing} to characterize sets of invariant causal predictors. This highlights a potential connection to causal invariance, which may be of interest when proving certain properties of the maximum likelihood score employed by our greedy algorithm or extending it to non-Gaussian models.}

\subsection{Relation to other Equivalence Classes}
\label{ss:other_classes}

\rmj{We compare the $\cc{I}$-equivalence class implied by \autoref{def:i_equiv} to the interventional equivalence classes introduced by \citet{hauser2012characterization} and \citet{yang2018characterizing}. Incidentally, both papers refer to their respective equivalence classes as the ``interventional Markov equivalence class'' and denote it by $\imec$. We modify their notation to avoid conflicts. The proofs for the lemmas of this section can be found in Appendix \ref{s:proofs_other_classes}.}
\ntext{Our graphical characterization is equivalent to the \emph{transition sequence equivalence} introduced by \citet[Theorems 2 and 3]{tian2001causal}, which describes how the Markov equivalence class shrinks under \emph{single-target} interventions which do not alter the parental set of the target \citep[Definition 1]{tian2001causal}. Assumptions \ref{assm:int_heter} and \ref{assm:model_truth} amount to sufficient conditions under which this characterization also holds for our parametric setting (c.f. \autoref{eq:model}) and \emph{multiple-target} interventions. Intuitively, the assumptions ensure that the effects of interventions do not cancel out, allowing us to simply take the union of all targets (c.f. \autoref{eq:i_operator}) for our equivalence class. However, further assumptions \citep[e.g.,][Definition 2]{tian2001causal} may be needed for the general non-linear models treated in \citet{hauser2012characterization, yang2018characterizing, jaber2020causal}.

To compare our $\cc{I}$-equivalence class with the interventional equivalence classes introduced by the works mentioned above, we first show that $\cc{I}$-equivalence can also be expressed in terms of Markov equivalence of augmented graphs. In our case, we construct the augmented graph as follows:

\begin{definition}[Augmented graph---this work]
\label{def:augmented}
Given a graph $\dag$ over $p$ variables, and a set of targets $\cc{I} \subseteq [p]$, the \emph{augmented graph} $\dag^\cc{I}$ is obtained by adding a new node $F_i$ and an edge $F_i \to i$ for all $i \in \cc{I}$.
\end{definition}%
An example is shown in \autoref{fig:equiv_classes}. We can now express $\cc{I}$-equivalence as Markov equivalence between these augmented graphs, as stated in \autoref{lemma:augmented_graphs} below. The proof can be found in Appendix \ref{s:proofs_other_classes}.

\begin{restatable}[]{lemma}{lemmaaugmented}
\label{lemma:augmented_graphs}
Let $\dag_1$ and $\dag_2$ be two DAGs over $p$ variables and let $\cc{I} \subseteq [p]$ be a set of intervention targets. The following statements are equivalent:
\begin{enumerate}
    \item $\dag_1$ and $\dag_2$ are $\cc{I}$-equivalent, and
    \item their augmented graphs $\dag_1^\cc{I}$ and $\dag_2^\cc{I}$ are Markov equivalent.
\end{enumerate}
\end{restatable}

\begin{figure}[h]
\centering
\includegraphics[width=152mm]{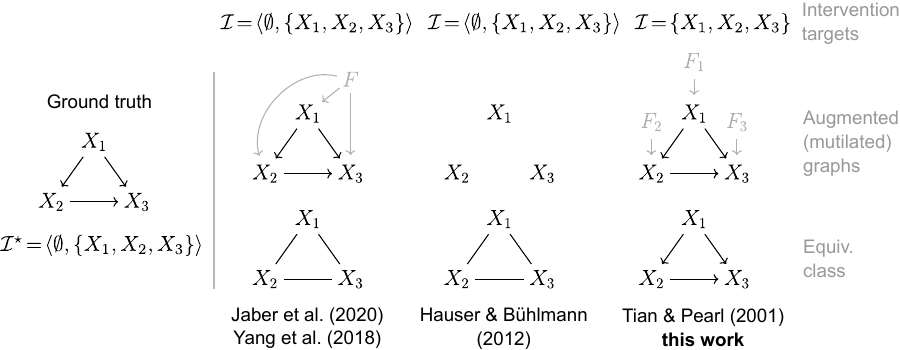}
\caption{\ntext{\comm{[new figure] }Comparison of the equivalence classes obtained for the ground-truth graph and intervention targets on the left for the different definitions given in \citet{tian2001causal, hauser2012characterization, yang2018characterizing, jaber2020causal} and this work. The works differ in their treatment of the intervention targets and the resulting equivalence classes. For the given example, \citet{jaber2020causal} and \citet{yang2018characterizing} result in the same augmented graph and equivalence class, although they are not generally the same \citep[see ][Appendix D.1]{jaber2020causal}.}}
\label{fig:equiv_classes}
\end{figure}

Using this alternative formulation, we compare in \autoref{fig:equiv_classes} the resulting equivalence classes for a concrete example under the various definitions. This highlights how Assumptions \ref{assm:int_heter} and \ref{assm:model_truth} yield further identifiability by making our setting equivalent to the single-target interventions treated in \citet{tian2001causal}. Further details and results for the related equivalence classes are given in Sections \autoref{sss:hauser} to \autoref{sss:jaber} below. The proofs for the lemmas of this section can be found in Appendix \ref{s:proofs_other_classes}.
}

\subsubsection{\citet{hauser2012characterization}}
\label{sss:hauser}
The authors present an interventional equivalence class for the case of \emph{hard} interventions \citep[Section 3.2.2]{pearl2009causal}. Also called \emph{do-} or \emph{perfect}  interventions, these isolate the intervened variable from the effect of its causal parents, effectively removing the incoming edges in the causal graph. One key difference to our formalism is how the intervention targets are represented. While we do so by means of a subset $\cc{I} \subseteq [p]$ of the observed variables, \citeauthor{hauser2012characterization} consider instead a set of subsets, which they refer to as a \emph{family} of targets. They require that this family $\cc{H}$ be \emph{conservative}, meaning that for all $j \in [p]$, there exists $I \in \cc{H}$ such that $j \notin I$ \citep[Definition 6]{hauser2012characterization}. Note that in the presence of an observational environment, that is, $\emptyset \in \cc{H}$, $\cc{H}$ is conservative. To avoid confusion with our notation, we denote their equivalence class by $\hmec$. It arises from the following equivalence relation:

\begin{proposition}[Theorem 10 from \citealp{hauser2012characterization}]
\label{thm:hauser}
Let $\dag_1$, $\dag_2$ be two DAGs and let $\cc{H}$ be a conservative family of targets. Then, $\dag_1$, $\dag_2$ are equivalent if and only if (i) $\dag_1$ and $\dag_2$ have the same skeleton and the same v-structures, and (ii) $\dag_1^{(H)}$ and $\dag_2^{(H)}$ have the same skeleton for all $H \in \cc{H}$, where $\dag_i^{(H)}$ denotes the graph obtained by removing from $\dag_i$ all edges incoming to the nodes in $H$.
\end{proposition}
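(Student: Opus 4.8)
The plan is to reduce the statement to the classical Verma--Pearl characterization of (observational) Markov equivalence applied to the mutilated graphs. The starting point is that for hard interventions the post-intervention distribution under target $H$ factorizes according to $\dag_i^{(H)}$ (the incoming edges to $H$ being severed), so two DAGs entail the same family of interventional distributions $\{P^{(H)}\}_{H \in \cc{H}}$ if and only if, for every $H \in \cc{H}$, the mutilated graphs $\dag_1^{(H)}$ and $\dag_2^{(H)}$ are Markov equivalent, i.e.\ share skeleton and v-structures \citep{verma1990equivalence}. It therefore suffices to prove that conditions (i)--(ii) are equivalent to the statement ``$\dag_1^{(H)}$ and $\dag_2^{(H)}$ are Markov equivalent for all $H \in \cc{H}$.''

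The technical workhorse will be the following observation: under (i) and (ii), every edge with exactly one endpoint in $H$ has the same orientation in $\dag_1$ and $\dag_2$. Indeed, such an edge $\{x,h\}$ with $h \in H$, $x \notin H$ survives the mutilation precisely when it is oriented $h \to x$; since the edge is present in both graphs by (i) and present in $\dag_1^{(H)}$ iff present in $\dag_2^{(H)}$ by (ii), its orientation relative to $H$ must agree. Granting this, I would prove the direction (i)+(ii) $\Rightarrow$ Markov equivalence of the mutilated graphs: the skeletons agree by (ii), and for the v-structures I would argue that any v-structure $a \to b \leftarrow c$ of $\dag_i^{(H)}$ must have $b \notin H$ (a collider in $H$ has no parents after mutilation), so its converging edges are already present in $\dag_i$; a short case analysis on whether the outer nodes $a,c$ lie in $H$, together with the orientation lemma and the equality of skeleton and v-structures of the originals, then transfers the v-structure to $\dag_j^{(H)}$.

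For the converse, assuming all mutilated graphs are Markov equivalent, condition (ii) is immediate, and conservativeness of $\cc{H}$ is what delivers condition (i). To recover the original skeleton I would note that an edge oriented into a node $y$ survives in $\dag_i^{(H)}$ for any $H$ avoiding $y$, and such an $H$ exists by conservativeness; hence an edge is present in some mutilated graph iff it is present in the original, and the per-$H$ skeleton equalities union to equality of the original skeletons. To recover the v-structures, given a v-structure $a \to b \leftarrow c$ of $\dag_1$ I would choose $H$ avoiding the collider $b$, so the v-structure appears in $\dag_1^{(H)}$ and hence, by Markov equivalence, in $\dag_2^{(H)}$, placing the converging edges $a \to b$ and $c \to b$ in $\dag_2$; non-adjacency of $a,c$ in $\dag_2$ then follows not from the mutilated graph but from the already-established equality of the original skeletons.

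I expect the main obstacle to be the v-structure bookkeeping, since mutilation simultaneously destroys v-structures whose collider lies in $H$ and can create new ones when an outer node in $H$ loses an incident edge; reconciling these two effects is exactly what the orientation lemma is for. The matching subtlety in the converse is that non-adjacency in $\dag_i^{(H)}$ need not reflect non-adjacency in $\dag_i$, which is why the v-structure recovery must lean on the original skeleton rather than on the mutilated graph alone.
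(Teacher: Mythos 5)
First, a point of reference: the paper does not prove this statement at all --- it is imported verbatim as Theorem~10 of \citet{hauser2012characterization}, so there is no in-paper proof to compare against. Your proposal is therefore judged on its own merits as a reconstruction of that external result.

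The combinatorial core of your plan is correct. The orientation lemma (an edge with exactly one endpoint in $H$ survives mutilation iff it points out of $H$, so conditions (i)--(ii) pin its orientation in both graphs) is right, and both directions of the equivalence between ``(i) and (ii)'' and ``$\dag_1^{(H)}$ and $\dag_2^{(H)}$ are Markov equivalent for every $H \in \cc{H}$'' go through as you describe; your use of conservativeness in the converse (choosing $H$ avoiding the head of an edge, resp.\ the collider of a v-structure) is exactly the right move, as is recovering non-adjacency of the outer nodes from the original skeletons rather than the mutilated ones. One small imprecision: in the forward direction, when a v-structure $a \to b \leftarrow c$ of $\dag_1^{(H)}$ arises because an edge between $a$ and $c$ pointing into $H$ was deleted, the third edge of the resulting triangle (the one with both endpoints outside $H$) is \emph{not} oriented by the orientation lemma or by equality of v-structures of the originals --- there is no v-structure in a triangle. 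It is acyclicity of $\dag_2$, given the two edges already pinned by the orientation lemma, that forces the remaining edge to point into $b$. Your sketch attributes this step to the wrong tools, though the case analysis you propose would surface the fix.

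The genuine gap is your opening reduction. You assert in one line that the two DAGs ``entail the same family of interventional distributions if and only if, for every $H$, the mutilated graphs are Markov equivalent.'' But the equivalence relation in \citet{hauser2012characterization} is not a per-target condition on sets of distributions: their $\cc{I}$-Markov property is defined on \emph{tuples} of interventional distributions that are coupled through a common observational distribution (the conditionals of non-intervened variables must be invariant across targets). Equality of the sets of such tuples does project down to equality of the per-target sets (so that direction is salvageable), but the converse --- that per-target Markov equivalence of the mutilated graphs lets one transfer a whole consistent tuple, shared conditionals and all, from $\dag_1$ to $\dag_2$ --- is precisely the distribution-level content of their theorem and is not addressed by your argument. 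As written, your proof establishes the (correct and non-trivial) graph-theoretic fact that (i)+(ii) characterizes simultaneous Markov equivalence of all mutilated graphs for a conservative family, but it assumes rather than proves that this coincides with the interventional Markov equivalence the proposition is actually about.
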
%
The $\cc{I}$-equivalence class is, in fact, a subset of the $\hmec$. This matches the intuition that a hard intervention on two adjacent nodes renders the causal effect between them unidentifiable.

\begin{restatable}[]{proposition}{lemmahauser}
\label{lemma:hauser}
Consider a DAG $\dag$ and a conservative family of targets $\cc{H}$. Let $\cc{I} := \cup_{H \in \cc{H}} H$. Then $\imec(\dag) \subseteq \hmec(\dag).$
\end{restatable}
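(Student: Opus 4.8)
The plan is to verify directly that $\cc{I}$-equivalence implies $\hmec$-equivalence by checking the characterization in \autoref{thm:hauser}. I would take an arbitrary $\dag' \in \imec(\dag)$, so that by \autoref{def:i_equiv} the graphs $\dag$ and $\dag'$ satisfy (i) the same skeleton, (ii) the same v-structures, and (iii) $\pa_\dag(i) = \pa_{\dag'}(i)$ for every $i \in \cc{I} = \cup_{H \in \cc{H}} H$. The goal is then to show that $\dag$ and $\dag'$ satisfy the two conditions of \autoref{thm:hauser}, which yields $\dag' \in \hmec(\dag)$. The first of those conditions—equal skeleton and equal v-structures—is immediate from conditions (i) and (ii) of $\cc{I}$-equivalence, so no work is needed there. (Conservativity of $\cc{H}$ is assumed only so that \autoref{thm:hauser} validly characterizes $\hmec$; the containment argument itself just invokes that characterization.)

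The substance is the second condition: for each $H \in \cc{H}$, the graphs $\dag^{(H)}$ and $\dag'^{(H)}$ must have the same skeleton. I would fix $H \in \cc{H}$ and observe that $H \subseteq \cc{I}$, so condition (iii) applies to every node in $H$; that is, each $i \in H$ has identical parent sets in $\dag$ and $\dag'$. The key observation is that passing from $\dag$ to $\dag^{(H)}$ deletes exactly the skeleton edges $\{i,j\}$ with $i \in H$ and $j \in \pa_\dag(i)$, and likewise for $\dag'$ with $\pa_{\dag'}(i)$. A short case analysis on a skeleton edge $\{i,j\}$—both endpoints in $H$, exactly one endpoint in $H$, or neither—shows that whether the edge is incoming to a node of $H$, and hence deleted, is determined solely by the parent sets of nodes in $H$, which coincide across $\dag$ and $\dag'$. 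Since the two graphs already share a skeleton and delete the very same set of skeleton edges, their residual skeletons agree.

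I do not expect a genuine obstacle here; the argument is essentially a bookkeeping translation between the two equivalence characterizations. The one point requiring care is confirming that the set of deleted edges is \emph{literally the same} in both graphs, not merely of the same cardinality—this is precisely where condition (iii) together with the inclusion $H \subseteq \cc{I}$ is used. In particular, one must check that the orientation of any skeleton edge incident to a node of $H$ cannot differ between $\dag$ and $\dag'$; this holds because fixing the parent sets of all nodes in $H$ pins down which incident edges point into $H$. Iterating the verification of the second condition over all $H \in \cc{H}$ then completes the proof and establishes $\imec(\dag) \subseteq \hmec(\dag)$.
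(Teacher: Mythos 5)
Your proposal is correct and follows the same route as the paper's proof: verify condition (i) of \autoref{thm:hauser} directly from the skeleton and v-structure conditions of $\cc{I}$-equivalence, then use $H \subseteq \cc{I}$ together with the equality of parent sets on $\cc{I}$ to conclude that $\dag^{(H)}$ and $\dag'^{(H)}$ delete the same edges and hence share a skeleton. The paper states this last step in one sentence; your case analysis simply spells out the bookkeeping it leaves implicit.
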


\subsubsection{\citet{yang2018characterizing}}
The authors give an interventional equivalence class for the case of \emph{general} interventions \citep[Definition 3.3]{yang2018characterizing}. These can be understood as all interventions for which the resulting interventional distribution is still Markov with respect to the data-generating graph. These exclude interventions that add new parents to a variable or affect several variables in a way that induces a new statistical dependence between them, for example, by adding a hidden confounder. Both the hard interventions from \citet{hauser2012characterization} and the noise interventions which we discuss in our paper naturally fall into the category of general interventions. The authors provide a graphical characterization of their equivalence class through the concept of an ``interventional DAG''. Given a graph $\dag$ and a family of targets $\cc{Y}$, they construct it as follows: for every $Y \in \cc{Y} \setminus \{\emptyset\}$ we add a source node to $\dag$, with edges from it to the nodes in $Y$. Then, under the condition that $\emptyset \in \cc{Y}$, two DAGs are $\cc{Y}$-equivalent if and only if their interventional graphs have the same skeleton and v-structures \citep[Theorem 3.9]{yang2018characterizing}. We denote the class of $\cc{Y}$-equivalent models by $\ymec$. Since hard interventions are also a special case of general interventions, \ntext{under causal sufficiency }we expect that our $\cc{I}$-equivalence class is also a subset of the $\cc{Y}$-equivalence class. We state this result in \autoref{lemma:yang} and provide a proof in Appendix \ref{s:proofs_other_classes}.

\begin{restatable}[]{proposition}{lemmayang}
\label{lemma:yang}
Consider a DAG $\dag$ and let $\cc{Y}$ be a family of targets such that $\emptyset \in \cc{Y}$, and let $\cc{I} := \cup_{Y \in \cc{Y}} Y$. Then $\imec(\dag) \subseteq \ymec(\dag).$
\end{restatable}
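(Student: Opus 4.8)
The plan is to take an arbitrary $\dag' \in \imec(\dag)$ and show that the interventional DAGs of $\dag$ and $\dag'$ with respect to $\cc{Y}$ share the same skeleton and v-structures; by \citet[theorem 3.9]{yang2018characterizing}, and since $\emptyset \in \cc{Y}$, this is exactly the statement $\dag' \in \ymec(\dag)$. Write $\dag^{\cc{Y}}$ and $\dag'^{\cc{Y}}$ for the two interventional DAGs, each built by adding, for every $Y \in \cc{Y} \setminus \{\emptyset\}$, a fresh source node $s_Y$ with an edge $s_Y \to k$ to every $k \in Y$. The crucial observation, used throughout, is that the added source nodes and all their incident edges depend only on the family $\cc{Y}$ and not on the underlying DAG, so the ``intervention part'' of $\dag^{\cc{Y}}$ and $\dag'^{\cc{Y}}$ is literally identical; moreover each $s_Y$ has no incoming edges and no two source nodes are adjacent.

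First I would handle the skeleton. The skeleton of $\dag^{\cc{Y}}$ is the union of the skeleton of $\dag$ on the original nodes and the graph-independent set of edges $\{s_Y - k : k \in Y\}$. Since $\dag \sim_{\cc{I}} \dag'$ gives the same skeleton on the original nodes (condition (i) of \autoref{def:i_equiv}) and the added edges coincide, $\dag^{\cc{Y}}$ and $\dag'^{\cc{Y}}$ have identical skeletons.

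The substance of the argument is matching the v-structures, which I would do by classifying a v-structure $a \to c \leftarrow b$ (with $a,b$ non-adjacent) of $\dag^{\cc{Y}}$ according to how many of $a,b,c$ are source nodes. A source node can never be the collider $c$, since it has no parents, so only three cases remain. (a) All of $a,b,c$ are original: augmentation neither creates nor removes adjacencies among original nodes, so such triples are exactly the v-structures of $\dag$, which equal those of $\dag'$ by condition (ii). (b) The collider $c = k$ is original and exactly one parent is a source node $s_Y$ (so $k \in Y$), the other parent $b$ being original: this is a v-structure iff $b \in \pa_\dag(k)$ and $s_Y \not\sim b$, i.e. $b \notin Y$. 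The condition $b \notin Y$ is graph-independent, and since $k \in Y \subseteq \cc{I}$, condition (iii) gives $\pa_\dag(k) = \pa_{\dag'}(k)$, so $b \in \pa_\dag(k) \iff b \in \pa_{\dag'}(k)$; hence these v-structures coincide. (c) Both parents are distinct source nodes $s_Y, s_{Y'}$ with common child $c = k$ (so $k \in Y \cap Y'$), and $s_Y \not\sim s_{Y'}$ holds automatically: whether this triple is a v-structure depends only on $\cc{Y}$, so it appears in $\dag^{\cc{Y}}$ iff it appears in $\dag'^{\cc{Y}}$.

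In every case the presence of the v-structure is governed either by a graph-independent feature of $\cc{Y}$ or by quantities preserved under $\cc{I}$-equivalence (the original skeleton, the original v-structures, and the parent sets of intervention targets), so $\dag^{\cc{Y}}$ and $\dag'^{\cc{Y}}$ have the same v-structures, completing the proof. I expect the main obstacle to be the bookkeeping in case (b): one must verify that the only graph-dependent new v-structures created by the augmentation are those whose collider is an intervention target, and that condition (iii) controls exactly these. This is precisely the step where the definition $\cc{I} := \cup_{Y \in \cc{Y}} Y$ and the parent-preservation clause of \autoref{def:i_equiv} are both needed.
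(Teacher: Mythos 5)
Your proof is correct and follows essentially the same route as the paper: both establish that the interventional DAGs share skeletons because the added edges depend only on $\cc{Y}$, and both classify v-structures by how many endpoints are source nodes (noting a source node cannot be the collider), handling the all-original case via condition (ii), the two-source-node case as graph-independent, and the mixed case via condition (iii) applied to the target $k \in Y \subseteq \cc{I}$. The only difference is presentational — you argue each case by a direct iff characterization while the paper phrases it as a contradiction — so no further comparison is needed.
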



\ntext{
\subsubsection{\citet{jaber2020causal}}
\label{sss:jaber}

The authors provide a characterization of interventional equivalence ($\Psi$-Markov equivalence) for the case of soft interventions and violations of causal sufficiency, i.e., hidden confounders. The interventions they consider include those in our model class, and for the case where causal sufficiency is satisfied, the authors provide a graphical characterization which directly applies to our setting (\autoref{lemma:jaber}).

\begin{lemma}[Corollary 1 from \citealp{jaber2020causal}]
\label{lemma:jaber}
Given causal graphs without latents $G_1, G_2$ and the corresponding interventional targets $\cc{I}_1, \cc{I}_2$, the pairs $(G_1, \cc{I}_1),  (G_2, \cc{I}_1)$ are $\Psi$-Markov equivalent if and only if their augmented graphs the same skeleton and v-structures.
\end{lemma}%
The augmented graph is constructed as follows: given a causal graph $G = (V, E)$ a set of intervention targets $\cc{I} = \langle I_1, \ldots, I_k\rangle$ with $I_k \subset V$, for each pair of targets $I_i, I_j \in \cc{I}$ we add an additional node to the graph, and edges from it to all nodes $v \in \{w \mid (w \in I_i \land w \notin I_j) \lor (w \notin I_i \land w \in I_j) \}$, that is, nodes which appear in only one of the two targets $I_i$ and $I_j$. An example is shown in \autoref{fig:equiv_classes}.

As for the equivalence class in \citet{yang2018characterizing}, $\cc{I}$-equivalence and $\Psi$-Markov equivalence are the same in the case of single-target interventions (c.f. \autoref{lemma:augmented_graphs} and \autoref{lemma:jaber}). However, under interventions that affect several targets in the same environment, the resulting equivalence classes are different; assumptions \ref{assm:int_heter} and \ref{assm:model_truth} allow treating all interventions as single-target interventions, yielding additional identifiability (e.g., \autoref{fig:equiv_classes}) in the parametric setting we consider.
}
\section{Learning the Class of Equivalent Models}
\label{s:greedy}

Next, we consider the problem of recovering the class of distribution equivalent models
from data.
In particular, we are interested in recovering the causal structure, that is, the graphs underlying each of the models in the distributional equivalence class of the data-generating model.

As a first approach, one might consider an $\ell_0$-penalized maximum likelihood estimator of the form
\begin{equation}
\begin{aligned}
        \left(\hat{B}, \{\hat{\Omega}^e\}_{e \in \cc{E}}\right) \in
    \argmax_{\substack{
    B \comp \text{DAG}\\    
    \text{diagonal pos. def. } \{\Omega^e\}_{e \in     
    \cc{E}}
    }}
    \;
    \sum_{e \in \cc{E}}
    \log \p(\boldsymbol{X}^e; B, \Omega^e)    
    - \lambda \norm{\cc{G}(B)}_0,
\end{aligned}
\label{eq:mle_plain}
\end{equation}
where $\boldsymbol{X}^e$ is the sample from environment $e$, $\lambda > 0$ is a regularization parameter and $\norm{\cc{G}(B)}_0$ is the number of edges in the graph entailed by $B$, which we have restricted to be the connectivity matrix of a DAG. \ntext{The expression for the log-likelihood can be found in \autoref{eq:covariance_reduction} in Appendix~\ref{s:proofs_score_properties}.}
This approach suffers from two important drawbacks. 

Computing the estimator in \eqref{eq:mle_plain} is a highly complex task. This stems from the fact that the DAG constraint over $B$ causes the optimization to be highly non-convex. Since the space of DAGs grows super-exponentially with the number of variables, an exhaustive search is infeasible even for a few variables. Even for the observational setting where $\abs{\cc{E}}=1$, a greedy search over this space, that is, starting with an empty graph and greedily adding and removing edges, will often only find local minima and is known to perform poorly \citep{chickering2002optimal, hauser2012characterization}. This constitutes the first drawback.

As a possible way forward, one may note that for $\abs{\cc{E}}=1$, \autoref{eq:mle_plain} corresponds to the score employed by \citet{chickering2002optimal} in the GES algorithm. In this case, the class of distribution equivalent models corresponds to the Markov equivalence class\footnote{Under faithfulness, see \autoref{prop:mec_equals_dec} in Appendix \ref{s:proofs}.}, and the objective in \eqref{eq:mle_plain} has some desirable properties which are exploited by the GES algorithm to perform a greedy search over the space of Markov equivalence classes. By iterating over this alternative search space---instead of DAGs---GES is able to escape local minima and return the true equivalence class in the large sample limit. This is possible due to the key property of \emph{score equivalence}: for any finite sample, all members of an equivalence class attain the same score.

In the same manner as GES, we would like to greedily iterate over the space of $\cc{I}$-equivalence classes until we arrive at the one containing the data-generating model. Unfortunately, for the setting where $\abs{\cc{E}} > 1$, the property of \emph{score equivalence} is not satisfied by \eqref{eq:mle_plain}; $\cc{I}$-equivalent graphs may attain different scores on the same finite sample\ntext{. See Appendix~\ref{s:score_equivalence_violation} for a discussion}. Motivated by this second drawback, we propose the following score.

\subsection{Score Function}
For a given DAG $\dag$ and a set of intervention targets $\cc{I}$, we consider the score
\begin{align}
    \label{eq:score}
    S(\dag, \cc{I}) = 
    \max_{\substack{
    B \comp \dag\\
    \text{diagonal pos. def. } \{\Omega^e\}_{e \in \cc{E}}\\
    \text{s.t. }\bb{I}(\{\Omega^e\}_{e \in \cc{E}}) \subseteq \cc{I}
    }}
    \quad
    \sum_{e \in \cc{E}}
    \log \p(\boldsymbol{X}^e; B, \Omega^e) - \lambda\text{DoF}(\dag, \cc{I}),
\end{align}
where $\boldsymbol{X}^e$ is the sample from environment $e$. In words, the constraints placed over the parameters mean that $B$ respects the adjacency of the DAG $\dag$, and that the variances of the noise terms remain constant across environments, with the exception of those of the variables in $\cc{I}$. The penalization term $\text{DoF}(\dag, \cc{I}) := \norm{\dag}_{0}+p+|\cc{I}|(|\cc{E}|-1)$ corresponds to the number of free parameters in the model, that is, the number of edges and distinct noise-term variances which need to be estimated.
\newline
\newline
To develop an efficient and accurate algorithm for identifying the equivalence class of DAGs that best fit the data, we want the score function to satisfy the following properties:
\begin{enumerate}[label=\roman*)]
\item \emph{Score equivalence}: for any finite sample, all members of an equivalence class attain the same score, that is, $\dag \sim_\cc{I} \dag' \implies S(\dag,\cc{I}) = S(\dag',\cc{I})$.
\item \emph{Decomposability}: the score can be written as a sum of terms depending only on a variable and its parents.
\item \emph{Consistency}: let $\dag^\star$ be the data-generating DAG and let $\cc{I}^\star$ be the true set of intervention targets. Let
$\{(\dag_1, {\cc{I}}_1),..., (\dag_l, {\cc{I}}_l)\} = \argmin_{\dag,\cc{I}}  S(\dag,\cc{I})$
be the pairs of DAGs and intervention targets that maximize the score $S(\dag,\cc{I})$. In the large sample limit, where $n^e \to \infty$ for every $e \in \mathcal{E}$ and $\lambda$ is chosen appropriately, ${\cc{I}}_j = \cc{I}^\star$ for all $j \in [l]$, and $\{\dag_j\}_{j\in [l]} = \simec(\dag^\star)$ with probability tending to one.
\end{enumerate}

Indeed, the score \eqref{eq:score} satisfies these properties. Score equivalence is attained through the constraint $\bb{I}(\{\Omega^e\}_{e \in \cc{E}}) \subseteq \cc{I}$, that only the noise-term variances of the targets $\cc{I}$ can vary across environments. The proof is presented in Appendix \ref{s:proofs_score_properties}.

 \begin{restatable}[Score properties]{proposition}{propscoreproperties}
 \label{prop:score_properties} Without additional assumptions, the score \eqref{eq:score} satisfies score equivalence \emph{(i)} and decomposability \emph{(ii)}. Under faithfulness and Assumptions \ref{assm:int_heter} and \ref{assm:model_truth} for the data-generating model, and the assumptions that ${\lim_{\mbox{all}\ n_e \to \infty}} \frac{n^e}{\sum_{e \in \mathcal{E}}n^e} > 0$ for all $e \in \mathcal{E}$ and that the parameter space is compact, the score is consistent \emph{(iii)}.
 \end{restatable}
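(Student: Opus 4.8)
The plan is to prove the three properties in increasing order of difficulty, since both later properties rest on decomposability. Because the joint density of a linear Gaussian SCM factorizes over nodes, $\log \p(\boldsymbol{X}^e; B, \Omega^e) = \sum_{i} \log \p\big(X^e_i \mid X^e_{\pa_\dag(i)}; B_{i:}, \Omega^e_{ii}\big)$, and because the feasible set in \eqref{eq:score} decouples across rows of $B$ (row $i$ only involves $\pa_\dag(i)$) and across the diagonal entries of $\{\Omega^e\}$ (the constancy constraint applies entrywise), the inner maximization separates node by node. The penalty separates as well, since $\text{DoF}(\dag,\cc{I}) = \sum_i |\pa_\dag(i)| + p + \sum_{i \in \cc{I}}(|\cc{E}|-1)$. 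Hence $S(\dag,\cc{I}) = \sum_i s_i$, where for $i \notin \cc{I}$ the local score $s_i$ is the Gaussian regression of $X_i$ on its parents \emph{pooled} over all environments (shared slope and variance), and for $i \in \cc{I}$ it is the regression with shared slope but environment-specific variances. This proves \emph{(ii)}.

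For score equivalence \emph{(i)}, I would combine decomposability with a covered-edge-reversal argument. First, $\cc{I}$-equivalent DAGs share a skeleton, so $\norm{\dag}_0 = \norm{\dag'}_0$; since $\text{DoF}$ depends on the graph only through the edge count, $\text{DoF}(\dag,\cc{I}) = \text{DoF}(\dag',\cc{I})$ and it suffices to match the likelihood terms. The key combinatorial step is an $\cc{I}$-aware refinement of Chickering's theorem: any two $\cc{I}$-equivalent DAGs are connected by a sequence of covered edge reversals whose endpoints both lie outside $\cc{I}$. This is forced by \autoref{def:i_equiv}, since reversing a covered edge $i \to j$ changes the parent sets of both $i$ and $j$ and is therefore only admissible when $i,j \notin \cc{I}$. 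For such a reversal only the local scores of $i$ and $j$ change, and as both are outside $\cc{I}$ their local scores are ordinary pooled regressions over the combined $N = \sum_e n^e$ samples; the maximized value $s_i + s_j$ equals the maximized joint conditional log-likelihood of $(X_i, X_j)$ given the shared parent set $S$, which is invariant to the factorization order and hence to the reversal. Summing along the chain yields $S(\dag,\cc{I}) = S(\dag',\cc{I})$.

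For consistency \emph{(iii)} I would run a BIC-style argument over the finite collection of candidate pairs $(\dag,\cc{I})$. Using the positive-fraction and compactness assumptions, a uniform law of large numbers gives $\frac{1}{N}\big(S(\dag,\cc{I}) + \lambda\,\text{DoF}(\dag,\cc{I})\big) \to \ell^\star(\dag,\cc{I})$, the population-optimal expected log-likelihood of the model class $(\dag,\cc{I})$ evaluated at the truth. Call a pair \emph{realizing} if the true distribution tuple $\{\p^{e,\star}\}$ lies in its model class; by the information inequality $\ell^\star$ attains its maximum exactly on realizing pairs and is strictly smaller otherwise. Choosing $\lambda$ appropriately, so that $\lambda \to \infty$ while $\lambda/N \to 0$, the $\Theta(N)$ likelihood gap dominates the $o(N)$ penalty and every non-realizing pair is eventually beaten; among realizing pairs, whose maximized log-likelihoods differ only by $O_P(1)$ fluctuations, the diverging penalty selects precisely those of minimal $\text{DoF}$.

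It remains to identify the minimal-$\text{DoF}$ realizing pairs as $\simec(\dag^\star)$ paired with $\cc{I}^\star$. The realizing graphs are exactly $\bb{G}(B,\{\Omega^e\})$ (graphs admitting a shared $B \comp \dag$ reproducing the truth, \autoref{lemma:imaps_idec}), and for each such $\dag$ the smallest admissible target set is $\cc{J}(\dag) := \{i : \var_{\p^{e,\star}}(X_i \mid X_{\pa_\dag(i)}) \text{ varies over } e\}$, so $\text{DoF}$ reduces to $\norm{\dag}_0 + p + |\cc{J}(\dag)|(|\cc{E}|-1)$. On $\imec(\dag^\star)=\simec(\dag^\star)$ the edge count is minimal (\autoref{lemma:min_edges_imec}) and, because $\cc{I}^\star$-nodes retain their parents across the class, $\cc{J}(\dag)=\cc{I}^\star$, so $\text{DoF}$ is constant there. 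The crux is to show every other realizing pair has strictly larger $\text{DoF}$. For super-DAGs this is immediate, since each added parent is a non-descendant of its child and noise terms are independent of non-descendants, leaving $\cc{J}(\dag)=\cc{I}^\star$ while the edge count grows. The hard part, and the main obstacle, is the case of re-oriented graphs, where a parent may be a true descendant and $\cc{J}(\dag)$ can in principle shrink: one must rule out that reorientations trade a lower varying-variance count against extra edges. This is exactly where \autoref{assm:int_heter} and \autoref{assm:model_truth}, together with \autoref{prop:imec_full} controlling the edge term, are indispensable. Once the combined bound $\norm{\dag}_0 + |\cc{J}(\dag)|(|\cc{E}|-1) \geq \norm{\dag^\star}_0 + |\cc{I}^\star|(|\cc{E}|-1)$ with equality only on $\imec(\dag^\star)$ is established, score equivalence \emph{(i)} guarantees the whole class ties, so the maximizing set is precisely $\simec(\dag^\star)$ with $\cc{I}^\star$.
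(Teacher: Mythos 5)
Your treatment of decomposability is correct and matches the paper's in substance (the paper carries it out via the explicit $\ln\det$/trace manipulation rather than the density factorization, but the content is the same). For score equivalence you take a genuinely different route: the paper constructs, from the maximizing parameters for $\dag$, an explicit equivalent parametrization compatible with $\dag'$ that has the \emph{same} intervention targets (\autoref{lemma:graphs_dists}), giving $S(\dag,\cc{I})\le S(\dag',\cc{I})$ and concluding by symmetry; you instead use a Chickering-style chain of covered-edge reversals together with invariance of the two affected local scores. Your route works, but your justification of the "key combinatorial step" is not a proof: saying that a reversal touching $\cc{I}$ "is not admissible" does not show that a sequence avoiding $\cc{I}$ exists. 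The repair is immediate from \autoref{thm:chickering} as restated in the paper: the $\delta$ distinct covered-edge reversals are exactly the $\delta$ edges oriented oppositely in $\dag$ and $\dag'$, and no edge incident to a node of $\cc{I}$ can be among them because, by \autoref{def:i_equiv}, those nodes have identical parent sets in both graphs. With that repair, your local argument (for $i,j\notin\cc{I}$, both factorization orders of the conditional law of $(X_i,X_j)$ given the common parents parametrize the same set of environment-constant conditional Gaussians) is sound, and it buys a more modular proof than the paper's global construction.

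The genuine gap is in consistency. Your BIC-style reduction to "minimal-DoF realizing pairs" mirrors the paper's, but the decisive inequality $\norm{\dag}_0+|\cc{J}(\dag)|(|\cc{E}|-1)\ge\norm{\dag^\star}_0+|\cc{I}^\star|(|\cc{E}|-1)$, with equality exactly on $\simec(\dag^\star)$, is precisely what you leave unproven ("once the combined bound \ldots is established"). You correctly flag the dangerous case — a reoriented realizing graph for which the varying-variance count $\cc{J}(\dag)$ might fall below $|\cc{I}^\star|$ and be traded against extra edges — and correctly name the assumptions that must enter, but you never exclude it. The paper closes this with \autoref{lemma:targets_parents}: under assumptions \ref{assm:int_heter} and \ref{assm:model_truth}, every distribution-equivalent parametrization satisfies $\tilde{B}_{i:}=B^\star_{i:}$ for all $i\in\cc{I}^\star$ (its proof combines \autoref{lemma:structure_diff}, \autoref{lemma:scaling}, and the block-orthogonality structure of $(I-\tilde{B})(I-B^\star)^{-1}(\Omega^e)^{1/2}$), whence $\tilde{\Omega}^e_{ii}=\Omega^{e,\star}_{ii}$ for those $i$ and every feasible target set contains $\cc{I}^\star$. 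No trade-off is then possible: the target term of $\text{DoF}$ is bounded below by $|\cc{I}^\star|(|\cc{E}|-1)$ on all realizing pairs, the edge term is bounded below by $\norm{\dag^\star}_0$ under faithfulness (\autoref{lemma:min_edges_imec}), and both bounds are attained simultaneously precisely on $\simec(\dag^\star)\times\{\cc{I}^\star\}$ by \autoref{prop:imec_full}. Without this lemma, or an equivalent argument, your proof of \emph{(iii)} is incomplete at its central step.
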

\noindent

\subsection{Our Greedy Algorithm GnIES}

In the remainder of this section, we present the \emph{greedy noise-interventional equivalence search} algorithm (GnIES) to estimate the equivalence class of the data-generating graph under unknown interventions. The algorithm is score-based and composed of two nested, greedy procedures. The inner one proceeds similarly to GES and searches for the optimal equivalence class given a fixed set of intervention targets. The outer procedure greedily searches the space of such intervention targets; its output constitutes the GnIES estimate.

\subsubsection{Inner Procedure}

For a fixed set of intervention targets $\cc{I}$, the inner procedure performs a search over the space of $\cc{I}$-equivalence classes, returning the highest-scoring one. The procedure is a modification of the GES algorithm, which we will now describe at a high level with the purpose of illustrating said modification.

As discussed in the previous section, GES performs a search over the space of Markov equivalence classes to find the one that best fits the data, that is, whose representatives maximize the score of choice. This search is carried out in a greedy manner: starting with the equivalence class of the empty graph, the algorithm considers as neighboring classes those whose representatives only have one more edge than those in the current class, and moves to the one which yields the highest score. This ``forward-step'' is repeated until no transition yields an improvement in the score. With the resulting class as a new starting point, the algorithm then performs the opposite ``backward-steps'', iteratively transitioning to classes with fewer edges until the score can no longer be improved. The resulting class is returned as the estimate, which is, quite remarkably, consistent \citep[Section 4]{chickering2002optimal}.

\begin{algorithm}[h]
\caption{Greedy inner procedure of GnIES}
\begin{algorithmic}[1]
\vspace{0.1in}
\STATE {\bf Input}: data $\boldsymbol{X}^e$ across environments $e \in \mathcal{E}$, intervention set $\cc{I}$.
\vspace{0.05in}
\STATE {\bf Set starting point}: set the current CPDAG $C := \dag_0$ to the empty graph and the current score $s := S(\dag_0, \cc{I})$ to its score.
\vspace{0.05in}
\STATE {\bf Forward phase}: Apply the highest scoring GES \emph{insert} operator until the score cannot be improved:
\begin{enumerate}
    \item[(a)] Compute the change in score for all valid GES \emph{insert} operators which can be applied to the current CPDAG $C$.
    \item[(b)] If the score cannot be improved, end the forward-phase. Otherwise, apply the highest scoring one to $C$, resulting in the PDAG $P$ and updating the score $s$.
    \item [(c)] $C := \texttt{completion\_algorithm}(P, \cc{I})$ and repeat steps (a-c).
\end{enumerate}
\STATE {\bf Backward phase}: Apply the highest scoring GES \emph{delete} operator until the score cannot be improved:
\begin{enumerate}
    \item[(a)] Compute the change in score for all valid GES \emph{delete} operators which can be applied to the current CPDAG $C$.
    \item[(b)] If the score cannot be improved, end the backward-phase. Otherwise, apply the highest scoring operator to $C$, resulting in the PDAG $P$ and updating the score $s$.
    \item [(c)] $C := \texttt{completion\_algorithm}(P, \cc{I})$ and repeat steps (a-c).
\end{enumerate}
\STATE{\bf Output:} The estimated CPDAG $C$\ntext{ and its score $s$}.
\end{algorithmic} \label{algo:inner}
\end{algorithm}

Internally, GES represents an equivalence class by means of a complete, partially directed acyclic graph (CPDAG) \citep[Section 2.4]{chickering2002optimal}. Transitions to neighboring classes are implemented as the application of an operator to the CPDAG representing the current equivalence class. Forward steps constitute an application of the so-called \emph{insert} operator, and backward steps of the \emph{delete} operator, respectively adding and removing edges to the CPDAG. Applying an operator results in a PDAG, or partially directed acyclic graph, representing some of the neighboring equivalence class members. This PDAG is then ``completed'' using a completion algorithm to transform it into the CPDAG representing the complete class. 

The inner procedure of GnIES (Algorithm \ref{algo:inner}) is exactly GES with the exception of two components: the score \eqref{eq:score} and the completion algorithm. All other components, including the operators, remain the same as for GES. The modified completion algorithm takes the PDAG resulting from an operator and the given set of intervention targets, and returns the CPDAG representing the $\imec$,
instead of the CPDAG representing the (observational) Markov equivalence class. We describe this completion procedure in Algorithm \ref{algo:completion}, Appendix \ref{s:completion_algorithm}.

\subsubsection{Outer Procedure}

In Algorithm \ref{algo:outer} we describe the outer component of GnIES, which estimates the list of intervention targets from data and returns the estimated equivalence class.

\begin{algorithm}[h]
\caption{GnIES: Equivalence class recovery with unknown targets}
\begin{algorithmic}[1]
\vspace{0.1in}
\STATE {\bf Input}: data $\boldsymbol{X}^e$ for $e \in \mathcal{E}$
\vspace{0.05in}
\STATE {\bf Compute original score}: initialize the current intervention set $\cc{I}_\text{curr} := \emptyset$ and compute its score $s_c:=S(\cc{I}_\text{curr}) := \max_{\dag}S(\dag,\cc{I}_\text{curr})$ using Algorithm \ref{algo:inner}.
\vspace{0.05in}
\STATE {\bf Greedy addition of target interventions}: add intervention targets until score cannot be improved
\begin{enumerate}
    \item[(a)] add the intervention target $i = \arg\max_{j}S(\cc{I}_\text{curr} \cup \{j\})$ that maximizes the score and let $s_\text{next}$ be the resulting score.
    \item[(b)] if $s_\text{next} > s_\text{curr}$, set $s_\text{curr} := s_\text{next}$, $\cc{I}_\text{curr} := \cc{I}_\text{curr} \cup \{i\}$ and repeat steps (a-b); otherwise, stop.
\end{enumerate}
\STATE {\bf Greedy deletion of intervention targets}: delete intervention targets until score cannot be improved
\begin{enumerate}
    \item[(a)] remove the intervention target $i = \arg\max_{j \in \cc{I}_\text{curr}}S(\cc{I}_\text{curr} \setminus \{j\})$ which maximizes the score and let $s_\text{next}$ be the resulting score.
    \item[(b)] if $s_\text{next} > s_\text{curr}$, set $s_\text{curr} := s_\text{next}$, $\cc{I}_\text{curr} := \cc{I}_\text{curr} \setminus \{i\}$ and repeat steps (a-b); otherwise, stop. 
\end{enumerate}
\STATE {\bf Estimated intervention targets} $\hat{\cc{I}} := \cc{I}_\text{curr}$
\STATE{\bf Output:} return the intervention targets $\hat{\cc{I}}$ and the resulting equivalence class $\hatimec(\argmax_{\dag}\mathcal{S}(\dag,\hat{\cc{I}}))$, \ie via Algorithm \ref{algo:inner}.
\end{algorithmic} \label{algo:outer}
\end{algorithm}
Similar to the inner procedure, the outer component of GnIES proceeds greedily: we start from the empty set and greedily add targets until the regularized maximum likelihood score \eqref{eq:score} cannot be improved. Then, as in the inner procedure, we move backward by greedily removing an intervention target until the score can no longer be improved. The score is obtained by supplying the data and the set of interventions to the inner procedure (described in Algorithm \ref{algo:inner}).
Background knowledge, in terms of partially known intervention targets, can be incorporated by keeping them in the estimate $\cc{I}_\text{curr}$ at every step of Algorithm \ref{algo:outer}. This functionality is made available through the parameter \texttt{known\_targets} in the documentation of the \href{https://github.com/juangamella/gnies}{\texttt{gnies}} package.

\subsubsection{\ntext{Consistency of GnIES}}

\ntext{
Under the assumption that the inner procedure is consistent, the outer procedure is also consistent. That is, as $n^e \to \infty$ for every $e \in \mathcal{E}$ and $\lambda$ chosen appropriately, the outer procedure returns $\cc{I}^\star$ as the intervention targets and ${\cc{I}}^\star\text{-MEC}(\dag^\star)$ as the equivalence class of graphs with probability tending to one. Formally, for each $\mathcal{I} \subseteq \{1,2,\dots,p\}$ and DAG $\dag$ among $p$ nodes, let
\begin{align*}
    S^\star(\dag, \cc{I}) = 
    \max_{\substack{
    B \comp \dag\\
    \text{diagonal pos. def. } \{\Omega^e\}_{e \in \cc{E}}\\
    \text{s.t. }\bb{I}(\{\Omega^e\}_{e \in \cc{E}}) \subseteq \cc{I}
    }}
    \quad
    \sum_{e \in \cc{E}}
    \mathbb{E}[\log \p(\boldsymbol{X}^e; B, \Omega^e)]
\end{align*}
to be the population analog of $S(\dag,\cc{I})$ in \eqref{eq:score}.
\begin{restatable}[]{theorem}{thmouter}
\label{thm:outer_procedure}
Suppose assumptions of Proposition~\ref{prop:score_properties} are satisfied. Furthermore, suppose that the inner procedure in Algorithm~\ref{algo:inner} is consistent, i.e., for every $\mathcal{I} \subseteq \{1,2,\dots,p\}$, it returns ${\cc{I}}\text{-MEC}(\argmax_{\dag}S^\star(\dag,\cc{I}))$ as $n^e \to \infty$ for every $e \in \mathcal{I}$. Denote by $\hat{\cc{I}}_f$ the intervention set resulting from the forward-phase of algorithm~\ref{algo:outer}, and suppose  $\cc{I}^\star \subseteq \hat{\cc{I}}_f$. Then, the output of Algorithm~\ref{algo:outer} is consistent.
\end{restatable}
The proof is presented in \autoref{s:proof_outer}. Theorem~\ref{thm:outer_procedure} guarantees that the GnIES estimate is consistent, assuming that the inner algorithm \ref{algo:inner} is accurate and the greedy addition step (3) of algorithm \ref{algo:outer} yields a superset of the true intervention targets. In any case, one could forego this first step altogether by starting from the full set of intervention targets $\cc{I} = \{1,...,p\}$, albeit paying a price in terms of computation time when the true targets are few.

We do not provide a proof for the consistency of the inner procedure, which is GES with a modified score and completion algorithm. For plain GES, the proof consists in first showing that the score is locally consistent and decomposable, and then showing that the insert and delete operators reach all neighboring classes \citep[Theorems 15 and 17]{chickering1995}. While we show in \autoref{prop:score_properties} that our score satisfies the required properties, we did not extend the results on the operators to our completion algorithm and leave this as future work.


}

\subsubsection{Computational Cost}

The computational cost of a single run of the inner procedure is that of GES, saving the nominal cost incurred by the modified completion algorithm and score. The worst-case complexity of GES is polynomial in the number of variables and exponential on some measures of the underlying graph, such as the maximum number of parents \citep{chickering2015selective}. However, both the experiments carried out in this and the original GES paper \citep{chickering2002optimal} suggest that, in practice, GES is much more efficient than suggested by this worst-case complexity.

The greedy outer procedure of the algorithm, as described in Algorithm \ref{algo:outer}, incurs a worst-case complexity of $\cc{O}(p^2 \times \text{cost of GES})$. As a faster alternative, we propose a ``ranking-based'' outer procedure whose complexity is instead $\cc{O}(p \times \text{cost of GES})$. Under this approach, we first run the inner procedure with interventions on all variables and obtain an ordering of variables based on the variance of their noise-term variance estimates. We then add (remove) targets in increasing (decreasing) order until the score does not improve. At the price of a slightly reduced performance in terms of edge recovery, the ranking procedure yields a substantial speed-up (see \autoref{fig:performance} in Appendix \ref{s:figs}).

The computation of the GnIES score for interventional data is slightly more involved than for the observational equivalent used in GES, as it requires an alternating optimization routine to find the maximum likelihood estimate in \eqref{eq:score}. However, as in GES, we exploit the decomposability of the score for substantial computational savings. Since the contribution of a node to the score remains constant if its parents do not change, we can cache these ``local scores'' and avoid many unnecessary computations across the steps of the inner procedure of GnIES, resulting in a dramatic speed-up. Furthermore, the cache can be preserved across steps of the outer procedure, as only the local scores of nodes that are added to or removed from the intervention targets need to be recomputed. This results in further savings.

\begin{figure}[H]
\centering
\includegraphics[width=0.6\textwidth]{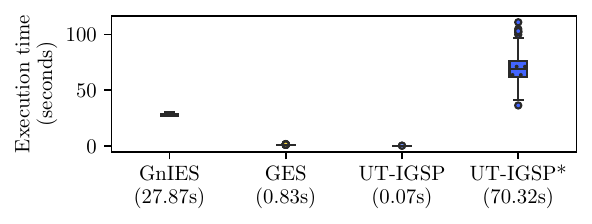}
\caption{\ntext{\comm{[new figure] }Execution times of GnIES and additional baselines on the light-tunnel dataset from \autoref{ss:light_tunnel} ($\mathcal{I}^\star = \{\tilde{\theta}_1, \tilde{\theta}_2\}$). We report the execution times over 100 bootstrap samples of the data, and show the average time in parenthesis under each method. The regularization parameters of each method correspond to the ones shown in \autoref{fig:lt}b.}}
\label{fig:lt_times}
\end{figure}\vspace{-1.5em}%
\ntext{To put the computational cost of GnIES in context with other methods, we compare it in \autoref{fig:lt_times} to the baselines used in the real-data setting of \autoref{ss:light_tunnel}: GES \citep{chickering2002optimal}, UT-IGSP \citep{squires2020permutation} with partial correlation tests, and UT-IGSP with with KCI kernel tests \citep{zhang2011kernel}, denoted by UT-IGSP*. Due to the additional speedups described above, GnIES is about 30 times slower than GES, which is below the time expected by the worst-case complexity given for $p=10$ in this setting. While the method is faster than UT-IGSP with KCI kernel tests, it is much slower than the efficient implementation of UT-IGSP for partial correlation tests \citep{squires2020efficient}.}


\section{Experimental Results}
\label{s:experiments}

Throughout this section, we evaluate the performance of GnIES \rmj{in recovering the equivalence class of the data-generating graph} across a variety of settings\footnote{The code to reproduce the experiments can be found at \href{https://github.com/juangamella/gnies-paper}{\texttt{github.com/juangamella/gnies-paper}}.}.\ntext{ These include synthetic data (\autoref{ss:sim}) and two real-data settings consisting of biological data (\autoref{ss:real}) and data from a controlled physical system (\autoref{ss:light_tunnel}).}

\rmj{As such, t}\ntext{T}he estimates produced by GnIES and \ntext{some of }the baseline methods will be a set of graphs, whose closeness with the \ntext{ground-truth graph or}\rmj{true} equivalence class needs to be measured. We use the metrics introduced in \citet{taeb2021perturbations} for this purpose, which we restate here:

\begin{definition}[Metrics]
\label{def:metrics}
Let $\bb{G}^\star$ be the true $\imec$ and $\hat{\bb{G}}$ its estimate. We define the \emph{true (false) discovery proportion} respectively as
\begin{align*}
    \text{\emph{TDP} : } &\max_{\dag^\star \in \bb{G}^\star} \min_{\hat{\dag} \in \hat{\bb{G}}} \absb{\text{edges in }\dag^\star\text{ and }\hat{\dag}} / \absb{\text{edges in }\dag^\star},\\
    \text{\emph{FDP} : } &\max_{\hat{\dag} \in \hat{\bb{G}}} \min_{\dag^\star \in \bb{G}^\star} \absb{\text{edges in }\hat{\dag}\text{ not in }\dag^\star} / \absb{\text{edges in }\hat{\dag}}.
\end{align*}
\end{definition}

In the case where both the estimate and the truth are singletons, the TDP and FDP correspond to the true and false positive rates in terms of edges. The estimate and truth are equal if and only if FDP = 0 and TDP = 1.

\subsection{Synthetic Data}
\label{ss:sim}

We evaluate the performance of GnIES on simulated data and compare it to other algorithms that can be applied to the same problem setting. \rmj{We consider settings where the data-generating process satisfies all model assumptions (\autoref{fig:model_match}), and where there is a model mismatch in the nature of the interventions (\autoref{fig:model_mismatch}).}\ntext{ Additional results for variations of the generation settings and a misspecified intervention model are given in Appendix \ref{s:additional_experiments}}.

\textbf{Data generation.} For \ntext{the results shown in \autoref{fig:model_match},}\rmj{Figures \ref{fig:model_match} and \ref{fig:model_mismatch}} the data is sampled from 100 randomly generated linear Gaussian SCMs and intervention targets\ntext{, following the model in \autoref{eq:model}}. The underlying DAGs are Erd{\"o}s-Renyi graphs with $p=10$ nodes and an average degree of $2.7$. \ntext{We provide results for denser graphs in Appendix \ref{s:additional_experiments}.} The edge weights are sampled uniformly at random from $[0.5,1]$ to bound them away from zero; the variances of the noise terms are sampled from $[1,2]$ and their mean is set to zero. We generate $10$ different data sets from each SCM consisting of samples of the same observational and interventional distributions, with a sample size of $n=10,100$, or $1000$ observations. Each interventional distribution stems from an environment where a single variable is intervened on at random, with targets being different across environments.\ntext{ In Appendix \ref{s:additional_experiments}, we show the results for multiple-target interventions.} For \autoref{fig:model_match}, the interventions satisfy the modeling assumptions, that is, they change the noise term distribution of the target by setting its variance to a value sampled uniformly at random from \rmj{$[5,10]$}\ntext{$[3,4]$}. \ntext{In Appendix \ref{s:additional_experiments}, we provide results for a misspecified intervention model where the interventions are instead \emph{do-} or hard interventions \citep[Section 3.2.2]{pearl2009causal}}.

While having access to an observational sample is not a requirement for GnIES, it is needed for a fair comparison with the other baselines. In particular, an observational environment is required by UT-IGSP; furthermore, it implies that the family of targets is \emph{conservative}, an assumption required by GIES \citep[Definition 6]{hauser2012characterization}.

\textbf{Varsortability.} As shown by \citet{reisach2021varsort}, synthetic data as generated above incurs a high degree of \emph{varsortability}, that is, that marginal variance tends to increase along the causal ordering of the data-generating graph. This fact can be exploited for causal discovery, and the authors find that some recent continuous-optimization methods \citep{zheng2018dags, ng2020role} inadvertently rely on this property of synthetic data. However, this reliance means that their synthetic performance may not carry over to real-world data, where varsortability may be moderate and depends on the measurement scale; indeed, their performance broke down after standardization of the data, and when applied to the real data set from \citet{sachs2005}. To check to what extent the performance of GnIES depends on varsortability, we perform our experiments on both the standardized (Figures \ref{fig:model_match}, \ref{fig:model_mismatch}) and raw (\autoref{fig:varsort}, Appendix \ref{s:figs}) synthetic data, as advised by \citet{reisach2021varsort}. As a sanity check, we additionally run the \emph{sortnregress} algorithm proposed by \citet{reisach2021varsort} on the \rmj{pooled data---t}\ntext{data pooled across environments. T}he method shows remarkable performance in recovering the data-generating graph when there is a high varsortability in the data, and breaks down when there is not. We find that neither GnIES nor the other algorithms employed in the experiments show a significant change in performance after standardization of the data (see \autoref{fig:varsort} in Appendix \ref{s:figs}), an indication that they do not exploit varsortability in their inference.

\begin{figure}[h]
\centering
\includegraphics[width=0.85\textwidth]{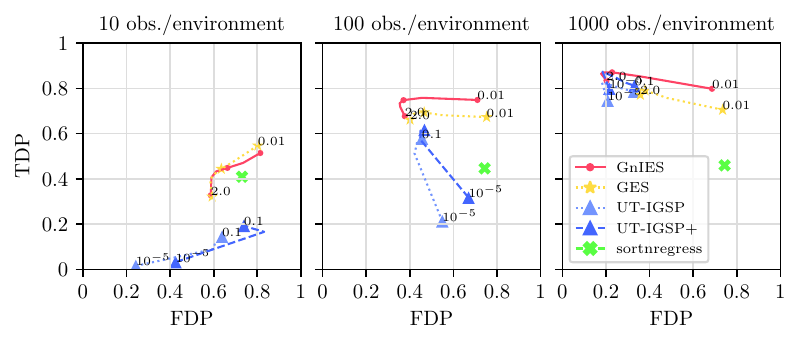}
\caption{\ntext{\comm{[updated figure] }}(correctly specified model) Recovery of the true equivalence class for different sample sizes, as measured by the metrics described in \autoref{def:metrics}; the point \rmj{$(1,0)$}\ntext{$(0,1)$} corresponds to perfect recovery. The results are averaged over 1000 data sets originating from 100 linear Gaussian SCMs\ntext{ following the model in \autoref{eq:model}}. Because UT-IGSP returns a single DAG as estimate, we use its estimated targets and the completion algorithm of GnIES to produce an estimate of the equivalence class (displayed as UT-IGSP+).\rmj{ GES and sortnregress are run on the data pooled across environments.} For each method, the regularization path is shown, together with the smallest and largest values of the regularizers\rmj{. For GnIES and GES, the regularizer is set to $\lambda := \lambda'\log(N)$ where $N$ is the total number of observations and $\lambda' \in [0.01, 2]$}; the additional point shown on the regularization path \ntext{for GnIES and GES } corresponds to the BIC score.\rmj{ For UT-IGSP, the regularizers are the levels, between $10^{-5}$ and $0.1$, of the conditional independence and invariance tests. The poor performance of sortnregress reflects the low varsortability in the data after it is standardized.}}
\label{fig:model_match}
\end{figure}

Due to its flexibility in the assumptions placed on the data-generating process, we employ UT-IGSP \citep{squires2020permutation} as a baseline in all our experiments. An extension of the greedy sparse permutation algorithm \citep{solus2017consistency}, UT-IGSP works by measuring the conditional independencies and invariances in the data through statistical hypothesis tests, and finds the sparsest DAG and intervention targets that satisfy these constraints. The choice of tests can vary according to the setting; for the synthetic experiments of this section, we use standard partial correlation independence and invariance tests provided by the authors in their implementation of the algorithm. A drawback of the method is that it returns a single graph as estimate, when in practice there may be competing equivalent graphs, as happens in the setting that we study here. To ensure a fair comparison, we take the graph and targets estimated by UT-IGSP and compute the corresponding $\cc{I}$-equivalence class with the GnIES completion algorithm (see Appendix \ref{s:completion_algorithm}).

As a baseline for the setting where the model assumptions are satisfied (\autoref{fig:model_match}), we additionally run GES \citep{chickering2002optimal} on the \rmj{pooled data}\ntext{data pooled across environments}. The motivation is that this could be an initial approach employed by a practitioner to recover the class of equivalent models. Its performance suggests that this would not be entirely misguided, particularly at the smallest sample size.
At this size, a possible explanation for the similar performance of GnIES and GES is that the effect of the interventions is less pronounced; due to its penalization, GnIES includes fewer intervention targets in its estimate and thus searches over classes that are often observational, essentially solving the same problem as GES. The poor performance of UT-IGSP in this sample size is not due to the diminished effect of interventions, but rather its reliance on conditional independence and invariance tests, which perform better with larger samples. Indeed, for $n=10$, in about $9\%$ of the runs the correlation matrix computed internally by UT-IGSP becomes singular, and the method fails to produce an estimate; we exclude these cases when computing the average results in the figures. As expected of the setting where all its assumptions are satisfied, GnIES performs competitively in recovering the true equivalence class and does so better than GES due to the additional identifiability.

\ntext{For the results shown in \autoref{fig:model_match} and the additional figures in Appendix \ref{s:additional_experiments}, we evaluate UT-IGSP at different levels of the conditional independence and invariance tests, taken from the range $[10^{-5}, 0.1$]. We evaluate GnIES and GES with different values of the regularization parameter $\lambda := \lambda'\log(N)$, where $N$ is the total number of observations and $\lambda'$ is taken from the range $[0.01, 2]$.}

\subsection{Biological and Semi-synthetic Data}
\label{ss:real}

We now consider the protein expression data set\footnote{The data set was downloaded from\\\href{https://www.science.org/doi/suppl/10.1126/science.1105809/suppl_file/sachs.som.datasets.zip}{\texttt{science.org/doi/suppl/10.1126/science.1105809/suppl\_file/sachs.som.datasets.zip}}.} from \citet{sachs2005}. It consists of measurements of the abundance of 11 phosphoproteins and phospholipids in human immune cells, recorded via a multiparameter flow cytometer under different experimental conditions. These conditions consist of adding reagents to the cell medium \citep[table 1]{sachs2005}, which inhibit or activate different nodes of the protein signaling network. As has been done in previous works \citep{squires2020permutation, meinshausen2016methods, mooij2013cyclic}, we take the observations under different conditions as originating from different interventional distributions, resulting in a data set with a total of 7466 observations across nine interventional samples, each ranging from 707 to 913 observations. Furthermore, because the observations correspond to independent measurements of individual cells, we assume them to be independent and identically distributed. Since GnIES and the other baselines aim to estimate the causal structure underlying the observations, we require a ground truth to evaluate their performance. To this end, we consider the ``consensus network'' presented in \cite[Figure 2]{sachs2005}.

The data set is an interesting challenge because of an array of violations of the GnIES modeling assumptions. The functional relationships between variables appear to be non-linear and the distributional assumption of normality is strongly violated (see Figures \ref{fig:sachs_normality} and \ref{fig:sachs_nonlinear} in Appendix \ref{s:drfs}). Furthermore, the conventionally accepted structure of the protein signaling network contains cycles \citep[see nodes PIP2, PIP3 and PLC$\gamma$ in][Figure 2]{sachs2005} and, as for any measurements of real-world systems, the presence of unmeasured confounders cannot be discarded.

\begin{figure}[h]
\centering
\includegraphics[width=0.75\textwidth]{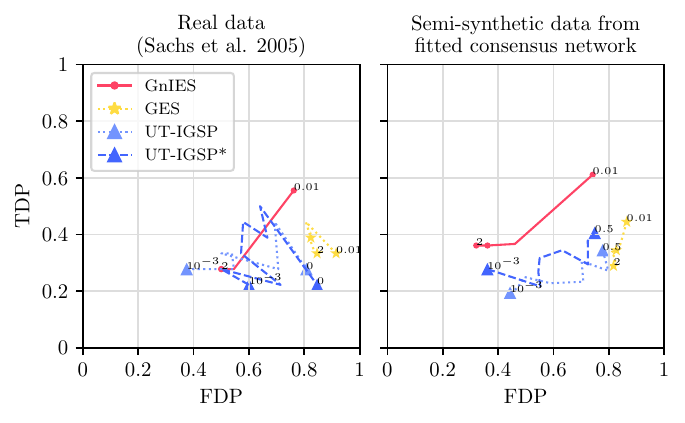}
\caption{(Sachs data set) Recovery of the ``consensus network'' from the original Sachs data (left) and data sampled from a non-parametric SCM fitted to the original data with the consensus network structure (right). For the semi-synthetic data, the results are the average over 10 different samples from the same fitted interventional distributions. UT-IGSP corresponds to the method run with partial correlation conditional independence and invariance tests, and UT-IGSP* to the method with KCI kernel tests \citep{zhang2011kernel}. Because GnIES and UT-IGSP estimate interventions on all variables, their class estimates are a singleton; thus, the TDP and FDP metrics for these methods correspond to the true and false positive rates in terms of directed edge recovery. GES is run on the pooled data across environments and returns the Markov equivalence class estimated from the data. For each method, the regularization path is shown. For GnIES and GES,\rmj{ the regularizer is set to $\lambda := \lambda'\log(N)$ where $N$ is the total number of observations and $\lambda' \in [0.01, 2]$;} the additional point shown on the regularization path corresponds to the BIC score. \rmj{For UT-IGSP, the regularizers are the levels of the conditional independence and invariance tests.}}
\label{fig:sachs}
\end{figure}
\textbf{Evaluation on semi-synthetic data.} Another issue is that the consensus network, which is constructed from effects that appear at least once in the literature \citep[table S1]{sachs2005}, may not be a complete or accurate description of the causal relationships underlying the observations of the protein signaling network. As a safeguard, and to better understand the effect that the different model violations have on their performance, we evaluate the methods on both the original data set and semi-synthetic data sampled from a non-parametric Bayesian network, fitted to each interventional sample according to the structure of the consensus network\footnote{\label{note:acyclic}After removing the edge PIP2 $\to$ PIP3 to satisfy the acyclicity constraint.}. The sample sizes match those in the real data set. The motivation is the following: the marginal and conditional distributions of the semi-synthetic data closely resemble those of the real data and preserve the linearity and Gaussianity violations (Figures \ref{fig:sachs_normality} and \ref{fig:sachs_nonlinear}, Appendix \ref{s:drfs}). However, it respects acyclicity, causal sufficiency (the absence of hidden confounders), and the conditional independence relations implied by the given ``ground-truth'' graph. Thus, we can evaluate GnIES and the other baselines on two data sets of increasing difficulty, and still receive a certain degree of validation in the case of a faulty ground truth. The non-parametric SCM is fitted and sampled from using distributional random forests \citep{cevid2020distributional}; the procedure is described in detail in Appendix \ref{s:drfs} and can be accessed through the Python package \href{https://github.com/juangamella/sempler}{\texttt{sempler}}.

As for the synthetic experiments, we run GnIES and the other methods over a range of their regularization parameters. Again, we run GES on the observations pooled across all samples. For UT-IGSP, we run the method with the same Gaussian tests as in the synthetic experiments and kernel-based tests \citep{zhang2011kernel} provided in the implementation of the algorithm; we respectively label each approach as UT-IGSP and UT-IGSP*. Due to the nature of the experiments, it is not straightforward to select one as the observational environment for UT-IGSP. For a fair comparison, we compare the performance over every option and display the best result in \autoref{fig:sachs}; the others can be found in \autoref{fig:sachs_obs_real} of Appendix \ref{s:figs}.

Taking the consensus network as ground truth, there is no uniformly best performer for the real data. The performance of UT-IGSP seems to depend heavily on the level of its tests. GES performs uniformly worse than the other methods. For the semi-synthetic data and the selected regularization parameters, GnIES performs uniformly better than the other methods. The performance of UT-IGSP with Gaussian tests remains relatively unchanged and decreases with kernel tests. Several competing explanations exist for this change in performance between real and semi-synthetic data. In principle, the potential differences in the data stem from (1) the acyclicity of the data-generating model, (2) the absence of hidden confounders, or (3) a different ground truth. The interplay between these differences and the inner workings of the algorithms is poorly understood at the moment of writing. We acknowledge that without precise knowledge of the causal structure and data collection process which generated the Sachs data set, our statements about this issue are at best speculative.

\subsection{\ntext{Real Data from a Controlled Physical System with Causal Ground Truth}}
\label{ss:light_tunnel}
\ntext{As a real-data setting with a well-justified ground truth, we consider one of the light-tunnel datasets from the causal chambers \citep{gamella2024chamber}. The light tunnel (\autoref{fig:lt}a) is an elongated chamber with a controllable light source at one end, two linear polarizers mounted on rotating frames, and sensors to 
measure different physical variables such as the light intensity at different locations of the tunnel. A detailed diagram of the tunnel can be found in \citet[Figure 2]{gamella2024chamber}. As a ground truth for causal inference tasks, the authors provide a directed graph summarizing the causal effects in the system \citep[Figure 3a]{gamella2024chamber}, built from background knowledge and further validated using randomized experiments.

The dataset\footnote{The data and its documentation can be found under the \href{https://github.com/juangamella/causal-chamber/tree/main/datasets/lt_interventions_standard_v1}{\nolinkurl{lt_inteventions_standard_v1}} dataset at \href{https://github.com/juangamella/causal-chamber}{\nolinkurl{github.com/juangamella/causal-chamber}}.} contains data from experiments in which the light-source colors ($R,G,B$) and polarizer positions ($\theta_1, \theta_2$) are sampled independently from different distributions, and measurements are collected under different configurations of the sensor parameters (e.g., oversampling rate, measurement duration, etc.). For our validation task, we also consider the measurements of the polarizer angles ($\tilde{\theta}_1, \tilde{\theta}_2$), the current drawn by the light source ($\tilde{C}$), and the infrared-light intensity before ($\tilde{I}_1$), between ($\tilde{I}_2$) and after the polarizers ($\tilde{I}_3$). As observational data, we take observations from the dataset's reference experiment. As interventional data, we consider two types of interventions: shifts in the distributions of the polarizer positions $\theta_1$ and $\theta_2$, and changes in the reference voltages of the angle sensors $\tilde{\theta}_1$ and $\tilde{\theta}_2$. When compared to the reference environment, the former results in the relative angle between the polarizers being close to $90^\circ$, dimming the light reaching the third sensor and effectively scaling the effect that the light-source colors ($R,G,B$) have on its reading $\tilde{I}_3$ (\autoref{fig:lt}d). Reducing the reference voltage of the angle sensors increases their sensitivity, affecting the variance of the resulting measurements $\tilde{\theta}_1$ and $\tilde{\theta}_2$ (\autoref{fig:lt}c). For more details about these effects, we refer the reader to \citet[Appendix IV.2.1 and Figure 14, respectively]{gamella2024chamber}.

\begin{figure}[h]
\centering
\includegraphics[width=152.5mm]{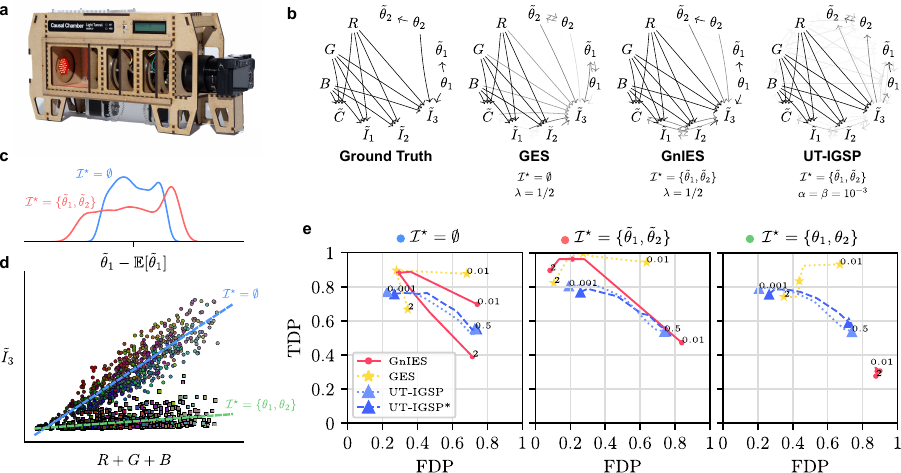}
\caption{\ntext{\comm{[new figure]}(Causal chamber dataset) (a) The light tunnel. (b) The ground-truth graph and a visualization of the estimates produced by each method for 100 bootstrap samples of the data. The color of each edge denotes the frequency with which it appears in the method's estimates, with darker edges appearing at a higher frequency. For GES and GnIES, which return a set of graphs, we pool all graphs to compute the edge frequencies. (c) Effect of the interventions on $\tilde{\theta}_1, \tilde{\theta}_2$: plot of the (centered) marginal density of $\tilde{\theta}_1$ in the reference experiment (blue) and the experiment where we lower the reference voltage of the corresponding sensor (red). (d) Effect of the interventions on the polarizer positions $\theta_1, \theta_2$: the change in the relative angle between the polarizers reduces the amount of light reaching the light sensor $\tilde{I}_3$, scaling the effect of the light-source colors $R,G,B$. (e) Recovery of the ground-truth graph as measured by the metrics in \autoref{def:metrics}. For each method, we show the regularization path averaged over 100 bootstrap samples of the data.}}
\label{fig:lt}
\end{figure}

We run GES, GnIES, and UT-IGSP on the light tunnel data and evaluate their recovery of the ground-truth graph (\autoref{fig:lt}b) for different values of their hyperparameters (\autoref{fig:lt}e). The effects of the light-source colors ($R,G,B$) on the drawn current ($\tilde{C}$) and light intensities ($\tilde{I}_1, \tilde{I}_2, \tilde{I}_3$) are approximately linear and are reliably recovered by all three methods from the observational data alone ($\cc{I}^\star = \emptyset$ in \autoref{fig:lt}). However, both GES and GnIES fail to orient the edges between the polarizer positions ($\theta_1, \theta_2$) and the corresponding angle measurements ($\tilde{\theta}_1, \tilde{\theta}_2$). As predicted by the identifiability results in \autoref{s:equivalence}, including data from the interventions on the reference voltages allows GnIES to correctly orient these edges and reliably recover the ground-truth graph ($\cc{I}^\star = \{\tilde{\theta}_1, \tilde{\theta}_2\}$ in \autoref{fig:lt}b). On the other hand, including data from the interventions on the polarizer positions ($\cc{I}^\star = \{\theta_1, \theta_2\}$) has a catastrophic effect on the performance of GnIES. While the interventions on $\theta_1$ and $\theta_2$ should also allow the method to orient the edges around these variables, the effect of these interventions---which scale the effect (coefficients) of $R,G,B$ on  $\tilde{I}_3$ (\autoref{fig:lt}d)---cannot be expressed by the GnIES model \eqref{eq:model}.
As a result, the method reverses many of the edges in its initial estimate to find an optimal scoring structure, yielding poor results.
UT-IGSP is not affected by the non-linear effect of these interventions, even when partial correlation tests are used (UT-IGSP). However, its performance is virtually unchanged when interventional data is added. The results are similar when the method is run with KCI kernel tests \citep{zhang2011kernel}, shown in \autoref{fig:lt} as UT-IGSP*.}

\section{Discussion}
\label{s:discussion}

This paper considers a parametric approach to the problem of causal structure discovery from interventional data when the targets of interventions are unknown. For linear SCMs with additive Gaussian noise, we provide identifiability results under noise interventions and use these to develop a greedy algorithm to recover the equivalence class of the data-generating model. \ntext{We provide an easy-to-use Python implementation in the package \href{https://github.com/juangamella/gnies}{\texttt{gnies}}. A summary of our software contributions can be found in Appendix \ref{s:software}.}

When evaluated on synthetic data, the algorithm performs competitively with other algorithms built for the same setting, particularly when the available samples are small.
\rmj{For data arising from hard interventions, the performance matches that of the GIES algorithm \citep{hauser2012characterization} albeit lacking knowledge of the intervention targets. Thus, we believe that in situations where a practitioner may consider applying GIES, they may well apply our method and forego the need to precisely know the targets of the experiments that generated the data. To this end, we provide an easy-to-use Python implementation in the package \href{https://github.com/juangamella/gnies}{\texttt{gnies}}. A summary of our software contributions can be found in Appendix \ref{s:software}.} \ntext{We tested our algorithm real data from one of the controlled physical systems described in \citet{gamella2024chamber}; the results show that identifiability improves with interventional data, following the theoretical results in \autoref{s:equivalence}.} We \ntext{also }evaluated the algorithm on the biological data set from \citet{sachs2005} and found its performance in recovering the ``consensus network'' relatively poor, but in line with other methods despite the stronger distributional assumptions. To tease apart the issues underlying this result, we developed a semi-synthetic data generation procedure, which we believe to be of general interest to the causal discovery community. The procedure allows generating semi-synthetic data sets with known causal ground truth but distributions resembling those in real data sets, providing a more realistic challenge than purely synthetic data. We detail the procedure in Appendix \ref{s:drfs} and make it publicly available through the Python package \href{https://github.com/juangamella/sempler}{\texttt{sempler}}.


\subsection{Future Work}
\label{ss:future_work}

While we have employed $\cc{I}$-equivalence to characterize sets of distribution equivalent Gaussian models under noise interventions, the notion is an instance of \emph{transition sequence equivalence}, which was introduced by \citet[Theorem 2]{tian2001causal}. As such, it applies beyond the Gaussian models and noise interventions we consider here. This suggests that an extension of our algorithm to more general models is justified and may be possible by appropriately adapting its score. Within the realm of Gaussian models, one could instead consider other intervention types still covered by the framework of \citet{tian2001causal}. \ntext{In particular, these include interventions that modify the parental coefficients of their target, as we pointed out in \autoref{ss:intervention_model}. Given that the graphical characterization would remain the same, GnIES could be extended to this setting by allowing the coefficients of a target to vary between environments. However, we leave this task and deriving the necessary assumptions for the equivalent of \autoref{prop:imec_full} as future work.}

The similarity between GES and the inner procedure of GnIES suggests another line of work, based on successful extensions of GES. Under certain assumptions, constraining the maximum degree of nodes in the GES search space yielded significant improvements in sample efficiency \citep{chickering2020statistically}; a hybrid approach, using background knowledge or conditional independence tests to restrict the available operators in each step adaptively, has led to a successful scaling of GES to high-dimensional settings with thousands of variables \citep{nandy2018high}. Adapting these constraints to the $\cc{I}$-equivalence class may produce similar improvements for GnIES.

\acks{We would like to thank Jonas Peters, Jinzhou Li, and Malte Londschien for their valuable discussions and comments on the manuscript. J.L. Gamella, A. Taeb, and P. B\"uhlmann have received funding from the European Research Council (ERC) under the European Union’s Horizon 2020 research and innovation program (grant agreement No. 786461).}


\newpage
\appendix
\section{Proofs}
\label{s:proofs}

This section contains the proofs for all results presented in the main text of the paper. An emphasis has been placed on readability and completeness, arguably at the expense of brevity. To aid in the reading of the proofs, we summarize here the notation used and provide a graph of how the results relate to each other.

\textbf{Proof map.} How the presented results relate to each other. An edge $i \to j$ means that result $i$ is used to construct the proof of result $j$.

\begin{figure}[H]
    \centering
    \tikzstyle{block} = [text centered, text width=4em, node distance=4em, font=\small] 
    \def\r{2}    
    \begin{tikzpicture}[]

        \node [block, draw=gray] ("prop_imec_full") at (0:0) {\autoref{prop:imec_full}};        
        \node [block, draw=gray, text width=5.5em] ("assm_int_heter") at (215:\r) {\autoref{assm:int_heter}};
        \node [block, draw=gray, text width=5.5em] ("assm_model_truth") at (270:\r) {\autoref{assm:model_truth}};
        \node [block, draw=gray] ("corr_envs_superset") at (350:1.5*\r) {\autoref{corr:envs_superset}};
        \node [block, above of="prop_imec_full", text width=5.3em, draw=gray] ("prop_imec_faithfulness") {\autoref{prop:imec_faithfulness}};
        \node [block, above of="prop_imec_faithfulness", draw=gray, text width=3.5em] ("prop_imec") {\autoref{prop:imec}};

        \node [block] ("lemma_imaps_dec") at (-3,6) {\autoref{lemma:imaps_dec}};
        \node [block, below right of="lemma_imaps_dec"] ("lemma_mec_dec") {\autoref{lemma:mec_dec}};
        \node [block, below left of="lemma_imaps_dec", text width=3.5em] ("corr_coefficients") {\autoref{corr:coefficients}};                
        \node [block, below of="lemma_imaps_dec", node distance=5.5em] ("lemma_imaps_idec") {\autoref{lemma:imaps_idec}};
        \node [block, above right of="lemma_imaps_dec"] ("thm_meek") {\autoref{thm:meek}};
        \node [block, above left of="lemma_imaps_dec"] ("lemma_inverse") {\autoref{lemma:inverse}};
        \node [block, left of="lemma_inverse", node distance=6.5em] ("lemma_orthogonal_rows_rank") {\autoref{lemma:orthogonal_rows_rank}};
        \node [block, below of="lemma_orthogonal_rows_rank", node distance=4.5em] ("lemma_structure") {\autoref{lemma:structure}};   \node [block, left of="lemma_imaps_idec", node distance=5em] ("lemma_structure_diff") {\autoref{lemma:structure_diff}}; 
        \node [block, below of="lemma_imaps_idec", node distance=3em] ("lemma_scaling") {\autoref{lemma:scaling}};
        \node [block, below left of="lemma_scaling", node distance=3em] ("lemma_targets_parents") {\autoref{lemma:targets_parents}}; 
        
        \node [block, right of="lemma_mec_dec", node distance=6.5em] ("corr_chickering") {\autoref{corr:chickering}};
        \node [block, above left of="corr_chickering", node distance=3.5em] ("lemma_chickering") {\autoref{lemma:chickering}};
        \node [block, above right of="corr_chickering", node distance=3.5em] ("thm_chickering") {\autoref{thm:chickering}};

        \node [block, below left of="corr_chickering", node distance=3em] ("prop_mec_equals_dec") {\autoref{prop:mec_equals_dec}};
        \node [block, right of="thm_chickering", node distance=6em] ("lemma_min_edges_mec") {\autoref{lemma:min_edges_mec}};
        \node [block, below left of="lemma_min_edges_mec", node distance=6em] ("lemma_min_edges_imec") {\autoref{lemma:min_edges_imec}};        
                                
        \node [block] ("faithfulness") at (5,2) {Faithfulness};

        \node[block] ("lemma_graphs_dists") at (-8,1) {\autoref{lemma:graphs_dists}};
        \node[block,below of="lemma_graphs_dists"] ("score_equivalence") {equivalence};        
        \node[block, draw=gray, text width=7em, below of="score_equivalence"] ("prop_score_properties") {\autoref{prop:score_properties}\\(score properties)};
        \node[block, text width=5em, right of="prop_score_properties", node distance=8em] ("consistency") {consistency};
        \node[block, below of="prop_score_properties"] ("decomposability") {decomposability};

        \node[block, text width=6em, draw=gray, right of="decomposability", node distance=30em] ("legend") {in main text};

        \draw [->] ("lemma_structure") to [] ("lemma_graphs_dists");
        \draw [->, out=180, in=60] ("corr_coefficients") to [] ("lemma_graphs_dists");
        \draw [->] ("lemma_imaps_idec") to [] ("lemma_graphs_dists");
        \draw [->] ("lemma_graphs_dists") to [] ("score_equivalence");
        \draw [->] ("score_equivalence") to [] ("prop_score_properties");
        \draw [->] ("decomposability") to [] ("prop_score_properties");
        \draw [->] ("consistency") to [] ("prop_score_properties");
        
        \draw [->, bend right=15] ("prop_imec_full") to [] ("consistency");
        \draw [->] ("assm_int_heter") to [] ("consistency");
        \draw [->] ("assm_model_truth") to [] ("consistency");

        \draw [->] ("prop_imec") to [] ("prop_imec_faithfulness");
        \draw [->] ("prop_imec_faithfulness") to [] ("prop_imec_full");
        \draw [->] ("lemma_mec_dec") to [] ("prop_mec_equals_dec");
        \draw [->, bend left=8] ("thm_meek") to [] ("lemma_min_edges_mec");
        \draw [->, bend left=8] ("lemma_min_edges_mec") to [] ("prop_mec_equals_dec");
        \draw [->, bend left=8] ("lemma_min_edges_imec") to [] ("prop_imec_faithfulness");
        \draw [->] ("lemma_imaps_idec") to [] ("prop_imec_faithfulness");
        \draw [->, bend right=8] ("lemma_imaps_idec") to [] ("prop_imec");
        \draw [->] ("lemma_min_edges_mec") to [] ("lemma_min_edges_imec");
        \draw [->] ("thm_meek") to [] ("prop_mec_equals_dec");
        \draw [->] ("corr_chickering") to [] ("prop_mec_equals_dec");        
        \draw [->] ("lemma_orthogonal_rows_rank") to [] ("lemma_imaps_dec");
        \draw [->, bend right=25] ("lemma_orthogonal_rows_rank") to [] ("lemma_imaps_idec");
        \draw [->] ("lemma_orthogonal_rows_rank") to [] ("lemma_structure");
        \draw [->] ("lemma_inverse") to [] ("lemma_structure");
        \draw [->] ("lemma_inverse") to [] ("lemma_imaps_dec");        
        \draw [->, bend right=20] ("lemma_inverse") to [] ("lemma_structure_diff");
        \draw [->, bend right=5] ("lemma_structure") to [] ("lemma_imaps_idec");
        \draw [->] ("thm_meek") to [] ("lemma_imaps_dec");
        \draw [->] ("lemma_imaps_dec") to [] ("lemma_mec_dec");
        \draw [->] ("lemma_imaps_dec") to [] ("corr_coefficients");
        \draw [->] ("lemma_imaps_dec") to [] ("lemma_imaps_idec");
        \draw [->] ("corr_coefficients") to [] ("lemma_imaps_idec");        
        \draw [->] ("lemma_chickering") to [] ("corr_chickering");
        \draw [->] ("thm_chickering") to [] ("corr_chickering");

        \draw [->] ("faithfulness") to [] ("lemma_min_edges_mec");
        \draw [->] ("faithfulness") to [] ("lemma_min_edges_imec");
        \draw [->] ("faithfulness") to [] ("prop_mec_equals_dec");
        \draw [->] ("faithfulness") to [] ("prop_imec_faithfulness");
        \draw [->, out=250, in=0] ("faithfulness") to [] ("prop_imec_full");
    
        \draw [->, out=260, in=5] ("lemma_min_edges_mec") to [] ("prop_imec_full");
        \draw [->, bend left=50] ("lemma_min_edges_imec") to [] ("prop_imec_full");
        \draw [->, bend left=65] ("prop_mec_equals_dec") to [] ("prop_imec_full");                
        \draw [->] ("assm_int_heter") to [] ("prop_imec_full");
        \draw [->] ("assm_model_truth") to [] ("prop_imec_full");
        \draw [->] ("corr_envs_superset") to [] ("prop_imec_full");
        \draw [->] ("lemma_targets_parents") to [] ("prop_imec_full");

        \draw [->] ("lemma_targets_parents") to [] ("consistency");
        \draw [->, out=220, in=175] ("lemma_orthogonal_rows_rank") to [] ("lemma_targets_parents");
        \draw [->] ("lemma_structure_diff") to [] ("lemma_targets_parents");
        \draw [->] ("lemma_scaling") to [] ("lemma_targets_parents");
        
    \end{tikzpicture}
\end{figure}

\printglossary[title=Notation,type=main,style=long,nonumberlist]

\subsection{Restated Results}

We restate results from \citet{chickering1995, chickering2002optimal} which are used in the proofs of this section. The notation is slightly adapted to fit ours.

\begin{definition}[Covered edge, from \citealp{chickering2002optimal}]
\label{def:covered}
For any DAG $\dag$, we say an edge $j \to i$ in $\dag$ is covered if
$$\pa_\dag(i) = \pa_\dag(j) \cup \{j\}.$$
\end{definition}

\ntext{
\begin{definition}[Independence Map (I-MAP) from \citealp{chickering2002optimal}]
\label{def:imap}
A DAG $H$ is an independence map of a DAG $G$ if every d-separation statement in $H$ also holds in $G$, that is, $i \indep_{H} j \mid S \implies i \indep_{G} j \mid S$.
\end{definition}
}

\begin{theorem}[Restated Theorem 4 from \citealp{chickering2002optimal}]
\label{thm:meek}
Let $\dag$ and $H$ be any pair of DAGs such that $H$ is an independence map of $\dag$. Let $r$ be the number of
edges in $H$ that have opposite orientation in $\dag$, and let $m$ be the number of edges in $H$ that
do not exist in either orientation in $\dag$. There exists a sequence of at most $r + 2m$ edge
reversals and additions in $\dag$ with the following properties:
\begin{enumerate}
    \item Each edge reversed is a covered edge.
    \item After each reversal and addition, $\dag$ is a DAG and $H$ is an independence map of  $\dag$.
    \item After all reversals and additions, $\dag$ = $H$.
\end{enumerate}
\end{theorem}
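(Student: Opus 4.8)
The plan is to prove \autoref{thm:meek} by induction on $m$, the number of edges of $H$ absent from $\dag$, folding the orientation discrepancies into the base case. The first move is a structural observation that converts the independence-map hypothesis into a containment of skeletons: if $H$ were missing an edge present in $\dag$ between two nodes $a,b$, then some set $S$ would d-separate $a$ and $b$ in $H$, yielding an independence that, because adjacent nodes can never be d-separated, must fail in $\dag$ — contradicting that $H$ is an I-map. Hence the skeleton of $\dag$ is contained in that of $H$, no deletions are ever required, and $\dag$ has exactly $|E(H)|-m$ edges. This is why reversals and additions alone suffice to reach $H$.

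For the base case $m=0$, $\dag$ and $H$ share a skeleton. I would first argue that the I-map hypothesis forces them to share v-structures as well: a collider present on exactly one side produces, upon conditioning (or not) on the collider node, an independence that holds in one graph but not the other, and in either orientation of the discrepancy this violates $\mathrm{Indep}(H)\subseteq\mathrm{Indep}(\dag)$. Equal skeletons together with equal v-structures means $\dag$ and $H$ are Markov equivalent, so by the classical characterization of equivalence via covered edge reversals \citep{chickering1995} there is a sequence of at most $r$ covered edge reversals transforming $\dag$ into $H$. Since every intermediate graph is equivalent to $\dag$, all of them entail identical independencies, so $H$ stays an I-map throughout and each reversed edge is covered (\autoref{def:covered}); properties 1--3 then hold.

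For the inductive step $m>0$, some edge $u\to v\in H$ is absent from $\dag$, and the goal is to reach, using only covered edge reversals that preserve the I-map property, a DAG in which adding $u\to v$ keeps the graph acyclic and keeps $H$ an I-map; one such addition lowers $m$ by one and the induction hypothesis finishes. To enable this I would fix a topological ordering of $H$ and select the missing edge greedily with respect to it (for instance $v$ minimal among endpoints of missing edges, $u$ the highest admissible tail), so that the neighborhood of $v$ in $\dag$ can be aligned with $H$ by covered reversals. Two checks are needed: that $u$ is not a descendant of $v$ after the reversals, which follows from the ordering of $H$ together with skeleton containment; and, more delicately, that adding $u\to v$ destroys no independence lying in $\mathrm{Indep}(H)$. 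Because an edge addition only removes independencies, and $H$ already contains no independence separating the now-adjacent pair $u,v$, the subtle part is excluding the loss of independencies created through $v$ as a collider or through newly activated paths, which is exactly what the careful choice of edge and the preparatory reversals secure. For the count, the base case spends at most $r$ reversals and each of the $m$ additions is preceded by at most one covered reversal, giving $r+m$ reversals and $m$ additions, i.e.\ at most $r+2m$ operations.

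The main obstacle is this inductive step: exhibiting, for a well-chosen missing edge, a sequence of \emph{covered} reversals after which the edge becomes addable while both acyclicity and the I-map property are maintained at every stage. This is the content of Meek's conjecture as established by \citet{chickering2002optimal}, and it rests on a fairly intricate case analysis over topological orderings, v-structures, and the behavior of d-separation under single-edge insertions. Verifying that every reversal used is covered (property 1) and that no $H$-independence is lost at any intermediate graph (property 2) is where the genuine difficulty lies, whereas the skeleton-containment observation and the Markov-equivalence reduction in the base case are comparatively routine.
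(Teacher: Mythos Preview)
The paper does not prove \autoref{thm:meek}; it is restated from \citet{chickering2002optimal} as an external tool and then invoked, without proof, in the arguments for \autoref{lemma:imaps_dec}, \autoref{lemma:min_edges_mec}, and \autoref{prop:mec_equals_dec}. There is therefore no proof in the paper against which to compare your proposal.

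For what it is worth, your outline matches the architecture of Chickering's original argument: skeleton containment from the I-map hypothesis, reduction of the $m=0$ case to Markov equivalence (and hence to \autoref{thm:chickering}), and an inductive step that inserts a well-chosen missing edge after preparatory covered reversals. Your base-case claim is correct and can be made precise: if $\dag$ and $H$ share a skeleton and a triple $a\to c\leftarrow b$ is a v-structure in exactly one of them, then any set $S$ that d-separates $a$ and $b$ in $H$ must exclude $c$ when the v-structure is in $H$ and must include $c$ when it is in $\dag$; in either case the path $a$--$c$--$b$ is then active in $\dag$ given that same $S$, so $a\perp_H b\mid S$ fails in $\dag$, contradicting $\mathrm{Indep}(H)\subseteq\mathrm{Indep}(\dag)$. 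Your accounting for the bound $r+2m$ is heuristic---in Chickering's proof the reversals preceding an addition need not number exactly one, and some may temporarily increase $r$---but, as you yourself note, the substantive content is the inductive step, which your proposal describes in outline without carrying out.
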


\begin{theorem}[Restated Theorem 2 from \citealp{chickering1995}]
\label{thm:chickering}
Let $\dag$ and $\dag'$ be any pair of DAGs that are Markov equivalent and for which there are $\delta$ edges in $\dag$ that have opposite orientation in $\dag'$. Then there exists a sequence of $\delta$ distinct edge reversals in $\dag$ with the following properties:
\begin{enumerate}
    \item Each edge reversed in $\dag$ is covered.
    \item After each edge reversal, $\dag$ is a DAG and is Markov equivalent to $\dag'$.
    \item After all reversals, $\dag = \dag'$.
\end{enumerate}
\end{theorem}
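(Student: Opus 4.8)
The plan is to prove \autoref{thm:chickering} by induction on $\delta$, the number of edges oriented oppositely in $\dag$ and $\dag'$. Note this is the Markov-equivalent analogue of \autoref{thm:meek}, but stronger in that it forbids edge \emph{additions} and demands that every intermediate graph be Markov equivalent to $\dag'$; so a self-contained induction is cleaner than trying to specialize \autoref{thm:meek}. The base case $\delta = 0$ is immediate: Markov equivalent DAGs share a skeleton (Verma--Pearl), so if no edge is reversed then $\dag = \dag'$ and the empty sequence works. For the inductive step I would reduce everything to one key lemma: if $\dag \neq \dag'$ are Markov equivalent, then $\dag$ contains a \emph{covered} edge (\autoref{def:covered}) whose orientation is reversed in $\dag'$. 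Granting this, pick such a covered edge $a \to b$ in $\dag$ and reverse it to obtain $\dag''$. The sub-lemma below shows $\dag''$ is a DAG Markov equivalent to $\dag$, hence to $\dag'$; moreover $\dag''$ now agrees with $\dag'$ on $\{a,b\}$, so it differs from $\dag'$ on exactly $\delta - 1$ edges. Applying the inductive hypothesis to $(\dag'', \dag')$ yields $\delta - 1$ covered-edge reversals turning $\dag''$ into $\dag'$, and prepending the reversal of $a \to b$ gives a length-$\delta$ sequence. Distinctness and property (2) come for free: each reversal strictly decreases the opposite-orientation count and leaves the just-corrected edge matching $\dag'$, so it is never reversed again, and every intermediate graph is Markov equivalent to $\dag'$ by construction.

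The sub-lemma --- reversing a covered edge $a \to b$ of a DAG produces a DAG $\dag''$ Markov equivalent to $\dag$ --- I would prove directly from the defining identity $\pa_\dag(b) = \pa_\dag(a) \cup \{a\}$. The skeleton is clearly unchanged. For v-structures, every other parent of $b$ lies in $\pa_\dag(a)$ and is therefore adjacent to $a$, so neither before nor after reversal can an unshielded collider at $a$ or $b$ be created or destroyed, while any v-structure not involving $\{a,b\}$ is untouched. Acyclicity is preserved because a directed cycle through the new edge $b \to a$ would require a directed $a \leadsto b$ path in $\dag$ avoiding that edge, which $\pa_\dag(b) = \pa_\dag(a) \cup \{a\}$ rules out. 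This is the standard covered-edge-reversal fact and is routine.

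The real work, and the main obstacle, is the existence lemma. My approach is to fix a topological ordering $\rho$ of $\dag'$ and study the \emph{reversed} edges (edges of $\dag$ oriented oppositely in $\dag'$); since $b \to a$ in $\dag'$ forces $b <_\rho a$, every reversed edge points downward in $\rho$ when read in $\dag$. The two structural tools are (a) v-structure preservation, which forces adjacencies: if two nodes are common parents of some node in one graph but the corresponding triple is not a collider in the other, they must be adjacent; and (b) acyclicity, which then pins down orientations inside the resulting triangles. With these one shows cleanly that, for a reversed edge $a \to b$ incident to a suitably extremal node, $\pa_\dag(a) \cup \{a\} \subseteq \pa_\dag(b)$: each parent $w$ of $a$ is, via a forbidden-collider argument at $a$ in $\dag'$, adjacent to $b$, and then acyclicity ($w \to a \to b$) makes $w$ a parent of $b$.

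The delicate half is the reverse inclusion $\pa_\dag(b) \subseteq \pa_\dag(a) \cup \{a\}$, and this is exactly where naive extremal choices break down. Consider a fully reversed transitive triangle $a \to u \to b$, $a \to b$ in $\dag$: the ``long'' edge $a \to b$ is reversed but \emph{not} covered, while the two short edges are, so selecting the reversed edge at the $\rho$-latest node can return a non-covered edge. The fix is to choose the reversed edge to reverse by an appropriate secondary (minimal/terminal) criterion among reversed edges, and then argue that any extra parent $u$ of $b$ with $a \to u$ in $\dag$ would itself be the tail of a reversed edge violating the extremality of the choice --- closing the triangle via acyclicity in $\dag'$ together with the collider constraints. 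Getting this selection rule exactly right, so that both inclusions hold for the \emph{same} chosen edge, is the crux; everything else is the bookkeeping and induction described above.
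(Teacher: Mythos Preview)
The paper does not prove \autoref{thm:chickering}: it is explicitly listed under ``Restated Results'' from \citet{chickering1995} and is invoked only as an imported tool (in \autoref{corr:chickering} and downstream in \autoref{prop:mec_equals_dec}). So there is no in-paper proof to compare against; your proposal is effectively a reconstruction of Chickering's original argument.

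That said, your outline is sound and tracks the original closely: induction on $\delta$, reduction to an existence lemma (some reversed edge is covered), and the covered-edge-reversal sub-lemma. Note that your sub-lemma is precisely \autoref{lemma:chickering}, also restated without proof in the paper, so you may cite it rather than re-derive it. Your diagnosis of where the difficulty lies --- the reverse inclusion $\pa_\dag(b) \subseteq \pa_\dag(a) \cup \{a\}$ and the need for a careful extremal selection rule --- is accurate; Chickering's choice is to take the reversed edge $a \to b$ where $b$ is $\rho$-maximal among sinks of reversed edges and then $a$ is $\rho$-maximal among tails of reversed edges into that $b$, which is exactly the kind of secondary criterion you gesture at. If you want this to stand as a full proof rather than a plan, you would need to pin down that selection explicitly and carry out both inclusion arguments; as a proposal it is correct in structure and identifies the genuine obstacle.
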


\begin{lemma}[Restated Lemma 1 from \citealp{chickering1995}]
\label{lemma:chickering}
Let $\dag$ be any DAG model, and let $\dag'$ be the result of
reversing the edge $X \to Y$ in $\dag$. Then $\dag'$ is a DAG that is Markov equivalent to $\dag$ if and only if
$X \to Y$ is covered in $\dag$.
\end{lemma}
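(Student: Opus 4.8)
The plan is to reduce everything to the Verma--Pearl characterization of Markov equivalence \citep{verma1990equivalence}, by which two DAGs are Markov equivalent precisely when they share a skeleton and a set of v-structures. Reversing a single edge leaves the skeleton untouched, so $\dag$ and $\dag'$ agree on it automatically, and the proof splits into verifying that $\dag'$ is acyclic and that its v-structures coincide with those of $\dag$. Because reversing the edge between $X$ and $Y$ changes no adjacencies and flips only this one orientation, the only triples whose collider status can possibly change are those centered at $X$ or at $Y$; I would therefore track the affected parent sets, which become $\pa_{\dag'}(Y) = \pa_\dag(Y) \setminus \{X\}$ and $\pa_{\dag'}(X) = \pa_\dag(X) \cup \{Y\}$, all others being unchanged.

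For the ``if'' direction, assume $X \to Y$ is covered, i.e. $\pa_\dag(Y) = \pa_\dag(X) \cup \{X\}$. First I would establish acyclicity: any cycle created by the reversal must use the new edge $Y \to X$, hence corresponds to a directed $X$-to-$Y$ path of length at least two in $\dag$; its last node $W$ satisfies $W \to Y$, so by coveredness $W = X$ (forcing length one, a contradiction) or $W \in \pa_\dag(X)$ (so $W \to X$ closes a cycle in $\dag$, again a contradiction). Then, substituting coveredness into the parent sets gives $\pa_{\dag'}(Y) = \pa_\dag(X)$ and $\pa_{\dag'}(X) = \pa_\dag(X) \cup \{Y\}$, and a direct comparison shows the v-structures match: at $Y$ the colliders are exactly the non-adjacent pairs inside $\pa_\dag(X)$ in both graphs (a pair involving $X$ never qualifies, since the other member is a parent of $X$ and hence adjacent to it), and at $X$ the newly added parent $Y$ is adjacent to every element of $\pa_\dag(X) = \pa_{\dag'}(Y)$, so it introduces no new collider.

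For the ``only if'' direction I would argue by contraposition, assuming $X \to Y$ is not covered and showing $\dag'$ fails to be a DAG Markov equivalent to $\dag$. Since $X \in \pa_\dag(Y)$ always holds, non-coveredness yields a witness $W \neq X$ that is either (a) a parent of $Y$ but not of $X$, or (b) a parent of $X$ but not of $Y$. In case (a), if $W$ is non-adjacent to $X$ then $X \to Y \leftarrow W$ is a v-structure of $\dag$ that the reversal destroys, while if $X \to W$ then $Y \to X \to W \to Y$ is a cycle in $\dag'$. In case (b), acyclicity of $\dag$ excludes $Y \to W$, so $W$ is non-adjacent to $Y$, and the reversal creates the new v-structure $W \to X \leftarrow Y$ not present in $\dag$. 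In each case the hypothesis ``$\dag'$ is a DAG Markov equivalent to $\dag$'' fails, which is what contraposition requires.

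I expect the ``only if'' direction to be the main obstacle, as it hinges on a careful case analysis of the adjacency relations of the witnessing node $W$ and on correctly attributing each failure either to a cycle or to a v-structure mismatch; by contrast, the ``if'' direction reduces, once coveredness is unpacked into the parent sets, to a mechanical comparison that is essentially bookkeeping.
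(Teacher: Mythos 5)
The paper does not actually prove \autoref{lemma:chickering}: it is restated from \citet{chickering1995} and used as a black box, so there is no in-paper argument to compare yours against. Judged on its own, your proof is correct and complete, and it follows essentially the same strategy as the cited source: reduce Markov equivalence to the Verma--Pearl criterion (identical skeleton and identical v-structures), note that reversing one edge preserves the skeleton and can only alter colliders centered at $X$ or $Y$, and then check acyclicity and the collider sets via the updated parent sets $\pa_{\dag'}(Y)=\pa_\dag(Y)\setminus\{X\}$ and $\pa_{\dag'}(X)=\pa_\dag(X)\cup\{Y\}$. Both directions are handled correctly; in particular, the contrapositive case analysis on the witness $W$ correctly attributes each failure either to a directed cycle of $\dag'$ or to a v-structure mismatch, and it correctly uses acyclicity of $\dag$ to rule out $Y \to W$ in case (b). The only wording to tighten is the acyclicity step of the ``if'' direction: when the penultimate node $W$ of the (without loss of generality simple) $X$-to-$Y$ path equals $X$, the contradiction is that a simple path of length at least two cannot have its first and penultimate vertices coincide --- equivalently, a non-simple walk would already contain a directed cycle of $\dag$; your parenthetical ``forcing length one'' is shorthand for this and is harmless.
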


\begin{corollary}[Markov equivalence and covered edge reversals]
\label{corr:chickering} Let $\dag$ and $\dag'$ be two DAGs. From \autoref{thm:chickering} and \autoref{lemma:chickering}, it follows that they are Markov equivalent if and only if there exists a sequence of covered edge reversals that yield $\dag'$ when applied to $\dag$.
\end{corollary}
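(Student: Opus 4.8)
The plan is to prove the biconditional by establishing its two directions separately; each follows almost immediately from one of the two cited results, so the corollary is a direct consequence of \autoref{thm:chickering} and \autoref{lemma:chickering}.

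For the forward direction, I would assume that $\dag$ and $\dag'$ are Markov equivalent. If $\dag = \dag'$, the empty sequence of reversals trivially transforms $\dag$ into $\dag'$. Otherwise, let $\delta > 0$ be the number of edges of $\dag$ that appear with opposite orientation in $\dag'$. Then \autoref{thm:chickering} directly provides a sequence of $\delta$ covered edge reversals that, applied to $\dag$, yields $\dag'$, which is exactly the required conclusion. In this sense the forward implication is essentially a restatement of \autoref{thm:chickering}.

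For the backward direction, I would assume the existence of a sequence of covered edge reversals $\dag = \dag_0, \dag_1, \ldots, \dag_k = \dag'$, where each $\dag_i$ is obtained from $\dag_{i-1}$ by reversing an edge that is covered in $\dag_{i-1}$, and proceed by induction on $k$. The base case $k = 0$ gives $\dag = \dag'$, trivially Markov equivalent to itself. For the inductive step, \autoref{lemma:chickering} guarantees that reversing the covered edge in $\dag_{i-1}$ produces a DAG $\dag_i$ that is Markov equivalent to $\dag_{i-1}$; chaining these equivalences and invoking the transitivity of Markov equivalence yields that $\dag_0 = \dag$ is Markov equivalent to $\dag_k = \dag'$.

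The argument involves no substantial obstacle. The only point requiring care is to fix the precise reading of a ``sequence of covered edge reversals'': the covered condition at step $i$ must be checked in the intermediate graph $\dag_{i-1}$ rather than in the original $\dag$, and \autoref{lemma:chickering} must be applied at each step both to certify that $\dag_i$ is again a DAG and that it remains Markov equivalent, so that the subsequent reversal and the next application of the lemma are well-defined. With this reading, both implications follow directly from the cited theorems.
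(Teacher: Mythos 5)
Your proposal is correct and takes the same route the paper intends: the forward direction is an immediate consequence of \autoref{thm:chickering}, and the backward direction follows by induction from \autoref{lemma:chickering} together with transitivity of Markov equivalence. The paper leaves this argument implicit, and your only addition --- noting that the covered condition must be checked in each intermediate graph --- is exactly the right reading and consistent with how \autoref{thm:chickering} is stated.
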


\subsection{Supporting Results from Linear Algebra}
\label{sec:lemmas}


\begin{lemma}
\label{lemma:orthogonal_rows_rank}
Let $A \in \bb{R}^{p \times p}$. The following two statements are equivalent,
\begin{enumerate}
    \item $AA^T$ is diagonal with positive entries, and
    \item $A$ has full rank and orthogonal row vectors.
\end{enumerate}
\end{lemma}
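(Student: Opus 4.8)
The plan is to reduce the statement entirely to a fact about the row vectors of $A$ by using the elementary identity that, writing $A_{i:}$ for the $i^\text{th}$ row of $A$ (viewed as a row vector), the entries of $AA^T$ are given by inner products of rows:
\begin{align*}
\left(AA^T\right)_{ij} = A_{i:}\,A_{j:}^T.
\end{align*}
With this identity in hand, both conditions become transparent: the off-diagonal entries $\left(AA^T\right)_{ij}$ for $i \neq j$ vanish precisely when the rows of $A$ are pairwise orthogonal, and the diagonal entries $\left(AA^T\right)_{ii} = \norm{A_{i:}}^2$ are positive precisely when no row of $A$ is the zero vector. So the whole proof is just a matter of connecting ``orthogonal and nonzero rows'' to ``full rank and orthogonal rows''.

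First I would prove the direction $(1) \Rightarrow (2)$. Assuming $AA^T$ is diagonal with positive entries, the above identity shows the rows are pairwise orthogonal (this is half of statement (2) immediately) and each satisfies $\norm{A_{i:}}^2 > 0$, hence is nonzero. It then remains to deduce full rank. For this I would invoke the standard fact that a collection of pairwise orthogonal nonzero vectors is linearly independent: if $\sum_i c_i A_{i:} = 0$, then taking the inner product of both sides with $A_{j:}$ and using orthogonality collapses the sum to $c_j \norm{A_{j:}}^2 = 0$, and since $\norm{A_{j:}}^2 > 0$ this forces $c_j = 0$ for every $j$. Thus the $p$ rows are linearly independent in $\bb{R}^p$ and $A$ has full rank.

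For the converse $(2) \Rightarrow (1)$, orthogonality of the rows gives $\left(AA^T\right)_{ij} = A_{i:}\,A_{j:}^T = 0$ for $i \neq j$, so $AA^T$ is diagonal. Full rank of $A$ means its rows are linearly independent, and in particular none of them can be the zero vector; hence each diagonal entry $\left(AA^T\right)_{ii} = \norm{A_{i:}}^2$ is strictly positive. This completes both directions. I do not expect any real obstacle here: the only non-mechanical ingredient is the orthogonal-implies-independent argument, which is the short computation above, and everything else is a direct reading of $AA^T$ entrywise.
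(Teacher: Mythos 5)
Your proof is correct. The overall strategy is the same as the paper's---both read $AA^T$ entrywise as $\left(AA^T\right)_{ij} = \langle A_{i:}, A_{j:}\rangle$ and deduce orthogonality and positivity of the diagonal directly, and the converse direction is identical in both. The one place where you genuinely diverge is the full-rank step in the direction $(1)\Rightarrow(2)$: you argue directly that pairwise orthogonal nonzero vectors are linearly independent (pairing a vanishing linear combination with each row), whereas the paper argues by contradiction, using the identity $\text{rank}(A) = \text{rank}(AA^T)$ together with the observation that a singular diagonal matrix must have a zero diagonal entry. Your route is slightly more self-contained, since it does not invoke the rank identity for $AA^T$; the paper's is slightly shorter on the page but leans on that standard fact. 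Either is perfectly acceptable, and there is no gap in your argument.
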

\begin{proof}
To show that $1 \implies 2$, assume $AA^T$ is diagonal with positive entries but \begin{enumerate}[label=(\roman*)]
    \item $\text{rank}(A) < p$, or
    \item $\exists i,j$ such that $i \neq j$ and $\langle A_{i:}, A_{j:} \rangle \neq 0$.
\end{enumerate}
If (i) is true, then $\text{rank}(A) = \text{rank}(AA^T) < p$, which means that $\text{det}(AA^T) = 0$ and thus, since $AA^T$ is diagonal, $\exists i$ such that $A_{ii}=0$, which is a contradiction. If (ii) is true, then $(AA^T)_{ij} = \langle A_{i:}, A_{j:} \rangle \neq 0$, which is a contradiction.

That $2 \implies 1$ is trivial, as for $i\neq j$,  $(AA^T)_{ij} = \langle A_{i:}, A_{j:} \rangle = 0$ by definition, and $(AA^T)_{ii} = \langle A_{i:}, A_{i:} \rangle > 0$ as $A$ has full rank and $A_{i:} \neq \Vec{0}$ for all $i$.
\end{proof}

\begin{lemma}
\label{lemma:inverse}
Let $A,B \in \bb{R}^{p\times p}$ with $B$ invertible, $\alpha \in \bb{R}$ and let $e_i \in \bb{R}^p$ be the $i^\text{th}$ canonical unit vector.
$$A_{i:} = \alpha B_{i:} \iff (AB^{-1})_{i:} = \alpha e_i^T.$$
\end{lemma}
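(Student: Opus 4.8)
The plan is to reduce everything to the elementary identity that the $i^\text{th}$ row of a matrix product is the $i^\text{th}$ row of the left factor times the right factor, namely $(MN)_{i:} = M_{i:}N$ for conformable $M,N$. Applying this with $M = A$ and $N = B^{-1}$ gives $(AB^{-1})_{i:} = A_{i:}B^{-1}$, which already turns both sides of the claimed equivalence into statements about the single row vector $A_{i:}$ post-multiplied by $B^{-1}$ or $B$. The whole argument is then a matter of right-multiplying by the appropriate (invertible) matrix, which is what makes the equivalence genuinely biconditional rather than a one-sided implication.

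For the forward direction, I would assume $A_{i:} = \alpha B_{i:}$ and compute $(AB^{-1})_{i:} = A_{i:}B^{-1} = \alpha B_{i:}B^{-1}$. The key observation here is that $B_{i:}B^{-1}$ is precisely the $i^\text{th}$ row of the product $BB^{-1} = I$, so $B_{i:}B^{-1} = (BB^{-1})_{i:} = I_{i:} = e_i^T$, yielding $(AB^{-1})_{i:} = \alpha e_i^T$ as required. For the converse, I would assume $(AB^{-1})_{i:} = \alpha e_i^T$, rewrite the left side as $A_{i:}B^{-1} = \alpha e_i^T$, and right-multiply by $B$ (legitimate since $B$ is invertible) to obtain $A_{i:} = \alpha e_i^T B = \alpha B_{i:}$, using the dual observation that $e_i^T B$ selects the $i^\text{th}$ row of $B$.

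The two auxiliary facts I rely on, $B_{i:}B^{-1} = e_i^T$ and $e_i^T B = B_{i:}$, are the only substantive ingredients, and both are immediate consequences of the row-extraction identity applied to $BB^{-1} = I$ and to $B$ itself. There is no real obstacle in this lemma; if anything, the only point requiring a moment's care is to state the row-extraction identity $(MN)_{i:} = M_{i:}N$ cleanly at the outset, since both directions hinge on it, and to note explicitly that invertibility of $B$ is what allows the right-multiplication in the converse direction to be performed without loss of information. I would present the two directions as a short displayed chain of equalities each, with the row-extraction identity invoked by name.
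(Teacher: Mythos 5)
Your proposal is correct and follows essentially the same route as the paper's proof: both directions reduce to the row-extraction identity $(AB^{-1})_{i:} = A_{i:}B^{-1}$, with the forward direction using $B_{i:}B^{-1} = e_i^T$ and the converse right-multiplying by $B$ and using $e_i^T B = B_{i:}$. No issues.
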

\begin{proof}
"$\implies$" $(AB^{-1})_{i:} = A_{i:}B^{-1} = \alpha B_{i:}B^{-1} = \alpha e_i^T.$
\\
"$\impliedby$" $(AB^{-1})_{i:} = \alpha e_i^T \implies A_{i:}B^{-1} = \alpha e_i^T \implies A_{i:} = \alpha e_i^TB = \alpha B_{i:}.$
\end{proof}

\begin{lemma}[Scaling of a matrix with orthogonal rows]
\label{lemma:scaling}
Let $M \in \bb{R}^{p \times p}$ be a full-rank matrix with orthogonal row vectors, such that every row has more than one non-zero entry. Let $D \in \bb{R}^{p \times p}$ be a diagonal matrix with positive diagonal entries for which $\exists i \in [p]$ such that $D_{ii} \neq D_{jj}$ for all $j \neq i$. Then $MD$ does not have orthogonal row vectors.
\end{lemma}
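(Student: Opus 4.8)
The plan is to argue by contradiction: assume $MD$ has orthogonal rows and deduce that some row of $M$ has at most one non-zero entry, contradicting the hypothesis. The starting observation is that the inner product of rows $k$ and $m$ of $MD$ is $\langle (MD)_{k:}, (MD)_{m:}\rangle = \sum_l M_{kl} M_{ml} D_{ll}^2$, whereas orthogonality of the rows of $M$ (guaranteed, together with $\text{rank}(M)=p$, by \autoref{lemma:orthogonal_rows_rank}) gives $\sum_l M_{kl} M_{ml} = 0$ for $k \neq m$.

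The key manoeuvre is to subtract $D_{ii}^2$ times the second identity from the first. Because the $l=i$ term then carries the factor $D_{ii}^2 - D_{ii}^2 = 0$, it cancels, and the assumption that $MD$ has orthogonal rows becomes
\[
\sum_{l \neq i} M_{kl} M_{ml}\, (D_{ll}^2 - D_{ii}^2) = 0 \quad \text{for all } k \neq m.
\]
Writing $\tilde{M}$ for the $p \times (p-1)$ matrix obtained by deleting column $i$ of $M$, and $\hat{D} := \diag{D_{ll}^2 - D_{ii}^2 : l \neq i}$, this says exactly that $N := \tilde{M}\hat{D}\tilde{M}^T$ is a diagonal matrix.

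Next I would pin down the rank of $N$. Since $D_{ii} \neq D_{jj}$ for every $j \neq i$ and all diagonal entries are positive, $D_{ll}^2 \neq D_{ii}^2$ for $l \neq i$, so $\hat{D}$ is invertible. As $M$ is square and full rank its columns are linearly independent, hence the $p-1$ retained columns make $\tilde{M}$ of rank $p-1$, and therefore $\text{rank}(N) = p-1$. A diagonal matrix of rank $p-1$ has exactly one vanishing diagonal entry, say at position $k_0$, so its column space is $\text{span}\{e_k : k \neq k_0\}$. On the other hand $\text{col}(N) \subseteq \text{col}(\tilde{M})$ trivially, and both subspaces have dimension $p-1$, so they coincide. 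Consequently every column $M_{:l}$ with $l \neq i$ lies in $\text{span}\{e_k : k \neq k_0\}$, i.e. $M_{k_0 l} = 0$ for all $l \neq i$. Thus row $k_0$ of $M$ can be non-zero only in column $i$, giving it at most one non-zero entry and contradicting the hypothesis that every row of $M$ has more than one non-zero entry.

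The main obstacle is spotting the cancellation trick that eliminates column $i$ and recasts the orthogonality of $MD$ as the diagonality of the rank-deficient Gram-type matrix $N$; once that reformulation is in place, the rank and column-space bookkeeping is routine. The minor points to verify carefully are the passage from injectivity of $x \mapsto x^2$ on the positive reals to invertibility of $\hat{D}$, and the identification of the column space of a diagonal rank-$(p-1)$ matrix with a coordinate hyperplane.
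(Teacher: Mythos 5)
Your proof is correct, but it follows a genuinely different route from the paper's. The paper also argues by contradiction, but works row by row: it picks a single row $k$ with $M_{ki}\neq 0$, notes that both $M_{k:}$ and $(MD^2)_{k:}$ are orthogonal to the $p-1$ linearly independent rows $\{M_{j:}\}_{j\neq k}$ and hence lie in the same one-dimensional subspace, so $(MD^2)_{k:}=\alpha M_{k:}$; reading off coordinates gives $D_{ll}^2=\alpha$ for every $l$ in the support of row $k$, and since that support contains $i$ and at least one other index $j$, positivity forces $D_{ii}=D_{jj}$, contradicting the hypothesis on $D$. You instead cancel the $i$-th column via the identity $\sum_{l\neq i}M_{kl}M_{ml}(D_{ll}^2-D_{ii}^2)=0$, recast the assumption as diagonality of the Gram-type matrix $N=\tilde M\hat D\tilde M^T$ of rank $p-1$, and match column spaces to produce a row of $M$ supported only in column $i$ — contradicting the hypothesis on $M$ rather than the one on $D$. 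The two arguments are in this sense dual: each uses one of the two structural hypotheses to violate the other. The paper's version is shorter and entirely coordinate-based; yours requires the extra (routine but worth stating) rank computation $\operatorname{rank}(\tilde M\hat D\tilde M^T)=p-1$ and the identification of the column space of a rank-$(p-1)$ diagonal matrix, all of which you handle correctly. One cosmetic remark: the orthogonality of the rows of $M$ is a direct hypothesis of the lemma, so the appeal to \autoref{lemma:orthogonal_rows_rank} in your first paragraph is unnecessary (that lemma is used elsewhere to translate diagonality of $\Omega'$ into the row-orthogonality hypothesis, not inside this proof).
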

\begin{proof}
Assume that $MD$ does in fact have orthogonal row vectors, \ie that $MDD M^T$ is a diagonal matrix. Denote $\tilde{M} := MDD$, and pick\footnote{Such $k$ exists as $M$ has full rank.} $k$ such that $M_{ki} \neq 0$. We have that
\begin{align}
\label{eq:prod_1}
\langle M_{k:}, M_{j:}\rangle = 0 \text{ and } \langle\tilde{M}_{k:}, M_{j:}\rangle = 0 \text{ for } j \neq k.
\end{align}
Since $\{M_{j:}\}_{j \neq k}$ are $p-1$ linearly independent vectors in $\bb{R}^p$, \eqref{eq:prod_1} implies that $\tilde{M}_{k:}$ and $M_{k:}$ belong to the same one-dimensional subspace. Thus, for some $\alpha \in \bb{R}$ we have that
\begin{align}
\label{eq:constraint}
    \tilde{M}_{k:} =& M_{k:}DD = \alpha M_{k:}\nonumber\\
    \implies & M_{kl}D_{ll}^2 = \alpha M_{kl} \text{ for all } l \in [p]\nonumber \\
    \implies & D_{ll}^2 = \alpha \text{ for all } l : M_{kl} \neq 0.
\end{align}
Note that we chose $k$ s.t. $M_{ki} \neq 0$, and because the rows of $M$ contain more than one non-zero entry, $\exists j \neq i$ such that  $M_{kj} \neq 0$. Thus, \eqref{eq:constraint} implies that
$D_{ii}^2 = \alpha = D_{jj}^2$.
Since the diagonal entries of $D$ are positive, the above means that $D_{ii} = D_{jj}$, arriving at a contradiction with our constraints on $D$. Thus, $MD$ cannot have orthogonal row vectors, completing the proof.
\end{proof}

\subsection{Supporting Results for the Observational Case ($\abs{\cc{E}} = 1$)}
\label{ss:support_mec}

While the results of this subsection are known in the literature, for completeness we prove them again here for the models we consider.

\begin{lemma}[I-MAPs are distribution equivalent]
\label{lemma:imaps_dec}
Let $(B, \Omega)$ be a model with a single environment, which entails a distribution $P \sim \cc{N}(0, \Sigma)$ with $\Sigma := (I-B)^{-1}\Omega(I-B)^{-T}$. Let $\dag$ be an independence map of $\cc{G}(B)$, that is, all the d-separation statements in \rmj{$\cc{G}(B)$}\ntext{$\dag$} hold also in \rmj{$\dag$}\ntext{$\cc{G}(B)$}. Then,
\begin{center}
    $\exists$ a model $(B',\Omega')$ s.t. $B' \comp \dag$ and $\Sigma = (I-B')^{-1}\Omega'(I-B')^{-T}$.
\end{center}
\end{lemma}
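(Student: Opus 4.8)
The plan is to first recast the distributional identity as a matrix condition, and then reduce to elementary graph moves via \autoref{thm:meek}. Writing $Q := (I - B')(I-B)^{-1}$, the target identity $(I-B')^{-1}\Omega'(I-B')^{-T} = \Sigma = (I-B)^{-1}\Omega(I-B)^{-T}$ is equivalent to $\Omega' = Q\,\Omega\, Q^T$. Hence it suffices to find $B' \comp \dag$ for which $Q\Omega Q^T = \big(Q\Omega^{1/2}\big)\big(Q\Omega^{1/2}\big)^T$ is diagonal with positive entries, and then to \emph{define} $\Omega' := Q\Omega Q^T$. By \autoref{lemma:orthogonal_rows_rank} this diagonality is equivalent to $Q\Omega^{1/2}$ having full rank and pairwise orthogonal rows (full rank being automatic, as $Q$ and $\Omega^{1/2}$ are invertible). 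So the whole problem becomes: build $B' \comp \dag$ so that the rows of $Q\Omega^{1/2}$ are orthogonal.

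Next I would reduce to single edge operations. Since $\dag$ is an independence map of $\cc{G}(B)$, applying \autoref{thm:meek} with $H := \dag$ to the DAG $\cc{G}(B)$ yields a finite sequence of DAGs $\cc{G}(B) = \dag_0, \dag_1, \dots, \dag_L = \dag$, where each $\dag_{t+1}$ is obtained from $\dag_t$ either by reversing a covered edge or by adding an edge, and where $\dag$ remains an independence map of every $\dag_t$. I would then argue by induction on $t$, maintaining a model $(B_t, \Omega_t)$ with $B_t \comp \dag_t$ and $(I-B_t)^{-1}\Omega_t(I-B_t)^{-T} = \Sigma$, starting from $(B_0,\Omega_0) = (B,\Omega)$. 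An edge addition is handled for free: if $\dag_t \subseteq \dag_{t+1}$ then $B_t \comp \dag_t$ already implies $B_t \comp \dag_{t+1}$, so the same model is kept. All the content therefore lies in the covered-edge reversal step.

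For the reversal of a covered edge $j \to i$, write $S := \pa_{\dag_t}(j)$; the covered-edge condition (\autoref{def:covered}) gives $\pa_{\dag_t}(i) = S \cup \{j\}$, and after reversal $\pa_{\dag_{t+1}}(i) = S$ while $\pa_{\dag_{t+1}}(j) = S \cup \{i\}$. Only the structural equations of $i$ and $j$ change, so I would seek an update matrix $Q_t := (I-B_{t+1})(I-B_t)^{-1}$ that equals the identity except in the $2\times 2$ block indexed by $\{i,j\}$. Indeed, for each $k \notin \{i,j\}$ the $k$-th rows of $I-B_{t+1}$ and $I-B_t$ agree, so \autoref{lemma:inverse} with $\alpha = 1$ gives $Q_{t,k:} = e_k^T$. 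For any such $Q_t$, the rows of $Q_t\Omega_t^{1/2}$ indexed outside $\{i,j\}$ equal $\Omega_{t,kk}^{1/2}e_k^T$ and are automatically orthogonal to one another and to the $i$- and $j$-rows, so the task reduces to choosing the $2\times 2$ block such that (a) the two rows $i,j$ of $Q_t\Omega_t^{1/2}$ are mutually orthogonal, a single scalar equation that is solvable because $\Omega_t$ is positive definite and amounts to a Gram--Schmidt step on these two rows, and (b) $B_{t+1} = I - Q_t(I-B_t)$ has row $i$ supported on $S$ and row $j$ on $S \cup \{i\}$, \ie $B_{t+1} \comp \dag_{t+1}$. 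The covered-edge identity $\pa_{\dag_t}(i) = S \cup \{j\}$ is exactly what makes a block acting only on coordinates $\{i,j\}$ compatible with both support requirements. Finally $\Omega_{t+1} := Q_t\Omega_t Q_t^T$ is diagonal with positive entries by \autoref{lemma:orthogonal_rows_rank}, closing the inductive step.

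Carrying the induction to $t = L$ produces a model $(B_L, \Omega_L)$ with $B_L \comp \dag$ and $(I-B_L)^{-1}\Omega_L(I-B_L)^{-T} = \Sigma$; taking $(B', \Omega') := (B_L, \Omega_L)$ proves the claim. I expect the covered-edge reversal to be the main obstacle: the difficulty is to satisfy \emph{simultaneously} the diagonality of $\tilde\Omega$ (orthogonality of the rows of $Q_t\Omega_t^{1/2}$) and the prescribed sparsity $B_{t+1} \comp \dag_{t+1}$, and it is precisely the covered-edge structure together with \autoref{lemma:inverse} and \autoref{lemma:orthogonal_rows_rank} that reconciles these two requirements.
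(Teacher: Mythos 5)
Your proposal is correct and follows essentially the same route as the paper's proof: reduce via \autoref{thm:meek} to a sequence of edge additions (handled trivially) and covered-edge reversals, and handle each reversal by a two-row update of $(I-B)$ whose rows outside $\{i,j\}$ are canonical unit vectors (\autoref{lemma:inverse}) and whose $\{i,j\}$ rows are chosen to satisfy both the new sparsity pattern and the orthogonality of the rows of $(I-B')(I-B)^{-1}\Omega^{1/2}$ (\autoref{lemma:orthogonal_rows_rank}). The paper's Case II just makes your ``solvable $2\times 2$ block'' step explicit by writing the four constraints as triangular linear systems for the vectors $\beta^i,\beta^j$.
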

\begin{proof}
\newline
\newline
\underline{Case I:} \rmj{All edges in $\cc{G}(B)$ appear in $\dag$}\ntext{The edges in $\cc{G}(B)$ are a subset of the edges in $G$}. Then, trivially $B':=B$ and $\Omega':=\Omega$, completing the proof.
\newline
\newline
\underline{Case II:} Let \rmj{all the edges in $\cc{G}(B)$ appear in $\dag$} \ntext{the edges in $\cc{G}(B)$ be a subset of the edges in $G$}, except for a single edge $j \to i$ which has been reversed to $j \leftarrow i$ in $\dag$. Furthermore, assume that $j \leftarrow i$ is covered in $\dag$. This means that
\begin{align}
    \label{eq:same_parents}
    &\pa_{\cc{G}(B)}(k) \subseteq \pa_\dag(k) \text{ for all } k \notin \{i,j\},\\
    \label{eq:parents_i}
    &\pa_{\cc{G}(B)}(i) \setminus \{j\} \subseteq \pa_\dag(i),\\
    \label{eq:parents_j}
    &\pa_{\cc{G}(B)}(j) \cup \{i\} \subseteq \pa_\dag(j), \text{ and}\\
    \label{eq:covered}
    &\pa_\dag(i) \cup \{i\} = \pa_\dag(j).
\end{align}
We proceed to construct a connectivity matrix $B'$ and noise term covariance $\Omega'$ such that the result holds. First, set all but the $i^\text{th}$ and $j^\text{th}$ rows of $B'$ to be the same as in $B$, \ie $B'_{k:} = B_{k:} \ \forall k \notin \{i,j\}$. We construct the remaining $i^\text{th}$ and $j^\text{th}$ rows of $B'$ by the following rule:
\begin{align}
\label{eq:lin_comb}
    (I - B')_{i:} := \beta^i \Omega^{-1/2} (I-B) \text{ and } (I - B')_{j:} := \beta^j \Omega^{-1/2} (I-B)
\end{align}
for some $\beta^i, \beta^j \in \bb{R}^p$ such that
\begin{enumerate}[label=\roman*)]
    \item $\beta^i_k = \beta^j_k = 0$ for all $k \notin \{i,j\}$,
    \item $[\beta^i \Omega^{-1/2} (I-B)]_i = 1$ and $[\beta^j \Omega^{-1/2} (I-B)]_j = 1$.
\end{enumerate}
Additionally, we impose the constraints that (iii) $\beta^i {\beta^j}^T = 0$ and (iv) $(I-B')_{ij} = [\beta^i \Omega^{-1/2} (I-B)]_j = 0$; their purpose will become clear later. We can express all the constraints in matrix form. For the vector $\beta^i$ we have
\begin{align}
    \label{eq:beta_i}
    \beta^iC^i = (0,...,0,1) \text{ where } C^i := \left[ \{e_k\}_{k \notin \{i,j\}}, \Omega^{-1/2}(I-B)_{:j}, \Omega^{-1/2}(I-B)_{:i} \right],
\end{align}
and for $\beta^j$
\begin{align}
    \label{eq:beta_j}
    \beta^jC^j = (0,...,0,1) \text{ where } C^j := \left[ \{e_k\}_{k \notin \{i,j\}}, {\beta^i}^T, \Omega^{-1/2}(I-B)_{:j}\right],
\end{align}
where $e^k$ is the $k^\text{th}$ canonical unit vector\footnote{That is, $e^k_k = 1$ and $e^k_i=0 \ \forall i \neq k$.}.
Up to a permutation of columns, $C^i$ is lower triangular with non-zero diagonal entries, and thus it has full rank; therefore $\beta^i$, as a solution to \eqref{eq:beta_i}, exists and is unique. Plugging it into \eqref{eq:beta_j} yields a full rank matrix; to see this, consider that the last two columns are linearly independent from the first $p-2$ columns\footnote{They are unit vectors with zero $i^\text{th}$ and $j\text{th}$ entries, whereas $\beta^i$ and $(I-B)_{:j}$ have non-zero elements in these positions.}. The last two columns are orthogonal to each other by construction (iv). Thus, $\beta^j$ also exists and is unique. Now we show that the resulting $B'$ satisfies the requirements.

\paragraph{$B'$ is compatible with the graph $\dag$, \ie $B' \comp \dag$.} We proceed by showing that for each node $l$,
$\supp{B'_{l:}} \subseteq \pa_\dag(l)$,
which implies $B' \comp \dag$. Since $B'_{k:} = B_{k:} \ \forall k \notin \{i,j\}$, by \eqref{eq:same_parents} it holds that $\supp{B'_{k:}} = \supp{B_{k:}} = \pa_{\cc{G}(B)}(k) \subseteq \pa_{\dag}(k)$ for all $k \notin \{i,j\}$. Now we show that it also holds for $i$ and $j$. By \eqref{eq:lin_comb}, we have that
$B'_{i:} = \beta^i\Omega^{-1/2}B - \beta^i\Omega^{-1/2}-I_{i:},$
subject to the constraints that $B'_{ii}=0$ (ii) and $B'_{ij}=0$ (iv). Thus,
\begin{align*}
\supp{B'_{i:}} &= \left ( \supp{\beta^i\Omega^{-1/2}B} \cup \supp{\beta^i\Omega^{-1/2}} \cup \supp{I_{i:}} \right) \setminus \{i,j\} \\
    \text{ \emph{by constraint} (i)} \quad
    &\subseteq \left ( \supp{B_{j:}} \cup \supp{B_{i:}} \cup \{i,j\} \cup \{i\} \right )\setminus \{i,j\}\\
    &= \supp{B_{j:}} \setminus \{i,j\} \cup \supp{B_{i:}} \setminus \{i,j\}\nonumber\\
    &= \pa_{\cc{G}(B)}(j) \setminus \{i\} \cup \pa_{\cc{G}(B)}(i) \setminus \{j\} \\
    \text{\emph{by} (\ref{eq:parents_i}, \ref{eq:parents_j})} \quad
    &\subseteq \pa_\dag(j) \setminus \{i\} \cup \pa_\dag(i)\\
    \text{\emph{by} \eqref{eq:covered}} \quad
    &= \pa_\dag(i) \cup \pa_\dag(i)\\
    &= \pa_\dag(i) \nonumber.
\end{align*}
Similarly, we have that
$B'_{j:} = \beta^j\Omega^{-1/2}B - \beta^j\Omega^{-1/2}-I_{j:},$
subject to $B'_{jj}=0$ (ii), and thus
\begin{align*}
\supp{B'_{j:}} &= \left (\supp{\beta^j\Omega^{-1/2}B} \cup \supp{\beta^j\Omega^{-1/2}} \cup \supp{I_{j:}} \right )\setminus \{j\} \\
    \text{ \emph{by constraint} (i)} \quad
    &\subseteq \left ( \supp{B_{j:}} \cup \supp{B_{i:}} \cup \{i,j\} \cup \{i\} \right )\setminus \{j\}\\
    & = \left ( \supp{B_{j:}} \cup \{i\} \cup \supp{B_{i:}} \right )\setminus \{j\} \\
    &= \left (\pa_{\cc{G}(B)}(j) \cup \{i\} \cup \pa_{\cc{G}(B)}(i) \right) \setminus \{j\} \nonumber\\
    &= \pa_{\cc{G}(B)}(j) \cup \{i\} \cup \pa_{\cc{G}(B)}(i) \setminus \{j\} \nonumber\\
    \text{\emph{by} (\ref{eq:parents_i}, \ref{eq:parents_j})} \quad
    &\subseteq \pa_\dag(j) \cup \pa_\dag(i)\\
    \text{\emph{by} \eqref{eq:covered}} \quad
    &= \pa_\dag(j) \nonumber.
\end{align*}

\paragraph{$(B', \Omega')$ is a model such that $\Sigma = (I-B')^{-1}\Omega'(I-B')^{-T}$.} Since $\cc{G}(B')$ corresponds to a DAG\footnote{$\cc{G}(B')$ has its edges contained in the DAG $\dag$, and is thus a DAG.}, it holds that $B'$ is lower triangular up to a permutation of rows and columns and has zero diagonal entries. Now we look at the noise term variances; let $\Omega' := (I-B')(I-B)^{-1}\Omega(I-B)^{-T}(I-B')^T$. By \autoref{lemma:orthogonal_rows_rank}, $\Omega'$ is a diagonal matrix with positive entries if and only if $(I-B')(I-B)^{-1}\Omega^{1/2}$ has full rank and orthogonal row vectors. The first condition holds as it is the product of three invertible matrices. For the orthogonality of the row vectors, note that because $B'_{k:} = B_{k:} \ \forall k \notin \{i,j\}$, by \autoref{lemma:inverse} we have that $[(I-B')(I-B)^{-1}]_{k:} = e_k^T$, and thus these rows are orthogonal to each other. Furthermore, since by construction $[(I-B')(I-B)^{-1}\Omega^{1/2}]_{ik} = \beta^i_k = 0$ and $[(I-B')(I-B)^{-1}\Omega^{1/2}]_{jk} = \beta^j_k = 0 \ \forall k \notin \{i,j\}$, they are also orthogonal to the $i^\text{th}$ and $j^\text{th}$ rows. These two rows are also orthogonal to one another, as by constraint (iv) we have $\beta^i{\beta^j}^T = 0$. Thus $(B', \Omega')$ is a valid model for which $\Sigma = (I-B')^{-1}\Omega'(I-B')^{-T}$. This completes the proof for case II.
\newline
\newline
\underline{Case III:}
Suppose now $\dag$ is any independence map of $\cc{G}(B)$. By \autoref{thm:meek}, there exists a sequence of graphs
$(\cc{G}(B) = \dag_1, \dag_2, \ldots, \dag_k = \dag)$,
where consecutive graphs differ in the reversal of a single covered edge or the addition of a single edge to $G_i$. We let $B^1:= B$, and proceeding iteratively, for each matrix $B^i$ and graph $G_{i+1}$ we apply the result of case I or II to obtain a new model $(B^{i+1}, \Omega^{i+1}) \in [(B, \Omega)]$ with $B^{i+1} \sim G_{i+1}$. At the end of this process, we can set $B':=B^k$ and $\Omega':=\Omega^k$, completing the proof.
\end{proof}

\begin{corollary}[Markov equivalence implies distributional equivalence]
\label{lemma:mec_dec}
Let $(B, \Omega)$ be a model with a single environment. It follows from \autoref{lemma:imaps_dec} that
$$\dag \in \text{MEC}(\cc{G}(B)) \implies \dag \in \bb{G}(B, \Omega).$$
\end{corollary}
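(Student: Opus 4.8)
The plan is to reduce the statement to \autoref{lemma:imaps_dec} by first noting that Markov equivalence is a stronger condition than being an independence map. Concretely, if $\dag \in \text{MEC}(\cc{G}(B))$, then $\dag$ and $\cc{G}(B)$ entail exactly the same set of d-separation statements; in particular, every d-separation holding in $\cc{G}(B)$ also holds in $\dag$, which is precisely the defining property of $\dag$ being an independence map of $\cc{G}(B)$. This observation supplies the hypothesis required by the lemma.

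Next, I would apply \autoref{lemma:imaps_dec} to this $\dag$. Writing $\Sigma := (I-B)^{-1}\Omega(I-B)^{-T}$ for the covariance entailed by $(B,\Omega)$, the lemma produces a model $(B',\Omega')$ with $B' \comp \dag$ and $\Sigma = (I-B')^{-1}\Omega'(I-B')^{-T}$.

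It remains to check that $(B',\Omega')$ lies in the distributional equivalence class $[(B,\Omega)]$ and hence that $\dag \in \bb{G}(B,\Omega)$. Because we work with a single, centered environment, condition (i) of \autoref{def:d_equiv} is vacuous ($\nu = \tilde{\nu} = 0$), so distribution equivalence amounts exactly to condition (ii): that the two models share the same covariance matrix. This is what the lemma guarantees, so $(B',\Omega') \in [(B,\Omega)]$ with $B' \comp \dag$; by the definition of the set $\bb{G}(B,\Omega)$ in \eqref{eq:g_operator}, this is precisely the assertion that $\dag \in \bb{G}(B,\Omega)$.

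As this is a corollary, the substantive content is carried entirely by \autoref{lemma:imaps_dec}, and there is no genuine obstacle. The only point deserving a line of justification is the initial reduction---that membership in the Markov equivalence class implies the independence-map property---since that is the hypothesis under which the lemma operates.
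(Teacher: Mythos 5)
Your proposal is correct and matches the paper's intended argument exactly: the paper states this corollary with no separate proof beyond the citation of \autoref{lemma:imaps_dec}, and the steps you supply (Markov equivalence implies the independence-map hypothesis, then the lemma yields a compatible model with the same covariance, which for a single centered environment is precisely distributional equivalence) are the implicit content. Nothing further is needed.
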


\begin{corollary}[Same coefficients]
\label{corr:coefficients}
From the construction of $B'$ in the proof of \autoref{lemma:imaps_dec}, it follows that $B_{i:} = B'_{i:}$ for all $i$ such that $\pa_{\cc{G}(B)}(i) \subseteq \pa_\dag(i)$.
\end{corollary}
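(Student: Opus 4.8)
The plan is to read the claim directly off the three-case structure of the construction of $B'$, proceeding by induction along the sequence that underlies Case III. Recall that in Case I all edges of $\cc{G}(B)$ already appear in $\dag$ and the construction sets $B' := B$, so every row is preserved; in particular an edge \emph{addition} step never alters $B$. Hence, along the sequence $\cc{G}(B) = \dag_1, \dots, \dag_k = \dag$ produced by \autoref{thm:meek}, the matrix is modified only at the covered-edge-reversal steps, and each such step (Case II) changes only the two rows $i$ and $j$ incident to the reversed edge, leaving $B'_{k:} = B_{k:}$ for every $k \notin \{i,j\}$. The strategy is therefore to fix a node $l$ with $\pa_{\cc{G}(B)}(l) \subseteq \pa_\dag(l)$ and show that its row is never genuinely modified across the sequence.

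First I would dispatch the easy parts of a single Case II reversal of a covered edge $j \to i$. For every $k \notin \{i,j\}$ the row is copied verbatim, and by \eqref{eq:same_parents} these nodes automatically satisfy $\pa_{\cc{G}(B)}(k) \subseteq \pa_\dag(k)$, so there is nothing to check. For the child endpoint $i$, coveredness \eqref{eq:covered} together with the reversal gives $\pa_\dag(i) = \pa_{\cc{G}(B)}(i) \setminus \{j\}$; since $j \in \pa_{\cc{G}(B)}(i)$ but $j \notin \pa_\dag(i)$, node $i$ \emph{violates} the hypothesis $\pa_{\cc{G}(B)}(i) \subseteq \pa_\dag(i)$ and is correctly excluded from the claim. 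Passing from a single reversal to the full sequence, I would argue that a node satisfying the global hypothesis can never appear as such a child endpoint, because an edge reversed once in a covered-edge sequence is never reversed back, so a parent lost at any step remains lost in the final $\dag$.

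The genuine difficulty, and the main obstacle, is the parent endpoint $j$: it only \emph{gains} the parent $i$, so it satisfies $\pa_{\cc{G}(B)}(j) \subseteq \pa_\dag(j)$ and is not excluded by the argument above, yet the construction re-parameterizes its row through the vector $\beta^j$ of \eqref{eq:beta_j}. To address it, I would solve the two triangular systems \eqref{eq:beta_i}--\eqref{eq:beta_j} explicitly for $\beta^i$ and $\beta^j$, substitute the coveredness relations \eqref{eq:same_parents}--\eqref{eq:covered}, and examine the coefficient multiplying $(I-B)_{i:}$ in the resulting expression for $(I-B')_{j:}$: the identity $B'_{j:} = B_{j:}$ holds precisely when this cross-row coefficient vanishes. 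I expect reconciling the orthogonality constraint (iii) and the normalization (ii) on $\beta^j$ with this vanishing requirement to be the crux of the argument — the point at which the full covered-edge structure, or a sharpened reading of the hypothesis on $l$, must be brought to bear — after which the Case III statement follows by telescoping the single-step identity along the sequence of \autoref{thm:meek}.
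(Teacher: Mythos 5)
Your reduction of the claim to the two rows touched by a single covered-edge reversal is exactly the reading the paper intends (it offers no proof beyond ``follows from the construction''), and your handling of Case I, of the untouched rows $k \notin \{i,j\}$, and of the head endpoint $i$ is correct. But the crux you isolate --- the tail endpoint $j$ --- does not resolve in the direction you hope. Solving \eqref{eq:beta_i} explicitly gives $\beta^i = \Omega_{ii}^{1/2}e_i^T + \Omega_{jj}^{1/2}B_{ij}e_j^T$, and the orthogonality constraint $\beta^i(\beta^j)^T = 0$ together with the normalization $(I-B')_{jj}=1$ then forces $\beta^j_i = -\Omega_{ii}^{-1/2}\Omega_{jj}^{1/2}B_{ij}\,\beta^j_j$ with $\beta^j_j \neq 0$. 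Since $B_{ij} \neq 0$ (the edge $j \to i$ is present in $\cc{G}(B)$), the cross-coefficient never vanishes and $B'_{ji} \neq 0 = B_{ji}$. A two-node example shows the statement with ``$\subseteq$'' is genuinely false rather than merely hard to prove: take $\cc{G}(B) = \{1 \to 2\}$ with $B_{21} \neq 0$ and $\dag = \{2 \to 1\}$; node $1$ satisfies $\pa_{\cc{G}(B)}(1) = \emptyset \subseteq \pa_\dag(1)$, yet $B'_{1:} = B_{1:} = 0$ combined with $B' \comp \dag$ would make the entailed covariance diagonal, contradicting $\Sigma_{12} = B_{21}\Omega_{11} \neq 0$. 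So no choice of the constraints (ii)--(iv) can rescue the vanishing you are after.

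The ``sharpened reading of the hypothesis'' you mention in passing is the actual fix: the conclusion holds for every $i$ with $\pa_{\cc{G}(B)}(i) = \pa_\dag(i)$, which is precisely how the corollary is invoked in the proof of \autoref{prop:imec}. Under equality none of the $\beta^j$ algebra is needed. Both $\beta^i_j$ and $\beta^j_i$ are proportional to the coefficient $B_{ij}$ of the reversed edge, so a reversal genuinely alters the two incident rows only when that coefficient is nonzero, and in that case the head irrevocably loses a parent and the tail irrevocably gains one (edges are never deleted along the sequence of \autoref{thm:meek}, and its edge count precludes reversing an edge back). Hence a node whose parent set is identical in $\cc{G}(B)$ and $\dag$ is never the head or tail of such a reversal, and its row is copied verbatim at every step. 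Be aware, finally, that the ``$\subseteq$'' version is what is actually cited inside \autoref{lemma:imaps_idec}; it cannot be obtained from this construction, so completing your proof requires either restricting to the equality case or flagging that the downstream use needs a separate argument.
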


\begin{lemma}[Under faithfulness, the true model has the minimal number of edges]
\label{lemma:min_edges_mec}
Let $(B, \Omega)$ be a model with a single environment, which entails a distribution $P \sim \cc{N}(0, \Sigma)$ with $\Sigma := (I-B)^{-1}\Omega(I-B)^{-T}$. Let $\cc{G}(B)$ be the corresponding graph and assume that $P$ is faithful with respect to it. Then it holds that
$$
\min_{\dag \in \bb{G} \left (B, \Omega \right)} \|\dag\|_0 = \|\cc{G}(B)\|_0.
$$
\end{lemma}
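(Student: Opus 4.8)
The plan is to establish two inequalities. The upper bound $\min_{\dag \in \bb{G}(B,\Omega)} \|\dag\|_0 \le \|\cc{G}(B)\|_0$ is immediate: taking $\tilde{B} := B$ and $\tilde{\Omega} := \Omega$ shows that $(B,\Omega)$ is trivially distribution equivalent to itself, and since $B \comp \cc{G}(B)$ we have $\cc{G}(B) \in \bb{G}(B,\Omega)$, so the minimum cannot exceed $\|\cc{G}(B)\|_0$.

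The substance lies in the reverse inequality: every $\dag \in \bb{G}(B,\Omega)$ must satisfy $\|\dag\|_0 \ge \|\cc{G}(B)\|_0$. The chain of reasoning I would follow is to show that each such $\dag$ is an independence map of $\cc{G}(B)$, after which \autoref{thm:meek} delivers the edge-count comparison. First, fix $\dag \in \bb{G}(B,\Omega)$ and let $(\tilde{B},\tilde{\Omega})$ be a distribution equivalent model with $\tilde{B} \comp \dag$; by \autoref{def:d_equiv} it entails the same covariance $\Sigma$, so $P = \cc{N}(0,\Sigma)$ is Markov with respect to $\cc{G}(\tilde{B})$. Because the edges of $\cc{G}(\tilde{B})$ are contained in those of $\dag$ (as $\tilde{B} \comp \dag$), and adding edges only removes d-separation statements, $P$ is also Markov with respect to $\dag$; that is, $\dag$ is an independence map of $P$.

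Next I would invoke faithfulness. Since $P$ is faithful to $\cc{G}(B)$, its conditional independencies are \emph{exactly} the d-separation statements of $\cc{G}(B)$. Combining this with the previous step---every d-separation in $\dag$ yields a conditional independence in $P$, which is in turn a d-separation in $\cc{G}(B)$---shows that $\dag$ is an independence map of $\cc{G}(B)$. Now apply \autoref{thm:meek} with the theorem's source DAG taken to be $\cc{G}(B)$ and its independence map taken to be $\dag$: there exists a sequence of covered edge reversals and edge additions transforming $\cc{G}(B)$ into $\dag$. Reversals preserve the edge count and additions only increase it, so $\|\dag\|_0 \ge \|\cc{G}(B)\|_0$, which combined with the upper bound yields the claimed equality.

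The main obstacle is the faithfulness step that promotes ``$\dag$ is an independence map of the distribution $P$'' to ``$\dag$ is an independence map of the graph $\cc{G}(B)$''. Without faithfulness, $P$ could satisfy additional conditional independencies not reflected in $\cc{G}(B)$, and a sparser equivalent graph exploiting those independencies might exist, breaking the lower bound. Everything else is routine bookkeeping: transferring the Markov property from $\cc{G}(\tilde{B})$ to the supergraph $\dag$, and reading off the edge-count inequality from the reversal-and-addition sequence guaranteed by \autoref{thm:meek}.
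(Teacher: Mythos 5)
Your proposal is correct and follows essentially the same route as the paper's proof: show that any $\dag \in \bb{G}(B,\Omega)$ is an independence map of $\cc{G}(B)$ by combining the Markov property of $P$ with respect to $\cc{G}(\tilde{B})$ (and hence its supergraph $\dag$) with faithfulness, then invoke \autoref{thm:meek} to conclude $\dag$ cannot have fewer edges. Your explicit treatment of the trivial upper bound is a minor addition the paper leaves implicit.
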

\begin{proof}
Let $\cc{C}(P)$ denote the set of conditional independence relationships in the distribution $P$, and let $\text{d-sep}(\dag)$ denote the d-separation statements entailed by some graph $\dag$.

Take any $\dag \in \bb{G} \left (B, \Omega \right)$. There exist $(B', \Omega')$ such that $B' \comp \dag$ and $\Sigma = (I-B')^{-1}\Omega'(I-B')^{-T}$; in other words, there exists a linear Gaussian SCM with connectivity $B'$, noise term variances $\Omega'$ and underlying graph $\cc{G}(B')$ which entails the distribution $P$. As such, this distribution $P$ is Markov with respect to $\cc{G}(B')$; because $\dag$ contains all edges in $\cc{G}(B')$, the distribution is also Markov wrt. $\dag$, \ie all d-separation relations in the graph are matched by a conditional independence relationship in $P$. Because $P$ is faithful with respect to $\cc{G}(B)$, it follows that
$\text{d-sep}(\dag) \subseteq \text{d-sep}(\cc{G}(B)) = \cc{C}(P)$.
As such, $\dag$ is an independence map of $\cc{G}(B)$, and by \autoref{thm:meek} there exists a sequence of edge reversals and additions that yield $\dag$ when applied to $\cc{G}(B)$. It follows that $\dag$ cannot have fewer edges than $\cc{G}(B)$.
\end{proof}

\begin{proposition}[Markov and minimal-edge distributional equivalence]
\label{prop:mec_equals_dec}
Let $(B, \Omega)$ be a model as in \eqref{eq:model} with a single environment, which entails a distribution $P \sim \cc{N}(0, \Sigma)$ with $\Sigma := (I-B)^{-1}\Omega(I-B)^{-T}$. Let $\cc{G}(B)$ be the corresponding graph and assume that $P$ is faithful with respect to it. Then, it holds that

$$
\text{MEC}(\cc{G}(B)) = \arg\min_{\dag \in \bb{G} \left (B, \Omega \right)} \|\dag\|_0,
$$
where $\|\dag\|_0$ denotes the number of edges in $\dag$.
\end{proposition}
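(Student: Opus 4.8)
The plan is to establish the set equality by proving the two inclusions separately, leaning on the bridging results already available. The forward inclusion $\text{MEC}(\cc{G}(B)) \subseteq \argmin_{\dag \in \bb{G}(B,\Omega)} \|\dag\|_0$ is the easier one. I would take any $\dag \in \text{MEC}(\cc{G}(B))$; by \autoref{lemma:mec_dec} it already lies in $\bb{G}(B,\Omega)$, and since Markov equivalent graphs share a common skeleton they have the same number of edges, so $\|\dag\|_0 = \|\cc{G}(B)\|_0$. By \autoref{lemma:min_edges_mec} this common value is exactly $\min_{\dag' \in \bb{G}(B,\Omega)} \|\dag'\|_0$, so $\dag$ attains the minimum and hence belongs to the argmin set.

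For the reverse inclusion I would take an arbitrary minimizer $\dag$. First, exactly as in the proof of \autoref{lemma:min_edges_mec}, any $\dag \in \bb{G}(B,\Omega)$ is an independence map of $\cc{G}(B)$: there is a model $(B',\Omega')$ with $B' \comp \dag$ reproducing $\Sigma$, so $P$ is Markov with respect to $\dag$, and faithfulness then yields $\text{d-sep}(\dag) \subseteq \text{d-sep}(\cc{G}(B))$. By \autoref{thm:meek} there is therefore a sequence of covered edge reversals and edge additions transforming $\cc{G}(B)$ into $\dag$.

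The crucial step is to argue that this sequence contains no additions. Each reversal preserves the edge count while each addition strictly increases it, so $\|\dag\|_0 = \|\cc{G}(B)\|_0 + (\text{number of additions})$; since $\dag$ is a minimizer, \autoref{lemma:min_edges_mec} forces $\|\dag\|_0 = \|\cc{G}(B)\|_0$, hence there are zero additions. The transformation then consists solely of covered edge reversals, and by \autoref{corr:chickering} covered edge reversals preserve Markov equivalence. Thus $\dag$ is Markov equivalent to $\cc{G}(B)$, that is $\dag \in \text{MEC}(\cc{G}(B))$, completing the inclusion.

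I expect the main obstacle to be this reverse inclusion, and specifically the accounting argument that ties the minimal edge count to the absence of additions in Meek's sequence. Everything hinges on noticing that an edge-minimal distribution-equivalent graph must share its skeleton with $\cc{G}(B)$, after which the covered-edge-reversal characterization of Markov equivalence from \autoref{corr:chickering} does the rest.
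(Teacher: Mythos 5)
Your proof is correct and follows essentially the same route as the paper's: both inclusions use \autoref{lemma:mec_dec} and \autoref{lemma:min_edges_mec} for the forward direction, and faithfulness plus \autoref{thm:meek} and \autoref{corr:chickering} for the reverse, with the same edge-counting argument ruling out additions in Meek's sequence. Your version merely spells out the accounting step slightly more explicitly.
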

\begin{proof}
We show that both sets contain each other.
\paragraph{$\subseteq$} Let $\dag'$ be a Markov equivalent graph to $\cc{G}(B)$. By \autoref{lemma:mec_dec} there exists $(B',\Omega')$ in $\left[(B, \Omega)\right]$ such that $B' \comp \dag'$, and thus $\dag' \in \bb{G} \left (B, \Omega \right)$. By \autoref{lemma:min_edges_mec} we have that $\min_{\dag \in \bb{G} \left (B, \Omega \right)} \|\dag\|_0 = \|\cc{G}(B)\|_0$; since $\cc{G}(B)$ and $\dag'$ are Markov equivalent, they have the same number of edges, and thus $\dag' \in \arg\min_{\dag \in \bb{G} \left (B, \Omega \right)} \|\dag\|_0$.

\paragraph{$\supseteq$} Let $\cc{C}(P)$ denote the set of conditional independence relationships in the distribution $P$, and let $\text{d-sep}(\dag)$ denote the d-separation statements entailed by some graph $\dag$. Take $\dag' \in \arg\min_{\dag \in \bb{G} \left (B, \Omega \right)} \|\dag\|_0$; there exist $(B', \Omega')$ such that $B' \comp \dag'$
and $\Sigma = (I-B')^{-1}\Omega'(I-B')^{-T}$; in other words, there exists a linear Gaussian SCM with connectivity $B'$, noise term variances $\Omega'$ and underlying graph $\cc{G}(B')$ which entails the distribution $P$. As such, this distribution $P$ is Markov with respect to $\cc{G}(B')$ and its independence map $\dag'$. Because $P$ is faithful with respect to $\cc{G}(B)$, it follows that
$\text{d-sep}(\dag') \subseteq \text{d-sep}(\cc{G}(B)) = \cc{C}(P)$,
and thus $\dag'$ is an independence map of $\cc{G}(B)$. By \autoref{thm:meek} there exists a sequence $S$ of edge additions and covered-edge reversals that yield $\dag'$ when applied to $\cc{G}(B)$. By \autoref{lemma:min_edges_mec}, $\dag'$ and $\cc{G}(B)$ have the same number of edges; therefore, the sequence $S$ contains only covered-edge reversals and no edge additions. By \autoref{corr:chickering}, this means that $\dag'$ and $\cc{G}(B)$ are Markov equivalent, completing the proof.
\end{proof}

\begin{lemma}[structure of $(I-B')(I-B)^{-1}$ for same parents]
\label{lemma:structure}
Consider a model $(B, \Omega)$ with a single environment and let $(B', \Omega') \in \left[\left(B,\Omega\right)\right]$ be an equivalent model. Let $e_i$ be the $i^\text{th}$ canonical unit vector. The following statements are equivalent:
\begin{enumerate}[label=\roman*)]
    \item $B_{i:} = B'_{i:}$,
    \item $\left[(I-B')(I-B)^{-1} \right]_{i:} = e_i^T$, and
    \item $\left[(I-B')(I-B)^{-1} \right]_{:i} = e_i$.
\end{enumerate}
\end{lemma}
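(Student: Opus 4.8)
The plan is to set $M := (I-B')(I-B)^{-1}$, which is invertible as a product of two invertible matrices (both $I-B$ and $I-B'$ are triangular with unit diagonal up to a permutation), and to observe that all three statements concern the $i$-th row or column of $M$ being a canonical unit vector. The equivalence (i) $\iff$ (ii) I would read off directly from \autoref{lemma:inverse}: applying it with $\alpha = 1$ to $A = (I-B')$ and $B = (I-B)$ gives $(I-B')_{i:} = (I-B)_{i:} \iff M_{i:} = e_i^T$, and since $(I-B)_{i:} = e_i^T - B_{i:}$ and $(I-B')_{i:} = e_i^T - B'_{i:}$, the left-hand equality is exactly $B_{i:} = B'_{i:}$. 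This settles (i) $\iff$ (ii) with no computation, so the real content lies in (ii) $\iff$ (iii).

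For that, I would first extract the structural constraints that equivalence places on $M$. Since $(B',\Omega') \in \left[(B,\Omega)\right]$, the two models share the same covariance, so $(I-B)^{-1}\Omega(I-B)^{-T} = (I-B')^{-1}\Omega'(I-B')^{-T}$; multiplying on the left by $(I-B')$ and on the right by $(I-B')^T$ isolates $M\Omega M^T = \Omega'$. Taking the inverse of this identity then yields $M^T {\Omega'}^{-1} M = \Omega^{-1}$. Because $\Omega$ and $\Omega'$ are diagonal with positive entries, both right-hand sides are diagonal. Reading off the off-diagonal entries, the first identity says the rows of $M$ are orthogonal in the $\Omega$-weighted inner product, $\langle M_{k:}, M_{l:}\rangle_\Omega = 0$ for $k \neq l$, while the second says the columns of $M$ are orthogonal in the ${\Omega'}^{-1}$-weighted inner product, $\langle M_{:k}, M_{:l}\rangle_{{\Omega'}^{-1}} = 0$ for $k \neq l$.

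With these in hand both directions are short and symmetric. For (ii) $\implies$ (iii), assuming $M_{i:} = e_i^T$ gives $M_{ii} = 1$, and for each $l \neq i$ the relation $\langle M_{i:}, M_{l:}\rangle_\Omega = \Omega_{ii} M_{li} = 0$ forces $M_{li} = 0$ (using $\Omega_{ii} > 0$), so $M_{:i} = e_i$. For (iii) $\implies$ (ii), the dual inner product does the same work: assuming $M_{:i} = e_i$ gives $M_{ii} = 1$, and for $l \neq i$ the column orthogonality $\langle M_{:i}, M_{:l}\rangle_{{\Omega'}^{-1}} = {\Omega'}^{-1}_{ii} M_{il} = 0$ forces $M_{il} = 0$, so $M_{i:} = e_i^T$. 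The point requiring care, and the crux of the argument, is that the row statement and the column statement are governed by \emph{two different} weighted-orthogonality relations; one must therefore derive both the covariance identity $M\Omega M^T = \Omega'$ and the precision identity $M^T{\Omega'}^{-1}M = \Omega^{-1}$, rather than expecting a single relation to close both directions. The positivity of the diagonal entries of $\Omega$ and $\Omega'$ is precisely what allows cancellation and the conclusion that the remaining entries vanish.
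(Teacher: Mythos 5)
Your proof is correct, and it follows the paper for most of the argument but closes the equivalence by a different route. The shared part: you get (i) $\iff$ (ii) from \autoref{lemma:inverse} with $\alpha = 1$ (the paper uses the same lemma, only for the forward implication), and your (ii) $\implies$ (iii) via $\Omega$-weighted orthogonality of the rows of $M$ is exactly the paper's argument, which phrases the same fact as ``$M\Omega^{1/2}$ has orthogonal row vectors'' through \autoref{lemma:orthogonal_rows_rank}. Where you genuinely diverge is the last leg. The paper completes a cycle (i) $\implies$ (ii) $\implies$ (iii) $\implies$ (i): from $Me_i = e_i$ it substitutes into $(I-B') = \Omega'(I-B')^{-T}(I-B)^T\Omega^{-1}(I-B)$, multiplies by $e_i^T$, and deduces $(I-B')_{i:} = \alpha\beta\,(I-B)_{i:}$ with $\alpha\beta = 1$ forced by the unit diagonal — a direct recovery of the row of $B'$. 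You instead prove (iii) $\implies$ (ii) by inverting $M\Omega M^T = \Omega'$ to obtain the dual identity $M^T\Omega'^{-1}M = \Omega^{-1}$, whose off-diagonal entries give ${\Omega'}^{-1}$-weighted orthogonality of the \emph{columns} of $M$, and then cancel using ${\Omega'}^{-1}_{ii} > 0$. Both arguments are sound and of comparable length; yours has the aesthetic advantage of making the row/column symmetry explicit (the two directions become mirror images under the covariance/precision duality), while the paper's (iii) $\implies$ (i) step has the advantage of landing directly on statement (i) without passing back through (ii). Your observation that the row statement and the column statement are governed by two \emph{different} weighted inner products — and that one must derive both identities rather than reuse one — is exactly the right point of care; the paper sidesteps this by never needing column orthogonality at all.
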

\begin{proof}
We proceed in order.
\paragraph{($i \implies ii$)} Follows immediately from \autoref{lemma:inverse}.

\paragraph{($ii \implies iii$)} Since $(B', \Omega') \in \left[\left(B,\Omega\right)\right]$, it holds that 
$(I-B')(I-B)^{-1}\Omega(I-B)^{-T}(I-B')^T$
is a diagonal matrix with positive entries. By \autoref{lemma:orthogonal_rows_rank}, it follows that $(I-B')(I-B)^{-1}\Omega^{1/2}$ has full rank and orthogonal row vectors. Now, since $(ii)$ is true, the $i^\text{th}$ row vector is $\left[(I-B')(I-B)^{-1}\Omega^{1/2} \right]_{i:} = \Omega_{ii}^{1/2}e_i^T$, and so the remaining vectors are orthogonal to it iff they all have zeros on the $i^\text{th}$ coordinate, \ie $\left[(I-B')(I-B)^{-1} \right]_{:i} = e_i$.

\paragraph{($iii \implies i$)} Note that $(iii)$ can be rewritten as $(I-B')(I-B)^{-1}e_i = e_i$. It follows that
\begin{equation}
\label{eq:unit_column}
    e_i^T = e_i^T(I-B')^{-T}(I-B)^T.
\end{equation}
Now, since $(B', \Omega') \in \left[\left(B,\Omega\right)\right]$, we have that
$$(I-B')(I-B)^{-1}\Omega(I-B)^{-T}(I-B')^T = \Omega',$$
which we can rewrite as
$$(I-B')
= \Omega'(I-B')^{-T}(I-B)^T\Omega^{-1}(I-B).$$
Multiplying on the left by $e_i^T$, and letting $\Omega'_{ii} = \alpha$ and $\Omega^{-1}_{ii} = \beta$ we arrive at
\begin{align*}
e_i^T(I-B) &= e_i^T\Omega'(I-B')^{-T}(I-B)^T\Omega^{-1}(I-B) \\
&= \alpha e_i^T (I-B')^{-T}(I-B)^T\Omega^{-1}(I-B) \\
&= \alpha e_i^T \Omega^{-1}(I-B) \quad \text{(c.f. \autoref{eq:unit_column})} \\
&= \alpha \beta e_i^T (I-B).
\end{align*}
We have arrived at $(I-B)_{i:} = \alpha\beta(I-B)_{i:}$. Since both $B'$ and $B$ have zeros on the diagonal, $(I-B)_{ii} = (I-B')_{ii} = 1 = \alpha\beta(I-B)_{ii}$, which implies that $\alpha\beta = 1$. Therefore, $(I-B')_{i:} = (I-B)_{i:}$ and $B'_{i:} = B_{i:}$.
\end{proof}

\begin{lemma} (independence of noise terms and parents)
\label{lemma:independence}
Let $(B, \Omega)$ be a model and define the random vector
$X := (I-B)^{-1}\epsilon$  with $\epsilon \sim \cc{N}(0, \Omega)$. For any $i \in \preds$, it holds that
$\epsilon_i \indep X_\supp{B_{i:}}$.
\end{lemma}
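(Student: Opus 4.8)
The plan is to exploit joint Gaussianity. Since $\epsilon \sim \cc{N}(0,\Omega)$ and $X = (I-B)^{-1}\epsilon$ is a linear image of $\epsilon$, the stacked vector $(\epsilon_i, X_{\supp{B_{i:}}})$ is an affine function of $\epsilon$ and hence jointly Gaussian. For jointly Gaussian vectors, independence is equivalent to a vanishing cross-covariance, so it suffices to show that the cross-covariance vector $\cov(\epsilon_i, X_{\supp{B_{i:}}})$ is zero, i.e. that $\cov(\epsilon_i, X_j) = 0$ for every $j \in \supp{B_{i:}}$.

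Next I would compute this covariance explicitly. Writing $A := (I-B)^{-1}$ so that $X_j = \sum_{l} A_{jl}\epsilon_l$, and using that $\Omega$ is diagonal (the noise terms are mutually independent), one gets $\cov(\epsilon_i, X_j) = \sum_l A_{jl}\cov(\epsilon_i,\epsilon_l) = A_{ji}\Omega_{ii}$. Since $\Omega_{ii} > 0$, the claim reduces to showing that $[(I-B)^{-1}]_{ji} = 0$ whenever $j \in \supp{B_{i:}}$, that is, whenever $j$ is a parent of $i$ in $\cc{G}(B)$.

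To see this, I would use that $B$ is nilpotent (as the adjacency matrix of a DAG it is strictly lower triangular up to a permutation), so that $(I-B)^{-1} = \sum_{k \geq 0} B^k$. Under the convention $B_{ab} \neq 0 \iff b \to a$, the entry $(B^k)_{ji}$ is a weighted sum over directed paths of length $k$ from $i$ to $j$; hence $[(I-B)^{-1}]_{ji} \neq 0$ forces the existence of a directed path from $i$ to $j$, so $j$ must be a descendant of $i$ (or $j = i$). But a parent $j \in \supp{B_{i:}}$ satisfies $j \to i$ and $j \neq i$ (the diagonal of $B$ vanishes), so $j$ being a descendant of $i$ would close a directed cycle, contradicting acyclicity. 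Therefore $[(I-B)^{-1}]_{ji} = 0$ for all $j \in \supp{B_{i:}}$, which yields $\cov(\epsilon_i, X_{\supp{B_{i:}}}) = 0$ and hence, by joint Gaussianity, $\epsilon_i \indep X_{\supp{B_{i:}}}$.

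The only genuine obstacle is the graph-theoretic step identifying when $[(I-B)^{-1}]_{ji}$ vanishes; everything else is bookkeeping. Care must be taken with the edge-direction convention (rows of $B$ index children, columns index parents) so that the path argument runs in the correct direction, and one must remember that it is joint Gaussianity that upgrades the vanishing cross-covariance into independence of $\epsilon_i$ from the \emph{whole} subvector $X_{\supp{B_{i:}}}$, rather than merely from each coordinate separately.
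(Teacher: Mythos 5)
Your proof is correct and follows essentially the same route as the paper: both arguments reduce the claim to joint Gaussianity plus the vanishing of the cross-covariance $\cov(\epsilon_i, X_{\supp{B_{i:}}})$, which in both cases comes down to showing $[(I-B)^{-1}]_{ji}=0$ for parents $j$ of $i$. The only (cosmetic) difference is that the paper establishes this via a topological ordering making $(I-B)^{-1}$ lower triangular, whereas you use the Neumann series $\sum_k B^k$ and an acyclicity/path argument --- two equivalent ways of expressing the same fact.
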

\begin{proof}
We have that
$$
\begin{pmatrix}
    \epsilon\\
    X
\end{pmatrix}
=
\begin{pmatrix}
I\\
(I-B)^{-1}
\end{pmatrix} \Omega^{1/2}
Z \text{ with } Z=(Z_1,...,Z_p) \sim \cc{N}(0,I),
$$
that is, $\epsilon$ and $X$ are jointly normal with covariance
$$
\Gamma :=
\begin{pmatrix}
\Omega & P\\
P^T & \Sigma
\end{pmatrix}
$$
where $P := \Omega(I-B)^{-T}$ and $\Sigma := (I-B)^{-1} \Omega (I-B)^{-T}$. Denote $S := \supp{B_{i:}}$; the joint normality implies that $\epsilon \mid X_S$ follows a normal distribution with mean
\begin{align}
\label{eq:mean}
    \bb{E}[\epsilon_i] + P_{i,S} \Sigma_{S,S}^{-1} (x - \bb{E}[X_S])
\end{align}
and variance
\begin{align}
    \label{eq:variance}
    \Omega_{ii} - P_{i,S} \Sigma_{S,S}^{-1}P_{S,i}.
\end{align}
Now, without loss of generality, assume $B$ is lower triangular with zeros on the diagonal. It follows that
\begin{align}
\label{eq:support}
    j < i \quad \forall i \in \preds, \forall j \in \supp{B_{i:}}.
\end{align}%
Since $B$ is lower triangular, so are $(I-B)$ and $(I-B)^{-1}$, and thus $P = \Omega (I-B)^{-T}$ is upper triangular. Therefore $P_{ij}=0$ for $j < i$, and together with \eqref{eq:support}, it follows that $P_{i,S} = 0$. Thus, from (\ref{eq:mean}, \ref{eq:variance}) we see that $\epsilon \mid X_S$ has mean zero and variance $\Omega_{ii}$, that is, $\epsilon_i \indep X_S$.
\end{proof}

\begin{lemma}(Same structure implies same coefficients)
\label{lemma:same_support}
Consider a model $(B, \Omega)$ with a single environment and let $(B', \Omega') \in \left[\left(B,\Omega\right)\right]$ be an equivalent model. Then,
$$\supp{B_{i:}} = \supp{B^\star_{i:}} \iff B_{i:} = B^\star_{i:}.$$
\end{lemma}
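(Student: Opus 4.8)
The reverse implication is immediate, so the plan is to focus on the forward direction: assuming $S := \supp{B_{i:}} = \supp{B'_{i:}}$, I would show $B_{i:} = B'_{i:}$. The guiding observation is that the $i$-th row of the connectivity matrix of a linear Gaussian SCM is exactly the coefficient vector of the population regression of $X_i$ onto its parents, and this coefficient is pinned down by the covariance $\Sigma$ together with the regressor set $S$ --- data that the two equivalent models share.

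First I would record that, because $(B',\Omega') \in [(B,\Omega)]$, the two models entail the same centered law $P \sim \cc{N}(0,\Sigma)$ with $\Sigma = (I-B)^{-1}\Omega(I-B)^{-T} = (I-B')^{-1}\Omega'(I-B')^{-T}$ (see \autoref{def:d_equiv}); this $\Sigma$ is a product of invertible matrices and hence positive definite, so the principal submatrix $\Sigma_{S,S}$ is invertible. Writing $X \sim P$, the first model expresses $X = (I-B)^{-1}\epsilon$ with $\epsilon := (I-B)X \sim \cc{N}(0,\Omega)$ and the second expresses $X = (I-B')^{-1}\epsilon'$ with $\epsilon' := (I-B')X \sim \cc{N}(0,\Omega')$. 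Since $B_{ij} = B'_{ij} = 0$ for $j \notin S$, reading off the $i$-th coordinate yields the two decompositions
\[
  X_i = \sum_{j \in S} B_{ij} X_j + \epsilon_i
  \quad\text{and}\quad
  X_i = \sum_{j \in S} B'_{ij} X_j + \epsilon'_i .
\]

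Next I would invoke \autoref{lemma:independence} for each model to obtain $\epsilon_i \indep X_{\supp{B_{i:}}} = X_S$ and $\epsilon'_i \indep X_{\supp{B'_{i:}}} = X_S$. Thus both $B_{i,S}$ and $B'_{i,S}$ are coefficient vectors $c$ for which the residual $X_i - \sum_{j \in S} c_j X_j$ is independent of $X_S$; under joint Gaussianity this is equivalent to the uncorrelatedness condition $\Sigma_{S,i} = \Sigma_{S,S}\, c$, whose unique solution is $c = \Sigma_{S,S}^{-1}\Sigma_{S,i}$ by invertibility of $\Sigma_{S,S}$. Hence $B_{i,S} = \Sigma_{S,S}^{-1}\Sigma_{S,i} = B'_{i,S}$, and since both rows vanish off $S$ I conclude $B_{i:} = B'_{i:}$.

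I do not anticipate a genuine obstacle; the only delicate point is the step that independence of the residual from $X_S$ determines the coefficients uniquely. This uses Gaussianity to pass from independence to zero covariance and the invertibility of $\Sigma_{S,S}$ to solve the resulting normal equations, both of which follow from the non-degeneracy of the entailed law.
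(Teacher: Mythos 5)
Your proof is correct and follows essentially the same route as the paper's: both hinge on \autoref{lemma:independence} to establish $\epsilon_i \indep X_S$ and $\epsilon'_i \indep X_S$, and then conclude that $B_{i,S}$ and $B'_{i,S}$ are the (unique) population regression coefficients of $X_i$ on $X_S$ under the shared law. The only cosmetic difference is that the paper extracts uniqueness by equating conditional expectations, i.e.\ $(B_{iS}-B'_{iS})x_S = 0$ for all $x_S$, whereas you pass through the normal equations $\Sigma_{S,i} = \Sigma_{S,S}c$.
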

\begin{proof}
The ``$\impliedby$'' direction is trivial. For the ``$\implies$'' direction, define the random vectors
\begin{center}
    $X := (I-B)^{-1}\epsilon$ and $X' := (I-B')^{-1}\epsilon'$ with $\epsilon \sim \cc{N}(0, \Omega)$, $\epsilon' \sim \cc{N}(0, \Omega')$.
\end{center}
Let $S:=\supp{B_{i:}}=\supp{B'_{i:}}$; we have that
\begin{center}
$X_i = B_{iS}X_S + \epsilon_i$ and $X'_i = B'_{iS}X'_S + \epsilon'_i$.
\end{center}
Because $(B', \Omega') \in \left[\left(B,\Omega\right)\right]$, $X$ and $X'$ follow the same distribution, and it holds that for all $x_S \in \bb{R}^{|S|}$ that
$$\bb{E}[X_i \mid X_S = x_S] = \bb{E}[X'_i \mid X'_S = x_S].$$
Plugging in the expression for $X_i$ and $X'_i$ into the above, we arrive at
$$B_{iS}x_S + \bb{E}[\epsilon_i \mid X_S = x_S]
=
B'_{iS}x_S + \bb{E}[\epsilon'_i \mid X'_S = x_S],
$$
and by \autoref{lemma:independence}, at
$$B_{iS}x_S + \bb{E}[\epsilon_i]
=
B'_{iS}x_S + \bb{E}[\epsilon'_i].$$
Since $\bb{E}[\epsilon_i] = \bb{E}[\epsilon'_i] = 0$ the above results in the condition that $(B_{iS} - B'_{iS})x_S = 0$ for all $x_S \in \bb{R}^{|S|}$,
which is true if and only if $B_{iS} = B'_{iS}$, completing the proof.
\end{proof}

\subsection{Supporting Results for the Case $\abs{\cc{E}}>1$}
\label{ss:support_imec}

\begin{lemma}[Under faithfulness, the true model has the minimum number of edges]
\label{lemma:min_edges_imec}
Consider a model $\big(B,\{\Omega^e\}_{e \in \cc{E}}\big)$ resulting in a set of distributions which are faithful with respect to its underlying graph $\cc{G}(B)$. Then it holds that
$$\min_{\dag \in \bb{G}\left(B,\{\Omega^e\}_{e \in \cc{E}}\right)} \|\dag\|_{0} = \|\cc{G}(B)\|.$$
\end{lemma}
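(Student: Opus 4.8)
The plan is to prove the two inequalities separately. For the upper bound, observe that $\cc{G}(B)$ is itself a member of $\bb{G}\left(B,\{\Omega^e\}_{e \in \cc{E}}\right)$: taking the witnessing model to be $(B,\{\Omega^e\}_{e \in \cc{E}})$ gives $B \comp \cc{G}(B)$ and membership in the equivalence class trivially. Hence $\min_{\dag \in \bb{G}\left(B,\{\Omega^e\}_{e \in \cc{E}}\right)} \norm{\dag}_0 \leq \norm{\cc{G}(B)}_0$, and it remains only to show that no distribution-equivalent graph can have strictly fewer edges.

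For the lower bound, my approach is to reduce the statement to the single-environment result \autoref{lemma:min_edges_mec}. I would fix an environment $e_0 \in \cc{E}$ whose entailed distribution $P^{e_0}$ is faithful with respect to $\cc{G}(B)$ (as provided by the faithfulness assumption) and consider the single-environment model $(B,\Omega^{e_0})$. The key step is the set inclusion
$$\bb{G}\left(B,\{\Omega^e\}_{e \in \cc{E}}\right) \subseteq \bb{G}\left(B,\Omega^{e_0}\right).$$
To see this, take any $\dag$ in the left-hand set; there is a model $(\tilde{B},\{\tilde{\Omega}^e\}_{e \in \cc{E}})$ distribution equivalent to $(B,\{\Omega^e\}_{e \in \cc{E}})$ under $\cc{E}$ with $\tilde{B} \comp \dag$. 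Since conditions (i) and (ii) of \autoref{def:d_equiv} hold for \emph{all} $e \in \cc{E}$, they hold in particular for $e_0$; equivalently, this is the restriction statement of \autoref{corr:envs_superset} applied to $\{e_0\} \subseteq \cc{E}$. Thus $(\tilde{B},\tilde{\Omega}^{e_0})$ is distribution equivalent to $(B,\Omega^{e_0})$ as a single-environment model, and as $\tilde{B} \comp \dag$ still holds, $\dag \in \bb{G}\left(B,\Omega^{e_0}\right)$.

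Since a subset can only have a larger minimum, this inclusion combined with \autoref{lemma:min_edges_mec} yields
$$\min_{\dag \in \bb{G}\left(B,\{\Omega^e\}_{e \in \cc{E}}\right)} \norm{\dag}_0 \;\geq\; \min_{\dag \in \bb{G}\left(B,\Omega^{e_0}\right)} \norm{\dag}_0 \;=\; \norm{\cc{G}(B)}_0,$$
which together with the upper bound gives the claimed equality. The only genuinely delicate point is this reduction: one must confirm that joint distribution equivalence across $\cc{E}$ descends to equivalence in a single faithful environment, which is immediate from \autoref{def:d_equiv} (and recorded as \autoref{corr:envs_superset}). As a self-contained alternative that avoids the reduction and works even if faithfulness is read only as a property of the collection, I could instead mimic the proof of \autoref{lemma:min_edges_mec} directly: for the witnessing $\tilde{B}$, each $P^e$ is Markov with respect to $\cc{G}(\tilde{B})$ and hence with respect to its supergraph $\dag$, so $\text{d-sep}(\dag) \subseteq \bigcap_{e \in \cc{E}} \cc{C}(P^e) = \text{d-sep}(\cc{G}(B))$ by faithfulness; then \autoref{thm:meek} supplies a sequence of covered-edge reversals and edge additions transforming $\cc{G}(B)$ into $\dag$, forcing $\norm{\dag}_0 \geq \norm{\cc{G}(B)}_0$.
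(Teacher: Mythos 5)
Your proposal is correct and follows essentially the same route as the paper's proof: both reduce to the single-environment case by fixing one environment, invoking the inclusion $\bb{G}\left(B,\{\Omega^e\}_{e \in \cc{E}}\right) \subseteq \bb{G}\left(B,\Omega^{e_0}\right)$ (i.e., \autoref{corr:envs_superset}), and applying \autoref{lemma:min_edges_mec} together with the trivial membership $\cc{G}(B) \in \bb{G}\left(B,\{\Omega^e\}_{e \in \cc{E}}\right)$. Your explicit two-inequality framing and the self-contained alternative via \autoref{thm:meek} are fine elaborations but do not change the argument.
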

\begin{proof}
Let $\cc{E}^0 \subset \cc{E}$ be a singleton. By \autoref{lemma:min_edges_mec} we have that $\cc{G}(B)$ attains the minimum number of edges in $\bb{G}\left(B,\{\Omega^e\}_{e \in \cc{E}^0}\right)$. It is trivial to see\footnote{See also \autoref{corr:envs_superset}.} that
$\bb{G}\left(B,\{\Omega^e\}_{e \in \cc{E}}\right) \subseteq \bb{G}\left(B,\{\Omega^e\}_{e \in \cc{E}^0}\right)$; since $\cc{G}(B) \in \bb{G}\left(B,\{\Omega^e\}_{e \in \cc{E}}\right)$, it follows that $\cc{G}(B)$ also attains the minimum number of edges in $\bb{G}\left(B,\{\Omega^e\}_{e \in \cc{E}}\right)$.
\end{proof}


\begin{lemma}[structure of $(I-\tilde{B})(I-B)^{-1}$ for different parents]
\label{lemma:structure_diff}
Let $(B, \{\Omega^e\}_{e\in \cc{E}})$ be a model with a single environment, and let $(\tilde{B}, \{\tilde{\Omega}^e\}_{e\in \cc{E}}) \in \left[\left(B,\Omega\right)\right]$ be an equivalent model such that the diagonal entries of the matrix $M:=(I-\tilde{B})(I-B)^{-1}$ are all non-zero. Then, it holds that
$$\supp{B_{i:}} \neq \supp{\tilde{B}_{i:}} \implies \abs{\supp{\left[(I-\tilde{B})(I-B)^{-1} \right]_{i:}}} > 1.$$
\end{lemma}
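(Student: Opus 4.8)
The plan is to argue by contraposition. Writing $M := (I-\tilde{B})(I-B)^{-1}$ as in the statement, I will assume that $\abs{\supp{M_{i:}}} \leq 1$ and deduce that $\supp{B_{i:}} = \supp{\tilde{B}_{i:}}$, which is exactly the negation of the hypothesis.

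First I would exploit the standing assumption that the diagonal entries of $M$ are non-zero. Since $M_{ii} \neq 0$ we have $i \in \supp{M_{i:}}$, so $\abs{\supp{M_{i:}}} \geq 1$. Combined with the assumed bound $\abs{\supp{M_{i:}}} \leq 1$, this forces $\abs{\supp{M_{i:}}} = 1$, and — crucially because $M_{ii}\neq 0$ — the unique surviving entry must sit on the diagonal. Setting $\alpha := M_{ii} \neq 0$, this reads $M_{i:} = \alpha\, e_i^T$.

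Next I would translate this back into a statement about the rows of $B$ and $\tilde{B}$. Applying \autoref{lemma:inverse} to the matrices $(I-\tilde{B})$ and the invertible matrix $(I-B)$, the identity $[(I-\tilde{B})(I-B)^{-1}]_{i:} = \alpha e_i^T$ is equivalent to $(I-\tilde{B})_{i:} = \alpha (I-B)_{i:}$. To pin down the scalar, I would evaluate both sides at the $i^\text{th}$ coordinate: since $B$ and $\tilde{B}$ have zero diagonals, $(I-\tilde{B})_{ii} = (I-B)_{ii} = 1$, so $1 = \alpha \cdot 1$ and hence $\alpha = 1$. Consequently $(I-\tilde{B})_{i:} = (I-B)_{i:}$, equivalently $\tilde{B}_{i:} = B_{i:}$, which in particular gives $\supp{B_{i:}} = \supp{\tilde{B}_{i:}}$ and completes the contraposition. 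One could equally close this last step by invoking the implication $(ii)\implies(i)$ of \autoref{lemma:structure} once $\alpha = 1$ has been established, since then $M_{i:} = e_i^T$.

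I do not anticipate a serious obstacle: the only delicate point is the one in the second paragraph, where the hypothesis that the diagonal of $M$ is non-zero is precisely what guarantees that a lone surviving entry in row $i$ lands on the diagonal rather than off it. Without that hypothesis one could instead have $M_{i:} = \beta\, e_k^T$ with $k \neq i$, a form to which \autoref{lemma:inverse} no longer applies directly, and the implication would break down. This is exactly why \autoref{assm:model_truth} (equivalently, the non-vanishing diagonal of $M$) is imposed.
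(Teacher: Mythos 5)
Your proposal is correct and follows essentially the same route as the paper's proof: both reduce to showing that $\abs{\supp{M_{i:}}}\le 1$ together with $M_{ii}\neq 0$ forces $M_{i:}=\alpha e_i^T$, and then invoke \autoref{lemma:inverse} to conclude $\supp{B_{i:}}=\supp{\tilde{B}_{i:}}$. The only cosmetic differences are that you argue by contraposition rather than contradiction, you rule out the zero row via $M_{ii}\neq 0$ where the paper cites full rank of $M$, and you additionally pin down $\alpha=1$, which is harmless but not needed since $\alpha\neq 0$ already gives equality of supports.
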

\begin{proof}
Let $M:=(I-\tilde{B})(I-B)^{-1}$ and assume that
\begin{center}
    $\supp{B_{i:}} \neq \supp{\tilde{B}_{i:}}$ but $\abs{\supp{M_{i:}}} \leq 1$.
\end{center} 
Since $M$ has full rank it cannot be that $\abs{\supp{M_{i:}}} = 0$. Since the diagonal entries of $M$ cannot be zero, it must then be that $M_{i:} = [(I-\tilde{B})(I-B)^{-1}]_{i:} = \alpha e_i^T$. However, by \autoref{lemma:inverse}, this means that $(I-\tilde{B})_{i:} = \alpha (I-B)_{i:}$ and thus $\supp{B_{i:}} = \supp{\tilde{B}_{i:}}$, resulting in a contradiction.
\end{proof}

\begin{lemma}[I-MAPs are distribution equivalent]
\label{lemma:imaps_idec}
Consider a model $\big(B,\{\Omega^e\}_{e \in \cc{E}}\big)$ with underlying graph $\cc{G}(B)$. Let $\cc{I} = \bb{I}\big(\{\Omega^e\}_{e \in \cc{E}}\big)$ be the indices of variables that have received an intervention in at least one of the environments in $\cc{E}$. Let $\dag$ be an independence map of $\cc{G}(B)$ such that $\pa_{\cc{G}(B)}(i) \subseteq \pa_\dag(i)$ for all $i \in \cc{I}$. Then, $\dag \in \bb{G}\left(B,\{\Omega^e\}_{e \in \cc{E}}\right)$.
\end{lemma}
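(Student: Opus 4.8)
The plan is to reduce the multi-environment statement to the single-environment result of \autoref{lemma:imaps_dec} by fixing one reference environment, and then to show that the connectivity matrix produced there works simultaneously for \emph{all} environments. The leverage comes entirely from the hypothesis on intervention targets, which via \autoref{corr:coefficients} pins certain columns of the change-of-basis matrix $M := (I-\tilde{B})(I-B)^{-1}$ to canonical unit vectors.

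First I would fix an arbitrary $e_0 \in \cc{E}$ and apply \autoref{lemma:imaps_dec} to the single-environment model $(B, \Omega^{e_0})$. Since $\dag$ is an independence map of $\cc{G}(B)$, this produces a matrix $\tilde{B} \comp \dag$ together with a diagonal positive-definite $\tilde{\Omega}^{e_0}$ satisfying $\Sigma^{e_0} = (I-\tilde{B})^{-1}\tilde{\Omega}^{e_0}(I-\tilde{B})^{-T}$, so that $(\tilde{B},\tilde{\Omega}^{e_0}) \in [(B,\Omega^{e_0})]$. Define $M := (I-\tilde{B})(I-B)^{-1}$, which is invertible as a product of invertible matrices. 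The crux is to exploit the assumption $\pa_{\cc{G}(B)}(i) \subseteq \pa_\dag(i)$ for all $i \in \cc{I}$: by \autoref{corr:coefficients} this forces $\tilde{B}_{i:} = B_{i:}$ for every $i \in \cc{I}$, and \autoref{lemma:structure} then upgrades this row equality to the column statement $M_{:i} = e_i$ for all $i \in \cc{I}$.

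With this in hand I would verify that the same $\tilde{B}$ reproduces every environment's covariance, so that $(\tilde{B}, \{\tilde{\Omega}^e\}_{e \in \cc{E}})$ is distribution equivalent to $(B, \{\Omega^e\}_{e \in \cc{E}})$ with $\tilde{\Omega}^e := M \Omega^e M^T$. It suffices to show each $\tilde{\Omega}^e$ is diagonal and positive, since that is exactly condition (ii) of \autoref{def:d_equiv} (condition (i) being vacuous for centered data). Because only the noise-term variances of targets in $\cc{I}$ vary across environments, I can write $\Omega^e = \Omega^{e_0} + D^e$ with $D^e$ diagonal whose nonzero entries lie in $\cc{I}$. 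Then $\tilde{\Omega}^e = M\Omega^{e_0}M^T + \sum_{i \in \cc{I}} D^e_{ii}\, M_{:i}M_{:i}^T = \tilde{\Omega}^{e_0} + \sum_{i \in \cc{I}} D^e_{ii}\, e_i e_i^T$, where the last equality substitutes $M_{:i} = e_i$. Hence $\tilde{\Omega}^e = \tilde{\Omega}^{e_0} + D^e$ is diagonal; positivity is automatic, since the $a$-th diagonal entry equals $\sum_k M_{ak}^2\,\Omega^e_{kk} > 0$ by positivity of $\Omega^e$ and full rank of $M$. Thus $\tilde{B} \comp \dag$ is compatible with a distribution-equivalent model, giving $\dag \in \bb{G}(B, \{\Omega^e\}_{e \in \cc{E}})$.

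The step I expect to be the main obstacle --- and the one carrying all the content --- is establishing that $M$ has unit columns exactly at the intervention targets. Once $M_{:i} = e_i$ is available for $i \in \cc{I}$, the diagonality of the rank-one perturbation $\sum_{i} D^e_{ii}\, M_{:i}M_{:i}^T$, and hence of every $\tilde{\Omega}^e$, is immediate. All of this structural content is already packaged in \autoref{corr:coefficients} and \autoref{lemma:structure}, so the remaining work is just the bookkeeping of the rank-one updates.
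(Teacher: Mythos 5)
Your proposal is correct and follows essentially the same route as the paper: fix a reference environment, invoke \autoref{lemma:imaps_dec} to get $\tilde{B}\comp\dag$, use \autoref{corr:coefficients} and \autoref{lemma:structure} to pin the columns $M_{:i}=e_i$ for $i\in\cc{I}$, and then check the remaining environments. The only cosmetic difference is that you verify diagonality of $\tilde{\Omega}^e$ via the rank-one decomposition $\Omega^e=\Omega^{e_0}+D^e$, whereas the paper phrases the same fact as a positive scaling of the unit columns of $(I-\tilde{B})(I-B)^{-1}(\Omega^0)^{1/2}$ preserving row orthogonality (via \autoref{lemma:orthogonal_rows_rank}); the two computations are equivalent.
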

\begin{proof}
We want to show that there exists a model $(\tilde{B}, \{\tilde{\Omega}^e\}_{e \in \cc{E}})$ such that $\tilde{B} \comp \dag$ and
\begin{align}
\label{eq:dist_condition}
(I-B)^{-1}\Omega^e(I-B)^{-T} = (I-\tilde{B})^{-1}\tilde{\Omega}^e(I-\tilde{B})^{-T} \text{ for all }e \in \cc{E}.
\end{align}
That is equivalent to the condition that for all $e \in \cc{E}$
$$\tilde{\Omega}^e := (I-\tilde{B})(I-B)^{-1}\Omega^e(I-B)^{-T}(I-\tilde{B})^T$$
is a diagonal matrix with positive entries. By \autoref{lemma:orthogonal_rows_rank}, in turn this is equivalent to
$(I-\tilde{B})(I-B)^{-1}(\Omega^e)^{1/2}$
being a full-rank matrix with orthogonal row vectors, for all $e \in \cc{E}$. Now, let $\cc{E}^0 \subset \cc{E}$ be a singleton and let $\Omega^0 := \diag{\omega_1^2, ..., \omega_p^2}$ be the noise-term variances associated to it. By \autoref{lemma:imaps_dec}, there exists $\tilde{B}$ such that $\tilde{B} \comp \dag$, and $(I-\tilde{B})(I-B)^{-1}(\Omega^0)^{1/2}$ is a full-rank matrix with orthogonal row vectors. Now, for every $e \in \cc{E}$, we can rewrite
\begin{align}
\label{eq:requirement}
    (I-\tilde{B})(I-B)^{-1}(\Omega^e)^{1/2} = (I-\tilde{B})(I-B)^{-1}(\Omega^0)^{1/2}(\Omega^0)^{-1/2}(\Omega^e)^{1/2}.
\end{align}
If we let $\gamma_1^2,...,\gamma_p^2$ be the diagonal entries of $\Omega^e$, we can write
$$
(\Omega^0)^{-1/2}(\Omega^e)^{1/2} = \diag{\abs{\frac{\gamma_1}{\omega_1}},...,\abs{\frac{\gamma_p}{\omega_p}}},
$$
\ie a diagonal matrix that constitutes a positive scaling of some of the columns of $(I-\tilde{B})(I-B)^{-1}(\Omega^0)^{1/2}$. Note\footnote{Because $\cc{I} := \bb{I}\big(\{\Omega^e\}_{e \in \cc{E}}\big)$ and \eqref{eq:i_operator}.} that this scaling can only occur in those columns with index in $\cc{I}$. Since we have assumed that $\pa_{\cc{G}(B)}(i) \subseteq \pa_\dag(i)$ for all $i \in \cc{I}$, by \autoref{corr:coefficients} we have that $B_{i:} = \tilde{B}_{i:}$ for $i \in \cc{I}$, and by \autoref{lemma:structure}, these columns have exactly one non-zero element. Scaling them leaves the matrix \eqref{eq:requirement} with full rank and orthogonal row vectors, that is, for all $e \in \cc{E}$
$$
\tilde{\Omega}^e := (I-\tilde{B})(I-B)^{-1}\Omega^e(I-B)^{-T}(I-\tilde{B})^T
$$
is a diagonal matrix with positive elements. Thus, \eqref{eq:dist_condition} holds and since $\tilde{B} \comp \dag$ we have completed the proof.
\end{proof}

\begin{lemma}[Targets have the same parents]
\label{lemma:targets_parents}
Consider a model $(B,\{\Omega^e\}_{e \in \cc{E}})$ satisfying Assumptions \ref{assm:int_heter} and \ref{assm:model_truth}, and let $\cc{I} := \bb{I}(\{\Omega^e\}_{e \in \cc{E}})$ be the indices of variables that have received interventions in $\cc{E}$. Let $(\tilde{B}, \{\tilde{\Omega}^e\}_{e \in \cc{E}}) \in [(B,\{\Omega^e\}_{e \in \cc{E}})]$ be a distribution equivalent model. Then, $B_{i:} = \tilde{B}_{i:}$ for all $i \in \cc{I}$.
\end{lemma}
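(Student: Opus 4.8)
The plan is to work entirely with the matrix $M := (I-\tilde{B})(I-B)^{-1}$ and reduce the claim to a statement about the support of its rows. Distribution equivalence of $(\tilde{B}, \{\tilde{\Omega}^e\}_{e \in \cc{E}})$ gives $\tilde{\Omega}^e = M\Omega^e M^T$ for every $e$, so each $M\Omega^e M^T$ is diagonal with positive entries; by \autoref{lemma:orthogonal_rows_rank} this is equivalent to $M(\Omega^e)^{1/2}$ having full rank and orthogonal row vectors for every $e \in \cc{E}$. Note that $M$ is invertible (a product of invertible matrices) and, by \autoref{assm:model_truth}, has non-zero diagonal entries. I would first observe that it suffices to prove $M_{i:} = M_{ii}\,e_i^T$ for each $i \in \cc{I}$: by \autoref{lemma:inverse} this forces $(I-\tilde{B})_{i:} = M_{ii}(I-B)_{i:}$, and comparing the $i$-th (diagonal) entries of both sides---both equal to $1$ since $B$ and $\tilde{B}$ have zero diagonals---yields $M_{ii}=1$ and hence $B_{i:} = \tilde{B}_{i:}$.

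Next, fix $i \in \cc{I}$. Since $i \in \bb{I}(\{\Omega^e\}_{e \in \cc{E}})$, there is a pair $e,f \in \cc{E}$ with $i \in \bb{I}(\{\Omega^e,\Omega^f\})$. Set $N := M(\Omega^f)^{1/2}$, which has orthogonal rows, and $D := (\Omega^f)^{-1/2}(\Omega^e)^{1/2}$, a positive diagonal matrix with $D_{ss}^2 = \Omega^e_{ss}/\Omega^f_{ss}$; then $ND = M(\Omega^e)^{1/2}$ also has orthogonal rows. The rows $\{N_{l:}\}_{l \neq i}$ are $p-1$ linearly independent vectors, so their orthogonal complement is one-dimensional. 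Orthogonality of the rows of $N$ places $N_{i:}$ in this complement, while orthogonality of the rows of $ND$ places the vector $w$ defined by $w_s := N_{is}D_{ss}^2$ in it as well. Since $N_{i:} \neq 0$, it follows that $w = \beta N_{i:}$ for some scalar $\beta$, i.e. $D_{ss}^2 = \beta$ for every $s \in \supp{N_{i:}} = \supp{M_{i:}}$ (the two supports agree since $(\Omega^f)^{1/2}$ is a positive diagonal matrix).

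It then remains to rule out any off-diagonal support. Because $M_{ii}\neq 0$ we have $i \in \supp{M_{i:}}$, so $\beta = D_{ii}^2 = \Omega^e_{ii}/\Omega^f_{ii} \neq 1$ (as $i \in \bb{I}(\{\Omega^e,\Omega^f\})$). Suppose some $s \neq i$ also lay in $\supp{M_{i:}}$; then $D_{ss}^2 = \beta$. If $s \notin \bb{I}(\{\Omega^e,\Omega^f\})$ then $D_{ss}^2 = 1 \neq \beta$, a contradiction; and if $s \in \bb{I}(\{\Omega^e,\Omega^f\})$ then \autoref{assm:int_heter} gives $D_{ss}^2 = \Omega^e_{ss}/\Omega^f_{ss} \neq \Omega^e_{ii}/\Omega^f_{ii} = \beta$, again a contradiction. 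Hence $\supp{M_{i:}} = \{i\}$, which is exactly the reduction established in the first paragraph, completing the argument.

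I expect the main obstacle to be the coupling step in the second paragraph: extracting, from the two simultaneous orthogonality conditions, the statement that the diagonal scaling is constant on the support of row $i$. This is essentially the mechanism behind \autoref{lemma:scaling}, and the delicate points are (a) choosing the reference environment $f$ so that $i$ is genuinely perturbed relative to $e$, which makes $\beta \neq 1$, and (b) using \autoref{assm:model_truth} to guarantee that $i$ itself belongs to $\supp{M_{i:}}$, which anchors the value $\beta$ to $D_{ii}^2$. An alternative packaging would prove $\lvert\supp{M_{i:}}\rvert = 1$ via \autoref{lemma:scaling} and then invoke \autoref{lemma:structure_diff} to conclude $\supp{B_{i:}} = \supp{\tilde{B}_{i:}}$, but the direct route above avoids the extra passage from equal supports to equal coefficients.
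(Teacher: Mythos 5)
Your proof is correct. The underlying mechanism is the same one the paper uses---two simultaneous orthogonality conditions on $M(\Omega^f)^{1/2}$ and $M(\Omega^e)^{1/2}=M(\Omega^f)^{1/2}D$ force the diagonal scaling $D^2$ to be constant on the support of each row, which is then incompatible with intervention-heterogeneity unless the support is a singleton---but your packaging is genuinely different. The paper argues globally and by contradiction: it partitions $[p]$ into $S=\{i: B_{i:}=\tilde{B}_{i:}\}$ and its complement, establishes a block-diagonal structure for $M^e$ via \autoref{lemma:structure}, invokes \autoref{lemma:structure_diff} (through \autoref{assm:model_truth}) to verify that every row of the block $M^f_{S^\perp S^\perp}$ has more than one non-zero entry, and only then applies \autoref{lemma:scaling} as a black box to that block. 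You instead work row-locally: for a fixed $i\in\cc{I}$ you re-derive the one-dimensional-orthogonal-complement step directly for the full matrices (so no block bookkeeping is needed), use \autoref{assm:model_truth} only to anchor $i\in\supp{M_{i:}}$ and hence $\beta=D_{ii}^2\neq 1$, and then rule out every other index in the support by splitting on whether it lies in $\bb{I}(\{\Omega^e,\Omega^f\})$. This buys you a shorter dependency chain (no \autoref{lemma:structure_diff}, no need for the hypotheses of \autoref{lemma:scaling} to hold for a whole submatrix) and lets you conclude $B_{i:}=\tilde{B}_{i:}$ directly from $\supp{M_{i:}}=\{i\}$ via \autoref{lemma:inverse} and the unit diagonal, bypassing the passage from equal supports to equal coefficients. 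The paper's version, in exchange, isolates \autoref{lemma:scaling} as a reusable standalone statement. One small point worth making explicit in a final write-up: you should note that the case $\cc{I}=\emptyset$ (e.g.\ $\abs{\cc{E}}=1$) is vacuous, and that the pair $e,f$ with $i\in\bb{I}(\{\Omega^e,\Omega^f\})$ exists by the definition of $\bb{I}$ in \eqref{eq:i_operator}; both are immediate and your argument already presupposes them correctly.
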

\begin{proof}
If $\abs{\cc{E}} = 1$, then $\cc{I} = \emptyset$ and we are done. For $\abs{\cc{E}}>1$, see that because $(\tilde{B}, \{\tilde{\Omega}^e\}_{e \in \cc{E}}) \in [(B,\{\Omega^e\}_{e \in \cc{E}})]$ it holds that, for all $e \in \cc{E}$,
$$\tilde{\Omega}^e = (I-\tilde{B})(I-B)^{-1}\Omega^e(I-B)^{-T}(I-\tilde{B})^T,$$
which implies that
$M^e:= (I-\tilde{B})(I-B)^{-1}(\Omega^e)^{1/2}$
is a full-rank matrix with orthogonal row vectors (\autoref{lemma:orthogonal_rows_rank}). Let $S:= \{i : B_{i:} = \tilde{B}_{i:}\}$; then,
$M^e_{i:} = (\Omega^e)^{1/2}e_i^T$ and $M^e_{:i} = (\Omega^e)^{1/2}e_i$ for all $i \in S$.
In other words, for any $e \in \cc{E}$ we have that $M^e$ is a block-diagonal matrix with two blocks
\begin{enumerate}[label=\roman*)]
\item $M^e_{S S}$, a diagonal matrix; and
\item $M^e_{S^\perp S^\perp}$, where $S^\perp := [p] \setminus S$.
\end{enumerate}
Because of this block-diagonal form, the rows of one block are orthogonal to those of the other. Thus, $M^e$ has orthogonal row vectors if and only if so do both blocks. This is immediately true for the diagonal block (i).
Take some $f\in \cc{E} : f \neq e$ and let $D := (\Omega^f)^{-1/2}(\Omega^e)^{1/2}$. We can rewrite
$$M^e = (I-\tilde{B})(I-B)^{-1}(\Omega^f)^{1/2}(\Omega^f)^{-1/2}(\Omega^e)^{1/2} = M^fD$$
and $M^e_{S^\perp S^\perp} = M^f_{S^\perp  S^\perp}D_{S^\perp S^\perp}.$
Because $(\tilde{B}, \{\tilde{\Omega}^e\}_{e \in \cc{E}}) \in [(B, \{\Omega^e\}_{e \in \cc{E}})]$, by \autoref{lemma:orthogonal_rows_rank} we have that $M^e$ and $M^f$ have full rank and orthogonal row vectors, and because of their block structure, so do $M^e_{S^\perp S^\perp}$ and $M^f_{S^\perp S^\perp}$. From \autoref{assm:model_truth} and \autoref{lemma:structure_diff}, we additionally have that $M^f_{S^\perp S^\perp}$ has more than one non-zero element per row. Now, assume $\cc{I} \cap S^\perp \neq \emptyset$, that is, $\exists i \in \cc{I}$ such that $B_{i:} \neq \tilde{B}_{i:}$. Then, by intervention-heterogeneity (\autoref{assm:int_heter}), $D_{S^\perp S^\perp}$ has at least one diagonal entry different from all others. These facts allow us to apply \autoref{lemma:scaling}, by which $M^e_{S^\perp S^\perp} = M^f_{S^\perp  S^\perp}D_{S^\perp S^\perp}$ does not have orthogonal rows, arriving at a contradiction. This completes the proof.
\end{proof}

\subsection{Main Results of Section \ref{s:equivalence}}
\label{ss:main_results}

Now we provide the proofs for \autoref{prop:imec}, \autoref{prop:imec_faithfulness}, and \autoref{prop:imec_full}.

\propimec*
\begin{proof}
Let $\dag \in \imec(\cc{G}(B))$. By \autoref{def:i_equiv}, this implies that $\dag$ is an independence map of $\cc{G}(B)$, and that $\pa_{\cc{G}(B)}(i) = \pa_\dag(i)$ for all $i \in \cc{I}$. Thus, by \autoref{lemma:imaps_idec}, $\dag \in \bb{G}\left(B,\{\Omega^e\}_{e \in \cc{E}}\right)$, completing the proof.
\end{proof}

\propimecfaithfulness*
\begin{proof}
Let $\dag' \in \imec(\cc{G}(B))$. By \autoref{prop:imec}, $\dag' \in \bb{G}\left(B,\{\Omega^e\}_{e \in \cc{E}}\right)$. By \autoref{lemma:min_edges_imec} we have that $\min_{\dag \in \bb{G} \left (B, \Omega \right)} \|\dag\|_0 = \|\cc{G}(B)\|_0$; since $\cc{G}(B)$ and $\dag'$ are Markov equivalent, they have the same number of edges, and thus $\dag' \in \arg\min_{\dag \in \bb{G}\left(B,\{\Omega^e\}_{e \in \cc{E}}\right)} \|\dag\|_0$, completing the proof.
\end{proof}

\propimecfull*
\begin{proof}
We proceed by showing both sets are contained in each other.
\paragraph{$\subseteq$} Follows from \autoref{prop:imec_faithfulness}.
    
\paragraph{$\supseteq$} Let $\tilde{\dag} \in \arg\min_{\dag \in \bb{G}\left(B,\{\Omega^e\}_{e \in \cc{E}}\right)} \|\dag\|_0$. We begin by showing that $\tilde{\dag}$ and $\cc{G}(B)$ have the same skeleton and v-structures. Let $\cc{E}^0 \subset \cc{E}$ be a singleton and $\Omega^0$ be the noise-term variances associated with it. Clearly (c.f.\ \autoref{corr:envs_superset}), $\tilde{\dag} \in \bb{G}\left(B,\Omega^0\right)$. Because we are under faithfulness, by Lemmas \ref{lemma:min_edges_mec} and \ref{lemma:min_edges_imec}
$$\min_{\dag\in \bb{G}\left(B,\Omega^0\right)} \|\dag\|_0 = \|\cc{G}(B)\|_0 = \min_{\dag\in \bb{G}\left(B,\{\Omega^e\}_{e \in \cc{E}}\right)} \|\dag\|_0,$$
and thus $\tilde{\dag} \in \arg\min_{\dag\in \bb{G}\left(B,\Omega^0\right)} \|\dag\|_0$. By \autoref{prop:mec_equals_dec},
$\arg\min_{\dag\in \bb{G}\left(B,\Omega^0\right)} \|\dag\|_0 = \text{MEC}(\cc{G}(B))$
and thus $\tilde{\dag}$ and $\cc{G}(B)$ have the same skeleton and v-structures.

If $\cc{E}$ is a singleton, then $\cc{I} = \emptyset$ and we have completed the proof. For the remaining cases, we will now show that the variables in the intervention set $\cc{I}$ have the same parents in $\tilde{\dag}$ and $\cc{G}(B)$. Since $\tilde{\dag} \in \bb{G}\left(B,\{\Omega^e\}_{e \in \cc{E}}\right)$, there exists a model $(\tilde{B}, \{\tilde{\Omega}^e\}_{e \in \cc{E}}) \in [(B, \{\Omega^e\}_{e \in \cc{E}})]$ such that $\tilde{B} \comp \tilde{\dag}$. First, by \autoref{lemma:targets_parents}, it holds that $\tilde{B}_{i:} = B_{i:}$ for all $i \in \cc{I}$. Second, since $\tilde{\dag} \in \arg\min_{\dag\in \bb{G}\left(B,\{\Omega^e\}_{e \in \cc{E}}\right)} \|\dag\|_0$, it must hold that $\cc{G}(\tilde{B}) = \tilde{\dag}$ or we would have a contradiction\footnote{Otherwise $\cc{G}(\tilde{B})$ would have less edges than $\tilde{\dag}$ and thus $\tilde{\dag} \notin \arg\min_{\dag\in \bb{G}\left(B,\{\Omega^e\}_{e \in \cc{E}}\right)} \|\dag\|_0$.}. From these two facts, it follows that
$\pa_{\tilde{\dag}}(i) = \pa_{\cc{G}(\tilde{B})}(i) = \pa_{\cc{G}(B)}(i)$ for all $i \in \cc{I}$,
completing the proof.
\end{proof}
\subsection{Proofs for \autoref{ss:other_classes}}
\label{s:proofs_other_classes}

\lemmaaugmented*
\begin{proof}
    ($1 \implies 2$) $\dag_1$ and $\dag_2$ have the same skeleton (c.f. \autoref{def:i_equiv}); thus, so do $\dag_1^\cc{I}$ and $\dag_2^\cc{I}$.
    Furthermore, $\dag_1$ and $\dag_2$ have the same v-structures (c.f. \autoref{def:i_equiv}), and the new v-structures in their augmented graphs are $\{j \to i \leftarrow F_i \mid i \in \cc{I} \land j \in \pa_{\dag_1}(i)\}$ and $\{j \to i \leftarrow F_i \mid i \in \cc{I} \land j \in \pa_{\dag_2}(i)\}$, respectively. Because $\pa_{\dag_1}(i) = \pa_{\dag_2}(i)\; \forall i \in \cc{I}$, both sets of v-structures are the same, and $\dag_1^\cc{I}$ and $\dag_2^\cc{I}$ are Markov equivalent.

    ($2 \implies 1$) Because $\dag_1^\cc{I}$ and $\dag_2^\cc{I}$ have the same skeleton, so do $\dag_1$ and $\dag_2$, satisfying condition (i) of \autoref{def:i_equiv}. Assume now that condition (ii) is false, that is, that $\dag_1$ and $\dag_2$ have different v-structures. Since the new nodes in the augmented graphs $\dag_1^\cc{I}$ and $\dag_2^\cc{I}$ are source nodes, adding them can only create new v-structures. Thus, $\dag_1^\cc{I}$ and $\dag_2^\cc{I}$ would also have different v-structures, which contradicts that they are Markov equivalent. Finally, assume now that there exists $i\in\cc{I}$ such that $\pa_{\dag_1}(i) \neq \pa_{\dag_2}(i)$; then, the addition of a new node and edge $F_i \to i$ would create different v-structures in the augmented graphs $\dag_1^\cc{I}$ and $\dag_2^\cc{I}$, which is again a contradiction; this proves condition (iii) of \autoref{def:i_equiv}, completing the proof.
\end{proof}

\lemmahauser*
\begin{proof}
Let $\tilde{\dag} \in \imec(\dag)$. By definition $\dag$ and $\tilde{\dag}$ are Markov equivalent, satisfying (i) in \autoref{thm:hauser}. Since $\forall i \in \cc{I} : \pa_{\dag}(i) = \pa_{\tilde{\dag}}(i)$, the graphs $\tilde{\dag}^{(H)}$ and $\dag^{(H)}$ have the same skeleton for all $H \in \cc{H}$, satisfying (ii) and completing the proof.
\end{proof}

\lemmayang*
\begin{proof}
Let $\tilde{\dag} \in \imec(\dag)$ and let $\tilde{\dag}^\cc{Y}$ and $\dag^\cc{Y}$ be the corresponding interventional DAGs. By \autoref{def:i_equiv}, $\tilde{\dag}$ and $\dag$ have the same skeleton; because the edges introduced in the interventional DAGs depend only on $\cc{Y}$, $\tilde{\dag}^\cc{Y}$ and $\dag^\cc{Y}$ have the same additional edges and also have the same skeleton. We now show that they also have the same v-structures. Without loss of generality, suppose there exists a v-structure in $\dag^\cc{Y}$ which is not in $\tilde{\dag}^\cc{Y}$. Because the nodes $Y \in \cc{Y}$ do not have any parents, they can only appear in the tails of the v-structure, which can therefore be of three types:
\begin{enumerate}[label=\alph*)]
    \item $i\to k \leftarrow j$ where $i,k,j \in [p]$;
    \item $Y_1 \to k \leftarrow Y_2$ where $k \in [p]$ and $Y_1,Y_2 \in \cc{Y}$; or
    \item $Y \to k \leftarrow j$ where $Y \in \cc{Y}$ and $j,k \in [p]$.
\end{enumerate}
By \autoref{def:i_equiv}, $\tilde{\dag}$ and $\dag$ have the same v-structures, which rules out (a). Furthermore, because the edges from nodes in $\cc{Y}$ to nodes in $[p]$ are the same for both $\tilde{\dag}^\cc{Y}$ and $\dag^\cc{Y}$, and they have the same skeleton, a v-structure like (b) would appear in both graphs. We now consider (c): since both graphs have the same skeleton and nodes in $\cc{Y}$ are always source nodes, the only option is that (c) appears instead as $Y \to k \to j$ in $\tilde{\dag}^\cc{Y}$. However, the edge from $Y \to k$ means that $k \in Y$, and by extension $k \in \cc{I}$; thus, by \autoref{def:i_equiv} $k$ would have the same parents in $\tilde{\dag}$ and $\dag$, arriving at a contradiction. Thus, we have shown that the interventional DAGs $\tilde{\dag}^\cc{Y}$ and $\dag^\cc{Y}$ have the same skeleton and v-structures, which means that $\tilde{\dag} \in \ymec(\dag)$, completing the proof.
\end{proof}

\subsection{Proofs of the Score Properties}
\label{s:proofs_score_properties}

We begin by stating some supporting results.

\begin{lemma}[$\cc{I}$-equivalent graphs entail the same distributions]
\label{lemma:graphs_dists}
Consider a set of intervention targets $\cc{I} \subseteq [p]$ and two graphs $\dag$ and $\dag'$ such that $\dag \sim_\cc{I}\dag'$. Then, for every model $(B, \{\Omega^e\}_{e \in \cc{E}})$ such that $B \comp \dag$ and $\bb{I}(\{\Omega^e\}_{e \in \cc{E}}) = \cc{I}$, there exists a model $(\tilde{B}, \{\tilde{\Omega}^e\}_{e \in \cc{E}})$ such that $\tilde{B} \comp \dag'$, $\bb{I}(\{\tilde{\Omega}^e\}_{e \in \cc{E}}) = \cc{I}$, and
$$(I-B)^{-1}\Omega^e(I-B)^{-T} = (I-\tilde{B})^{-1}\tilde{\Omega}^e(I-\tilde{B})^{-T}.$$
\end{lemma}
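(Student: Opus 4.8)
The plan is to reduce the statement to \autoref{lemma:imaps_idec}, which already manufactures a distribution-equivalent model supported on a prescribed target graph, and then to verify \emph{separately} that the resulting construction preserves the set of intervention targets. The equivalence of the covariances is essentially immediate from that lemma; the genuinely new content is the bookkeeping that $\bb{I}(\{\tilde{\Omega}^e\}_{e \in \cc{E}}) = \cc{I}$.

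First I would check that $\dag'$ is an independence map of $\cc{G}(B)$ satisfying the parent condition demanded by \autoref{lemma:imaps_idec}. Since $B \comp \dag$, every edge of $\cc{G}(B)$ appears in $\dag$, so $\dag$ is an I-MAP of $\cc{G}(B)$. Because $\dag \sim_\cc{I} \dag'$, conditions (i) and (ii) of \autoref{def:i_equiv} make $\dag$ and $\dag'$ Markov equivalent, hence they entail the same d-separation statements; consequently $\dag'$ is also an I-MAP of $\cc{G}(B)$. For the parent condition, $B \comp \dag$ gives $\pa_{\cc{G}(B)}(i) \subseteq \pa_\dag(i)$ for every $i$, while condition (iii) of \autoref{def:i_equiv} gives $\pa_\dag(i) = \pa_{\dag'}(i)$ for all $i \in \cc{I}$; combining the two yields $\pa_{\cc{G}(B)}(i) \subseteq \pa_{\dag'}(i)$ for all $i \in \cc{I}$, as required.

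With these facts, \autoref{lemma:imaps_idec} applied with target graph $\dag'$ furnishes a matrix $\tilde{B} \comp \dag'$ such that, setting $\tilde{\Omega}^e := (I-\tilde{B})(I-B)^{-1}\Omega^e(I-B)^{-T}(I-\tilde{B})^T$, each $\tilde{\Omega}^e$ is diagonal with positive entries and the covariance identity of the statement holds. Writing $M := (I-\tilde{B})(I-B)^{-1}$, the construction inside that proof (via \autoref{corr:coefficients}) additionally gives $\tilde{B}_{i:} = B_{i:}$ for every $i \in \cc{I}$, whence by \autoref{lemma:structure} we obtain $M_{:i} = e_i$ for all $i \in \cc{I}$. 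This column structure is the key leverage for the final step.

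The remaining and most delicate part is to pin down $\bb{I}(\{\tilde{\Omega}^e\}_{e \in \cc{E}})$. I would expand $\tilde{\Omega}^e_{jj} = \sum_k M_{jk}^2 \Omega^e_{kk}$ and subtract across two environments $e,f$, giving $\tilde{\Omega}^e_{jj} - \tilde{\Omega}^f_{jj} = \sum_k M_{jk}^2\left(\Omega^e_{kk} - \Omega^f_{kk}\right)$. Only indices $k$ with $\Omega^e_{kk} \neq \Omega^f_{kk}$ contribute, and every such $k$ lies in $\cc{I} = \bb{I}(\{\Omega^e\}_{e \in \cc{E}})$; for those $k$ the column identity $M_{:k} = e_k$ forces $M_{jk} = \delta_{jk}$. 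Hence the difference equals $\Omega^e_{jj} - \Omega^f_{jj}$ when $j \in \cc{I}$ and vanishes when $j \notin \cc{I}$, so $\tilde{\Omega}^e$ and $\Omega^e$ vary across environments on exactly the same coordinates. This yields $\bb{I}(\{\tilde{\Omega}^e\}_{e \in \cc{E}}) = \cc{I}$ and completes the proof. The only subtlety to guard against is to invoke the \emph{specific} $\tilde{B}$ produced by the construction (so that $\tilde{B}_{i:} = B_{i:}$ on $\cc{I}$ and the column structure of $M$ is available), rather than appealing to \autoref{lemma:targets_parents}, whose conclusion relies on assumptions \ref{assm:int_heter} and \ref{assm:model_truth} that are not in force here.
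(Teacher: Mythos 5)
Your proposal is correct and follows essentially the same route as the paper's proof: reduce to \autoref{lemma:imaps_idec} via the independence-map and parent conditions, then use $\tilde{B}_{i:}=B_{i:}$ for $i\in\cc{I}$ (from \autoref{corr:coefficients} and \autoref{lemma:structure}) to get $M_{:i}=e_i$ and deduce that the intervention targets are preserved. Your final bookkeeping via the single identity $\tilde{\Omega}^e_{jj}-\tilde{\Omega}^f_{jj}=\sum_k M_{jk}^2(\Omega^e_{kk}-\Omega^f_{kk})$ is a slightly cleaner packaging of the paper's two separate computations (a ratio argument for $i\in\cc{I}$ and a difference argument for $j\notin\cc{I}$), but it is the same underlying idea.
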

\begin{proof}
$\dag$ is an independence map of $\cc{G}(B)$ as it contains all it edges. Because $\dag$ and $\dag'$ are $\cc{I}$-equivalent, $\dag'$ is also an independence map of $\cc{G}(B)$, and since
\begin{equation}
\label{eq:subset_parents}
\pa_{\cc{G}(B)}(i) \subseteq \pa_\dag(i) = \pa_{\dag'}(i),
\end{equation}
by \autoref{lemma:imaps_idec} there exists $(\tilde{B}, \{\tilde{\Omega}^e\}_{e \in \cc{E}})$ such that $\tilde{B} \comp \dag'$ and
\begin{equation}
    \label{eq:covariances}
    (I-B)^{-1}\Omega^e(I-B)^{-T} = (I-\tilde{B})^{-1}\tilde{\Omega}^e(I-\tilde{B})^{-T}.
\end{equation}

Now we will show that $\cc{I} = \bb{I}(\{\tilde{\Omega}^e\}_{e \in \cc{E}}) = \bb{I}(\{\Omega^e\}_{e \in \cc{E}})$. As for notation, let $M:=(I-\tilde{B})(I-B)^{-1}$. Together with \autoref{corr:coefficients}, \eqref{eq:subset_parents} implies that $B_{i:} = \tilde{B}_{i:}$ and therefore\footnote{Pick any $g \in \cc{E}$: see that by \autoref{corr:envs_superset} $(\tilde{B}, \tilde{\Omega}^g) \in [(B, \Omega^g)]$ and apply \autoref{lemma:structure}.} $M_{i:} = e_i^T$ and $M_{:i} = e_i$ for $i \in \cc{I}$. First we will show that $i\in\bb{I}(\Omega^e\}_{e \in \cc{E}} \implies i \in \bb{I}(\{\tilde{\Omega}^e\}_{e \in \cc{E}})$. Note that (c.f. \autoref{eq:i_operator})
\begin{equation}
\label{eq:contained_in}
    i \in \cc{I} = \bb{I}(\Omega^e\}_{e \in \cc{E}}) \iff \exists e,f \in \cc{E} : [\Omega^e(\Omega^f)^{-1}]_{ii} \neq 1.
\end{equation}
We can rewrite \eqref{eq:covariances} as $\tilde{\Omega}^e = M\Omega^eM^T,$
and thus $\tilde{\Omega}^e(\tilde{\Omega}^f)^{-1} = M \Omega^e(\Omega^f)^{-1} M^{-1}$. For $i \in \cc{I}$ we have that
\begin{align*}
[\tilde{\Omega}^e(\tilde{\Omega}^f)^{-1}]_{ii}
&= M_{i:} \Omega^e(\Omega^f)^{-1} M^{-1}_{:i}\\
&= e_i^T \Omega^e(\Omega^f)^{-1} M^{-1}_{:i}  \quad \text{(} M_{i:} = e_i^T, \text{ see above)}\\
&= [\Omega^e(\Omega^f)^{-1}]_{ii} e_i^T M^{-1}_{:i} \quad \text{(} \Omega^e, \Omega^f \text{ are diagonal)}\\
&= [\Omega^e(\Omega^f)^{-1}]_{ii} M_{i:} M^{-1}_{:i}\\
&= [\Omega^e(\Omega^f)^{-1}]_{ii} \neq 1,    
\end{align*}
and thus by \eqref{eq:contained_in} $i \in \bb{I}(\{\tilde{\Omega}^e\}_{e \in \cc{E}})$. Now we will show that $j\notin\bb{I}(\Omega^e\}_{e \in \cc{E}} \implies j \notin \bb{I}(\{\tilde{\Omega}^e\}_{e \in \cc{E}})$. Note that
\begin{equation}
    \label{eq:not_contained_in}
    j \notin \cc{I} = \bb{I}(\Omega^e\}_{e \in \cc{E}}) \iff [\Omega^e - \Omega^f]_{jj} = 0 \; \forall e,f \in \cc{E}.
\end{equation}
Now, pick any $e,f \in \cc{E}$ and see from \eqref{eq:covariances} that $\tilde{\Omega}^e - \tilde{\Omega}^f = M(\Omega^e - \Omega^f)M^T$.
Because $M_{:i} = e_i$ for $i \in \cc{I}$, it follows that $M_{ji} = 0$ for $j \notin \cc{I}$. Thus, we have that
$$[\tilde{\Omega}^e - \tilde{\Omega}^f]_{jj}
= M_{j:}(\Omega^e - \Omega^f)M^T_{:j}
= \sum_{i\in\cc{I}}M_{ij}[\Omega^e - \Omega^f]_{ii}
+ \sum_{k\notin\cc{I}}M_{kj}[\Omega^e - \Omega^f]_{kk}
= 0,$$
as by \eqref{eq:not_contained_in} $[\Omega^e - \Omega^f]_{kk} = 0$ in the second sum.
Because we picked $e,f$ arbitrarily, this means that $[\tilde{\Omega}^e - \tilde{\Omega}^f]_{jj} = 0$ for all $e,f \in \cc{E}$ and thus, by the same argument as in \eqref{eq:not_contained_in}, $j \notin \bb{I}(\{\tilde{\Omega}^e\}_{e \in \cc{E}})$. Together with the previous results, this shows that $\bb{I}(\{\tilde{\Omega}^e\}_{e \in \cc{E}}) = \bb{I}(\{\Omega^e\}_{e \in \cc{E}}) = \cc{I}$, completing the proof.
\end{proof}

\propscoreproperties*
\paragraph{Proof of score equivalence}
Consider a set $\cc{I} \subseteq [p]$ of intervention targets and two graphs $\dag$ and $\dag'$ such that $\dag \sim_\cc{I}\dag'$. We want to show that they attain the same score over any sample, that is $S(\dag, \cc{I}) = S(\dag', \cc{I})$. Note that we can rewrite
\begin{equation}
\label{eq:covariance_reduction}
    \log \p\left(\boldsymbol{X}^e; {B}, \Omega^e\right) = -n_e\ln\mathrm{det}({K}^e) - n_e\mathrm{tr}({K}^e\hat{\Sigma}^e),
\end{equation}
where $\hat{\Sigma}^e$ is the sample covariance for environment $e$, and $K^e := [(I-B)^{-1}\Omega^e(I-B)^{-T}]^{-1}$ is the inverse of the covariance matrix entailed by the model for environment $e$. Thus, we can rewrite the score \eqref{eq:score} as
\begin{equation}
\label{eq:score_trace}
S(\dag, \cc{I}) = 
    \max_{\substack{
    B \comp \dag\\    
    \text{diagonal } \{\Omega^e\}_{e \in \cc{E}}\\
    \text{s.t. }\bb{I}(\{\Omega^e\}_{e \in \cc{E}}) = \cc{I}
    }}
    \quad
    \sum_{e \in \cc{E}} n_e\ln\mathrm{det}({K}^e) - n_e\mathrm{tr}({K}^e\hat{\Sigma}^e) - \lambda\text{DoF}(\dag, \cc{I}).
\end{equation}
Now, let $(B, \{\Omega^e\}_{e \in \cc{E}})$ be the model parameters which maximize the score for $\dag$. By \autoref{lemma:graphs_dists}, there exists $(\tilde{B}, \{\tilde{\Omega}^e\}_{e \in \cc{E}})$ for which $\tilde{B} \sim \dag'$ and
$$(I-B)^{-1}\Omega^e(I-B)^{-T} = (K^e)^{-1} = (I-\tilde{B})^{-1}\Omega^e(I-\tilde{B})^{-T}.$$
Furthermore, because $\dag$ and $\dag'$ have the same number of edges, $\text{DoF}(\dag, \cc{I}) = \text{DoF}(\dag', \cc{I})$ and $S(\dag, \cc{I}) \leq S(\dag', \cc{I})$. If we apply the same reasoning to $\dag'$ and its maximizing parameters, we arrive at $S(\dag, \cc{I}) \geq S(\dag', \cc{I})$. It must hold that $S(\dag, \cc{I}) = S(\dag', \cc{I})$, completing the proof.

\paragraph{Proof of decomposability}
Note that if $B \in \bb{R}^{p \times p}$ has zeros on the diagonal and is lower triangular up to a permutation of rows and columns, then $(I-B)$ can be expressed as $PLP$ for some permutation matrix $P$ and some lower-triangular matrix $L$ with ones on the diagonal. Thus, $\mathrm{det}((I-B)^T) = \mathrm{det}((I-B)) = 1$, and we can rewrite the logarithm term in \eqref{eq:covariance_reduction} as
$$\ln\mathrm{det}({K}^e) = \ln[\mathrm{det}((I-B)^T)\mathrm{det}((\Omega^e)^{-1})\mathrm{det}(I-B)] = \ln(\prod_{i\in[p]} 1 / \Omega^e_{ii}) = -\sum_{i \in [p]} \ln \Omega^e_{ii}$$
For the trace term, we have that
$$\mathrm{tr}({K}^e\hat{\Sigma}^e) = \mathrm{tr}((I-B)^T(\Omega^e)^{-1}(I-B)\hat{\Sigma}^e) = \mathrm{tr}((\Omega^e)^{-1}(I-B)\hat{\Sigma}^e(I-B)^T),$$
where the last equality stems from the invariance of the trace to cyclic permutations. By rewriting the trace of products as the sum of entry-wise products, we have that
$$\mathrm{tr}({K}^e\hat{\Sigma}^e)=\sum_{i \in [p]}\sum_{j \in [p]} 1/\Omega^e_{ij} [(I-B)\hat{\Sigma}^e(I-B)^T)]_{ij} = \sum_{i \in [p]}1/\Omega^e_{ii} [(I-B)\hat{\Sigma}^e(I-B)^T)]_{ii},$$
and thus $\mathrm{tr}({K}^e\hat{\Sigma}^e)=\sum_{i \in [p]}1/\Omega^e_{ii} (I-B)_{i:}\hat{\Sigma}^e(I-B)^T)_{:i}$. Together with the new expression for the logarithm term, we can rewrite the score \eqref{eq:score_trace} as
\begin{align*}
\label{eq:score_trace}
S(\dag, \cc{I}) &= 
    \sum_{i\in[p]}\max_{\substack{
    B_{i:},\{\Omega_{ii}^e\}_{e \in \cc{E}}\\
    \text{s.t.}\\
    \supp{B_{i:}} \subseteq \pa_G(i)\\
    \Omega_{ii} > 0
    }}
    \sum_{e \in \cc{E}} -n_e\left[\ln \Omega^e_{ii} + 1 / \Omega^e_{ii}(I-B)_{i:}\hat{\Sigma}^e(I-B)^T)_{:i}\right] - \lambda\text{DoF}(i,\dag, \cc{I}),
\end{align*}
where $\text{DoF}(i,\dag, \cc{I}) = \abs{\pa_G(i)} + \mathrm{max}(\abs{\cc{E}}\boldsymbol{1}_{\{i \in \cc{I}\}}, 1)$ are the number of parameters estimated for variable $i$. In other words, the score decouples into a sum of terms depending only on each variable and its parents, which completes the proof.

\paragraph{Proof of score consistency}
Throughout, we consider $p$ fixed and $n_e \to \infty$ for every $e \in \mathcal{E}$.\\
For every $\dag \in \simec(\dag^\star)$, let $(B,\{\Omega^e\}_{e \in \cc{E}})$ be the unique connectivity matrix and noise variances that specify the population covariance matrices $\{\Sigma^e\}_{e \in \cc{E}}$. In our analysis, 
the parameter space $(B,\{\Omega^e\}_{e \in \mathcal{E}}$ {is assumed} to be compact. For notational ease, we denote the constraint set for the parameters $(B,\{\Omega^e\}_{e \in \cc{E}})$ as $\Theta$. {Assuming such a compactness constraint}
enables uniform convergence of M-estimators \citep{van2000empirical}. As an example of a compactness constraint, let $\tau_1,\tau_2,\tau_3> 0$ be scalars where for every such $(B,\{\Omega^e\}_{e \in \cc{E}})$, $\|(I-B)\|_{F} \leq \tau_1$, $\min_{i,e}|\Omega^e_{ii}| \geq \tau_2$, and $\max_{i,e}|\Omega^e_{ii}| \leq \tau_3$. Then, 
the space of connectivity matrices and noise variances {has}
the additional constraints $\|(I-B)\|_{F} \leq \tau_1$, $\min_{i,e}|\Omega^e_{ii}| \geq \tau_2$ and $\max_{i,e}|\Omega^e_{ii}| \leq \tau_3$ in the score function. We 
denote the score of a DAG $G$ and intervention set $\cc{I}$ when constrained to the compact parameter space $\Theta$ as $S(\dag,\cc{I})$. 

For any $\dag$ and intervention set $\cc{I}$, we let $S^\star(G,\cc{I})$ be the score in population, namely:
$$S^\star(\dag,\cc{I}) =\max_{\substack{
    B \comp \dag\\    
    \text{diagonal } \{\Omega^e\}_{e \in \cc{E}}\\(B,\{\Omega^e\}_{e \in \cc{E}})\in\Theta\\
    \text{s.t. }\bb{I}(\{\Omega^e\}_{e \in \cc{E}}) = \cc{I}}} \sum_{e\in\mathcal{E}} \pi^{e,\star}\left[-\ln \mathrm{det}(\Omega^e) - \mathrm{tr}((I-B)^T(\Omega^e)^{-1}(I-B)\Sigma^{e,\star})\right],$$
where $\Sigma^{e,\star}$ represents the true covariance in environment $e$ and $\pi^{e,\star} = {\lim_{\mbox{all}\ n_e \to \infty}} \frac{n^e}{\sum_{e \in \mathcal{E}}n^e}$. Notice that the minimizers $\argmin_{\dag,\cc{I}} S^\star(\dag,\cc{I})$ are the distributionally equivalent models to the population model $(B^\star,\{\Omega^{e,\star}\}_{e \in \cc{E}})$.


First, we show that for every $\dag$ and intervention set $\cc{I}$, the score function $S(\dag,\cc{I}) \overset{p}{\to} S^\star(\dag,\cc{I})$ as $n_e \to \infty$ for every $e \in \mathcal{E}$. This follows by the compactness of the parameter space leading to
uniform convergence of M-estimators,
and that $\lambda \to 0$ \citep{van2000empirical}. For more details, note that:
\begin{equation}
\begin{aligned}
S(\dag,\cc{I})+\lambda\text{DoF}(\dag,\cc{I}) = \max_{(B,\{\Omega^e\}_{e \in \cc{E}})\in\Theta}  &\sum_{e \in \mathcal{E}}\pi^{e,\star}\left[-\ln\mathrm{det}(\Omega^e) - \mathrm{tr}((I-B)^T(\Omega^e)^{-1}(I-B)\Sigma^{e,\star})\right]\\&- \mathrm{tr}((I-B)^T(\Omega^e)^{-1}(I-B)(\hat{\pi}^e\hat{\Sigma}^e-{\pi}^{e,\star}\Sigma^{e,\star}))\\&- (\hat{\pi}^e-\pi^{e,\star})\ln\mathrm{det}(\Omega^e)
    \end{aligned}
\label{eqn:temp}
\end{equation}
where $\hat{\pi}^e = \frac{n^e}{\sum_{e \in \mathcal{E}}n^e}$. From there, we have
\begin{equation}
\begin{aligned}
S(\dag,\cc{I}) &\geq  S^\star(\dag,\cc{I}) - \max_{(B,\{\Omega^e\}_{e \in \cc{E}})\in\Theta}
    \sum_{e \in \mathcal{E}} \mathrm{tr}\left((I-B)^T(\Omega^e)^{-1}(I-B)(\hat{\pi}^e\hat{\Sigma}^e-{\pi}^{e,\star}\Sigma^{e,\star}))\right)\\
    &-\max_{(B,\{\Omega^e\}_{e \in \cc{E}})\in\Theta}
    \sum_{e \in \mathcal{E}}(\hat{\pi}^e-\pi^{e,\star})\ln\mathrm{det}(\Omega^e)- \lambda~\text{DoF}(\dag,\cc{I}),\\
S(\dag,\cc{I}) &\leq S^\star(\dag,\cc{I}) + \max_{(B,\{\Omega^e\}_{e \in \cc{E}})\in\Theta}
    \sum_{e \in \mathcal{E}} \mathrm{tr}\left((I-B)^T(\Omega^e)^{-1}(I-B)(\hat{\pi}^e\hat{\Sigma}^e-{\pi}^{e,\star}\Sigma^{e,\star})\right)\\&+\max_{(B,\{\Omega^e\}_{e \in \cc{E}})\in\Theta}
    \sum_{e \in \mathcal{E}}(\hat{\pi}^e-\pi^{e,\star})\ln\mathrm{det}(\Omega^e).
\end{aligned}
\label{eqn:upper_lower_S}
\end{equation}
By the compactness constraint, 
\begin{gather*}
\max_{(B,\{\Omega^e\}_{e \in \cc{E}})\in\Theta}
    \sum_{e \in \mathcal{E}} \mathrm{tr}\left((I-B)^T(\Omega^e)^{-1}(I-B)(\hat{\pi}^e\hat{\Sigma}^e-{\pi}^{e,\star}\Sigma^{e,\star})\right) \to 0 \\
    \max_{(B,\{\Omega^e\}_{e \in \cc{E}})\in\Theta}
    \sum_{e \in \mathcal{E}}(\hat{\pi}^e-\pi^{e,\star})\ln\mathrm{det}(\Omega^e) \to 0
\end{gather*}    
    as $n^e \to \infty$. Furthermore, since $\lambda \to 0$ as $n^e \to \infty$ for every $e \in \mathcal{E}$, and $\text{DoF}(\dag,\cc{I})$ is bounded, we have that $\lambda\text{DoF}(\dag,\cc{I}) \to 0$. We can thus conclude  $S(\dag,\cc{I}) \to S^\star(\dag,\cc{I})$ in the infinite data limit for every environment. 
    

Suppose the regularization parameter $\lambda \to 0$ and is chosen such that for every $e\in \mathcal{E}$, it is above fluctuations due to sampling error:
\begin{equation*}
\begin{aligned}
\lambda &\gg \left|\max_{(B,\{\Omega^e\}_{e \in \cc{E}})\in\Theta}
    \sum_{e \in \mathcal{E}} \mathrm{tr}\left((I-B)^T(\Omega^e)^{-1}(I-B)(\pi^{e,\star}\Sigma^{e,\star}-\hat{\pi}^e\hat{\Sigma}^e)\right)\right|\\&+ \left|\max_{(B,\{\Omega^e\}_{e \in \cc{E}})\in\Theta}
    \sum_{e \in \mathcal{E}}(\hat{\pi}^e-\pi^{e,\star})\ln\mathrm{det}(\Omega^e)\right|.
\end{aligned}
\end{equation*}
With this choice of the regularization, it follows that among any distributionally equivalent models $(\dag,\cc{I})$ and $(\dag',\cc{I}')$, if $\text{DoF}(\dag,\cc{I}) < \text{DoF}(\dag,\cc{I})$, then, there exists $N$ such that for any $n^e \geq N$, $S(\dag,\cc{I}) > S(\dag',\cc{I}')$. This allows us to conclude that:
\begin{equation*}
\mathbb{P}\left(\argmax S(\dag,\cc{I}) = \argmin \text{DoF}(\dag,\cc{I}) \text{  subject-to  } \dag,\cc{I} \in \argmax S^\star(\dag,\cc{I})\right) \to 1,
\end{equation*}
as $n^e \to \infty$ for every $e \in \mathcal{E}$. In other words, in the infinite data limit, the maximizers of $\argmax_{\dag,\cc{I}}S(\dag,\cc{I})$ are given by:

\begin{equation}
\begin{aligned}
    \argmin_{\dag,\cc{I}}&~~\text{DoF}(\dag,\mathcal{I})\\\text{subject-to}&~~ B \comp \dag, \mathcal{I} = \mathbb{I}(\{\Omega^e\}_{e\in\cc{E}}),\text{ and}\\
    &~~\Sigma_e^\star = (I-B)^{-1}\Omega^e(I-B)^{-T} ~~\text{ for every }e\in\mathcal{E}.
    \end{aligned}
    \label{eq:optimal_params}
\end{equation}
Our remaining goal is to show that $\cc{I}^\star\text{-MEC}(\cc{G}(B^\star))$ is the set of minimizers in \eqref{eq:optimal_params}. By definition, $\Sigma_e^\star = (I-B^\star)^{-1}{\Omega^{e,\star}}(I-B^\star)^{-T}$ for every $e\in\mathcal{E}$. Since the model $(B^\star,\{\Omega^{e,\star}\}_{e\in\mathcal{E}})$ satisfies Assumptions \ref{assm:int_heter} and \ref{assm:model_truth}, we appeal to \autoref{lemma:targets_parents} to conclude that a feasible $B$ in \eqref{eq:optimal_params} satisfies $B_{i,:} = B^\star_{i,:}$ for all $i \in \mathcal{I}^\star$. Since  $(I-B^\star)^{-1}{\Omega^{e,\star}}(I-B^\star)^{-T} = (I-B)^{-1}\Omega^e(I-B)^{-T}$, the previous conclusion implies that $\Omega^{e,\star}_{ii} = \Omega^e_{ii}$ for all $i \in \mathcal{I}^\star$ and feasible $\{\Omega^e\}_{e\in\mathcal{E}}$. Thus, any feasible $\mathcal{I}$ satisfies $\mathcal{I} \supseteq \mathcal{I}^\star$. Furthermore, from \autoref{prop:imec_full} we know that:
\begin{equation}
\begin{aligned}
    \simec(\dag^\star) = \arg\min_{\dag}&~~\text{DoF}(\dag)\\\text{subject-to}&~~ B \sim \dag\\
    &~~\Sigma_e^\star = (I-B)^{-1}\Omega^e(I-B)^{-T} ~~\text{ for every }e\in\mathcal{E}.
    \end{aligned}
    \label{eq:optimal_params_b}
\end{equation}
Notice that $\text{DoF}(\dag,\mathcal{I}) = \text{DoF}(\dag)+\text{DoF}(\mathcal{I})$. Combining this the relation in \eqref{eq:optimal_params_b} and the fact that any feasible $\mathcal{I}$ in \eqref{eq:optimal_params} satisfies $\mathcal{I} \supseteq \mathcal{I}^\star$, we arrive at the desired result.

\section{Proof of Theorem~\ref{thm:outer_procedure}}
\label{s:proof_outer}
\begin{proof}
Throughout, we consider $p$ fixed and $n_e \to \infty$ for every $e \in \mathcal{E}$. We consider a similar compactness constraint as the one described in Section~\ref{s:proofs_score_properties}. 
Since $S^\star(\dag,\cc{I})$ is a maximum-likelihood objective, the maximizers $\argmax_{\dag,\cc{I}} S^\star(\dag,\cc{I})$ are the distributionally equivalent models to the population model $(B^\star,\{\Omega^{e,\star}\}_{e \in \cc{E}})$, that is the set: 
$$\{(\dag,\cc{I}): \exists (B,\{\Omega^e\}_{e \in \mathcal{E}}) \text{ such that }\\ \cc{I} = \mathbb{I}(\{\Omega^e\}_{e \in \mathcal{E}}) \text{ and }
\Sigma^\star_e = (I-B)^{-1}\Omega^e(I-B)^{-T}\}.$$
Note here that the tuple $(B^\star,\cc{I}^\star)$ belongs to this set so that $\max_{\dag,\cc{I}} S^\star(\dag,\cc{I}) = S^\star(\dag^\star,\cc{I}^\star)$. Since the model $(B^\star,\{\Omega^{e,\star}\}_{e\in\mathcal{E}})$ satisfies Assumptions \ref{assm:int_heter} and \ref{assm:model_truth}, we appeal to \autoref{lemma:targets_parents} to conclude that a feasible $B$ in \eqref{eq:optimal_params} satisfies $B_{i,:} = B^\star_{i,:}$ for all $i \in \mathcal{I}^\star$. Since  $(I-B^\star)^{-1}{\Omega^{e,\star}}(I-B^\star)^{-T} = (I-B)^{-1}\Omega^e(I-B)^{-T}$, the previous conclusion implies that $\Omega^{e,\star}_{ii} = \Omega^e_{ii}$ for all $i \in \mathcal{I}^\star$ and feasible $\{\Omega^e\}_{e\in\mathcal{E}}$. Thus, all optimal $\mathcal{I}$ that are maximizers of $\argmax_{\dag,\cc{I}} S^\star(\dag,\cc{I})$ satisfy the property $\mathcal{I} \supseteq \mathcal{I}^\star$. Furthermore, it is straightforward to see that all $\cc{I} \supseteq \cc{I}^\star$ are also all optimal with respect to the optimization problem $\argmax_{\dag,\cc{I}} S^\star(\dag,\cc{I})$. 

We are now ready to prove the theorem statement. Let $\cc{I}$ be the set of intervention targets at the initial step of the backward phase of Algorithm~\ref{algo:outer}. From the assumption of Theorem~\ref{thm:outer_procedure}, we have that $\cc{I}^\star \subseteq \cc{I}$. Given the consistency of inner procedure, the backward phase of Algorithm~\ref{algo:outer} finds $j^\star \in \cc{I}$ given by $j^\star = \argmax_{G,j}S(G,\cc{I}\setminus{j})$. If $\max_{G}S(G,\cc{I}\setminus{j^\star}) > \max_{G}S(G,\cc{I})$, then the algorithm removes intervention target $j$ from $\cc{I}$. First, we show that the algorithm never removes $j \in \cc{I}^\star$. Specifically, decomposing $S(G,\cc{I})$ as $S(G,\cc{I}) - S^\star(G,\cc{I})+ S^\star(G,\cc{I})$ and from \eqref{eqn:temp}, we obtain:
\begin{eqnarray*}
\begin{aligned}
\max_{\dag} S(\dag,\cc{I}) &\geq \max_{\dag} S^\star(\dag,\cc{I}) - \lambda\max_{\dag}\mathrm{DoF}(\dag,\cc{I})\\
&-\max_{\dag}\max_{(B,\{\Omega^e\}_{e \in \cc{E}})\in\Theta, B \sim \dag}
    \sum_{e \in \mathcal{E}} \mathrm{tr}\left((I-B)^T(\Omega^e)^{-1}(I-B)(\hat{\pi}^e\hat{\Sigma}^e-{\pi}^{e,\star}\Sigma^{e,\star})\right)\\&-\max_{\dag}\max_{(B,\{\Omega^e\}_{e \in \cc{E}})\in\Theta,B \sim \dag}
    \sum_{e \in \mathcal{E}}(\hat{\pi}^e-\pi^{e,\star})\ln\mathrm{det}(\Omega^e), \\
\max_{\dag} S(\dag,\cc{I} \setminus\{j\}) &\leq \max_{\dag} S^\star(\dag,\cc{I} \setminus \{j\})\\
&+\max_{\dag}\max_{(B,\{\Omega^e\}_{e \in \cc{E}})\in\Theta, B \sim \dag}
    \sum_{e \in \mathcal{E}} \mathrm{tr}\left((I-B)^T(\Omega^e)^{-1}(I-B)(\hat{\pi}^e\hat{\Sigma}^e-{\pi}^{e,\star}\Sigma^{e,\star})\right)\\&+\max_{\dag}\max_{(B,\{\Omega^e\}_{e \in \cc{E}})\in\Theta,B \sim \dag}
    \sum_{e \in \mathcal{E}}(\hat{\pi}^e-\pi^{e,\star})\ln\mathrm{det}(\Omega^e).
\end{aligned}
\end{eqnarray*}
By the compactness constraint, 
\begin{gather*}
\max_\dag\max_{(B,\{\Omega^e\}_{e \in \cc{E}})\in\Theta}
    \sum_{e \in \mathcal{E}} \mathrm{tr}\left((I-B)^T(\Omega^e)^{-1}(I-B)(\hat{\pi}^e\hat{\Sigma}^e-{\pi}^{e,\star}\Sigma^{e,\star})\right) \to 0, \\
    \max_\dag\max_{(B,\{\Omega^e\}_{e \in \cc{E}})\in\Theta}
    \sum_{e \in \mathcal{E}}(\hat{\pi}^e-\pi^{e,\star})\ln\mathrm{det}(\Omega^e) \to 0,
\end{gather*}    
    as $n^e \to \infty$. Furthermore, since $\lambda \to 0$ as $n^e \to \infty$ for every $e \in \mathcal{E}$, and $\text{DoF}(\dag,\cc{I})$ is bounded, we have that $\lambda\text{DoF}(\dag,\cc{I}) \to 0$. From the earlier analysis, we have that $\max_{G} S^\star(G,\cc{I}) = S^\star(\mathcal{G}^\star,\cc{I}^\star)$ and that $S^\star(\mathcal{G}^\star,\cc{I}^\star) > \max_\dag S^\star(\dag,\cc{I}\setminus\{j\})$. Putting everything together, we conclude that there exists an integer $N$ such that for $n^e \geq N$ for every environment $e \in \mathcal{E}$, $\max_\dag S(\dag,\cc{I}) > \max_\dag S(\dag,\cc{I}\setminus\{j\})$. In other words, in the infinite data regime, any node $j \in \cc{I}^\star$ will not be removed by the update on $\cc{I}$.  

    We now consider any potential $j \in \mathcal{I} \setminus \cc{I}^\star$. We will argue that in the infinite data regime, $\max_\dag S(\dag,\cc{I}) < \max_\dag S(\dag,\cc{I}\setminus\{j\})$ so that our algorithm will remove $j$ from $\cc{I}$. From \eqref{eqn:temp}, we can conclude that:

    \begin{equation*}
\begin{aligned}
\max_{\dag}S(\dag,\cc{I}) &\geq \max_\dag S^\star(\dag,\cc{I}) - \max_\dag\max_{(B,\{\Omega^e\}_{e \in \cc{E}})\in\Theta}
    \sum_{e \in \mathcal{E}} \mathrm{tr}\left((I-B)^T(\Omega^e)^{-1}(I-B)(\hat{\pi}^e\hat{\Sigma}^e-{\pi}^{e,\star}\Sigma^{e,\star}))\right)\\
    &-\max_\dag\max_{(B,\{\Omega^e\}_{e \in \cc{E}})\in\Theta}
    \sum_{e \in \mathcal{E}}(\hat{\pi}^e-\pi^{e,\star})\ln\mathrm{det}(\Omega^e)- \max_\dag\lambda~\text{DoF}(\dag,\cc{I}),\\
\max_\dag S(\dag,\cc{I}) &\leq \max_\dag S^\star(\dag,\cc{I}) + \max_{(B,\{\Omega^e\}_{e \in \cc{E}})\in\Theta}
    \sum_{e \in \mathcal{E}} \mathrm{tr}\left((I-B)^T(\Omega^e)^{-1}(I-B)(\hat{\pi}^e\hat{\Sigma}^e-{\pi}^{e,\star}\Sigma^{e,\star})\right)\\&+\max_\dag\max_{(B,\{\Omega^e\}_{e \in \cc{E}})\in\Theta}
    \sum_{e \in \mathcal{E}}(\hat{\pi}^e-\pi^{e,\star})\ln\mathrm{det}(\Omega^e).
\end{aligned}
\label{eqn:upper_lower_S2}
\end{equation*}
Again, appealing to the compactness constraint, and that $\lambda \to 0$ as $n^e \to \infty$ for every $e \in \mathcal{E}$, we conclude that $\max_{\dag} S(\dag,\cc{I})  \overset{p}{\to} \max_\dag S^\star(\dag,\cc{I})$. Thus, as $n^e \to \infty$, the maximizers $\argmax_{\dag} S(\dag,\cc{I}) \subseteq \argmax_\dag S^\star(\dag,\cc{I})$. Since $\cc{I} \supseteq \cc{I}^\star$, recall that the maximizers $\argmax_\dag S^\star(\dag,\cc{I})$ are distributional equivalent models. Further, among distributional equivalent models $(\dag,\cc{I})$ and $(\dag',\cc{I})$, if $\mathrm{DoF}(\dag,\cc{I}) < \mathrm{DoF}(\dag,\cc{I})$, and the regularization $\lambda$ is above fluctuations due to sampling error
\begin{equation*}
\begin{aligned}
\lambda &\gg \left|\max_{(B,\{\Omega^e\}_{e \in \cc{E}})\in\Theta}
    \sum_{e \in \mathcal{E}} \mathrm{tr}\left((I-B)^T(\Omega^e)^{-1}(I-B)(\pi^{e,\star}\Sigma^{e,\star}-\hat{\pi}^e\hat{\Sigma}^e)\right)\right|\\&+ \left|\max_{(B,\{\Omega^e\}_{e \in \cc{E}})\in\Theta}
    \sum_{e \in \mathcal{E}}(\hat{\pi}^e-\pi^{e,\star})\ln\mathrm{det}(\Omega^e)\right|,
\end{aligned}
\end{equation*}
then there exists $N$ such that for any $n^e < N$, $S(G,\cc{I}) > S(G',\cc{I})$. This allows us to conclude that the maximizers of $\argmax_\dag S(\dag,\cc{I})$ are given by the solutions to the following optimization problem
\begin{equation*}
\begin{aligned}
    \argmin_{\dag}&~~\text{DoF}(\dag)\quad \text{subject-to }~~ B \sim \dag~~;~~\Sigma_e^\star = (I-B)^{-1}\Omega^e(I-B)^{-T} ~~\text{ for every }e\in\mathcal{E}.
    \end{aligned}
    \label{eq:optimal_params_b_}
\end{equation*}
From \autoref{prop:imec_full}, we know in the infinite data regime, as $n^e \to \infty$ for every $e \in \mathcal{E}$, $\lambda \to 0$ and $\lambda$ being above the fluctuations due to sampling error (described above), the maximizes $\argmax_\dag S(\dag,\cc{I})$ converge to $\simec(\dag^\star)$. 

Since $j \not\in \cc{I}^\star$, a similar argument as above allows us to conclude that
$\max_\dag S(\dag,\cc{I} \setminus \{j\}) \overset{p}{\to} S^\star(\dag^\star,\cc{I}^\star)$ as $n^e \to \infty$ and the maximizers of
$\argmax_\dag S(\dag,\cc{I} \setminus \{j\})$ converge to $\simec(\dag^\star)$. Let $N$ be the integer such that for $n^e \geq N$, maximizers of $\argmax_\dag S(\dag,\cc{I} \setminus \{j\})$ and $\argmax_\dag S(\dag,\cc{I})$ are both $\simec(\dag^\star)$. Since $\mathrm{DoF}(\dag)$ is the same for every $\dag \in \simec(\dag^\star)$, and that $\mathrm{DoF}(G,\cc{I} \setminus \{j\}) + 1 = \mathrm{DoF}(G,\cc{I})$, we conclude that for $n^e \geq N$, $\max_\dag S(\dag,\cc{I}) < \max_\dag S(\dag,\cc{I}\setminus\{j\})$, as desired. 

We have thus concluded that in the first step of the backward phase of the algorithm, we remove some $j \in \cc{I} \setminus \cc{I}^\star$. We can repeat the same argument to conclude that estimated intervention targets estimated by our algorithm converges to $\cc{I}^\star$ and the estimated equivalence class of graphs converges to $\simec(\dag^\star)$.

\end{proof}

 \section{Discussion of Assumption \ref{assm:model_truth}}
\label{s:assm}

We restate \autoref{assm:model_truth} here for completeness.

\modeltruth*
\noindent
A violation of \autoref{assm:model_truth} would imply that, given some conditioning set, the conditional variance of a variable, is not a function of the variance of its noise term. We formalize and prove this statement in \autoref{lemma:violation} below.

\begin{lemma}[Violation of \autoref{assm:model_truth}]
\label{lemma:violation}
Let $(B,\Omega)$ be a model and $X := (I-B)^{-1} \epsilon$ with $\epsilon \sim \cc{N}(0, \Omega)$ the normal random vector it entails. If there exists an equivalent model $(B', \Omega') \in [(B, \Omega)]$ such that $[(I-B')(I-B)^{-1}]_{ii} = 0$, then there exists $S \subseteq [p] \setminus \{i\}$ for which
the variance of $X_i$ given $X_S$ is not a function of the variance of its noise-term $\epsilon_i$.
\end{lemma}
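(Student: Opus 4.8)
The plan is to exploit the hypothesis $M_{ii} = 0$, where $M := (I-B')(I-B)^{-1}$, to exhibit a conditioning set whose associated conditional variance of $X_i$ is blind to $\Omega_{ii}$. First I would record the two consequences of distribution equivalence: rearranging $(I-B)^{-1}\Omega(I-B)^{-T} = (I-B')^{-1}\Omega'(I-B')^{-T}$ gives $\Omega' = M\Omega M^T$, and the structural reading of the equivalent model gives $\epsilon' := (I-B')X = M\epsilon$, so that its $i$-th noise term is $\epsilon'_i = M_{i:}\epsilon = \sum_{j \neq i} M_{ij}\epsilon_j$, the sum excluding $j = i$ precisely because $M_{ii}=0$. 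I would take the conditioning set $S := \supp{B'_{i:}} = \pa_{\cc{G}(B')}(i)$, which satisfies $S \subseteq \preds \setminus \{i\}$ since $B'$ has a zero diagonal.

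To formalize ``not a function of the variance of $\epsilon_i$'', I would consider the one-parameter family of models $(B, \Omega^{(t)})$ that keeps $B$ and every $\Omega_{jj}$, $j\neq i$, fixed while setting $\Omega^{(t)}_{ii} = t > 0$; write $X^{(t)} = (I-B)^{-1}\epsilon^{(t)}$ with $\epsilon^{(t)} \sim \cc{N}(0,\Omega^{(t)})$ for the entailed vector, and show $\var(X^{(t)}_i \mid X^{(t)}_S)$ is constant in $t$. The device is to use the \emph{fixed} regression coefficients $B'_{iS}$ coming from the equivalent model: define the candidate residual $r^{(t)} := X^{(t)}_i - B'_{iS}X^{(t)}_S = [(I-B')X^{(t)}]_i = M_{i:}\epsilon^{(t)} = \sum_{j\neq i} M_{ij}\epsilon^{(t)}_j$. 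Because this sum omits the index $i$, it depends only on $\{\epsilon_j\}_{j\neq i}$, whose distribution is identical across the family, so $\var(r^{(t)}) = \sum_{j\neq i} M_{ij}^2\Omega_{jj}$ involves no $t$.

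The one computation that must go through is the orthogonality $r^{(t)} \indep X^{(t)}_S$ for every $t$. For $k \in S$ (so $k \neq i$) I would compute, using that $\Omega^{(t)}$ is diagonal, $\cov(r^{(t)}, X^{(t)}_k) = \sum_{l\neq i} M_{il}[(I-B)^{-1}]_{kl}\Omega^{(t)}_{ll} = \sum_{l\neq i} M_{il}[(I-B)^{-1}]_{kl}\Omega_{ll}$, the last equality because $\Omega^{(t)}_{ll} = \Omega_{ll}$ for all $l \neq i$. This expression is manifestly independent of $t$, and for the original model it is exactly $\cov(\epsilon'_i, X_k)$, which vanishes by \autoref{lemma:independence} applied to the equivalent model $(B',\Omega')$ (yielding $\epsilon'_i \indep X_{\supp{B'_{i:}}} = X_S$). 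The crucial role of the hypothesis $M_{ii}=0$ is exactly here: it removes the lone term carrying $\Omega_{ii}$ from this covariance, so the orthogonality that holds at the original value of $\Omega_{ii}$ persists as $\Omega_{ii}$ is varied.

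Finally, having established $r^{(t)} \indep X^{(t)}_S$ with $X^{(t)}_i = B'_{iS}X^{(t)}_S + r^{(t)}$, uniqueness of the Gaussian orthogonal projection identifies $r^{(t)}$ with the regression residual, whence $\var(X^{(t)}_i \mid X^{(t)}_S) = \var(r^{(t)}) = \sum_{j\neq i}M_{ij}^2\Omega_{jj}$, constant across the family. Thus the conditional variance of $X_i$ given $X_S$ is unchanged as the noise variance $\Omega_{ii}$ ranges over the positive reals, i.e.\ it is not a function of it, which is the claim. The main obstacle is conceptual rather than computational: one must resist reusing the equivalent-model machinery on the perturbed models, since $(B', M\Omega^{(t)}M^T)$ need not have diagonal noise for $t \neq \Omega_{ii}$; working directly with the fixed coefficients $B'_{iS}$ and checking orthogonality by hand is what sidesteps this.
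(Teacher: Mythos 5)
Your proof is correct and follows essentially the same route as the paper's: the same conditioning set $S = \supp{B'_{i:}}$, the same appeal to \autoref{lemma:independence}, and the same final expression $\sum_{j\neq i}M_{ij}^2\Omega_{jj}$ for the conditional variance. The only substantive difference is that you make the phrase ``not a function of the variance of $\epsilon_i$'' precise via an explicit one-parameter perturbation of $\Omega_{ii}$ together with a direct covariance check that the orthogonality $r^{(t)} \indep X^{(t)}_S$ persists along the family, whereas the paper simply computes $\var(X_i \mid X_S) = \Omega'_{ii} = \sum_{j\neq i}M_{ij}^2\Omega_{jj}$ inside the equivalent model and reads off that the expression omits $\Omega_{ii}$; your version is the more careful of the two, and your remark that the equivalent-model machinery cannot be reused on the perturbed models (since $M\Omega^{(t)}M^T$ need not stay diagonal) correctly identifies the point the paper glosses over.
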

\begin{proof}
Because $(B', \Omega') \in [(B, \Omega)]$, we have that
\begin{equation}
\label{eq:same_covariance_apx}
    (I-B)^{-1}\Omega(I-B)^{-T} = (I-B')^{-1}\Omega'(I-B')^{-T}.
\end{equation}
Now, let $M := (I-B')(I-B)^{-1}$ and assume $M_{ii} = 0$ for some $i \in [p]$. It must hold that $\supp{B_{i:}} \neq \supp{B'_{i:}}$, or by Lemmas \ref{lemma:same_support} and \ref{lemma:structure} we would have $[(I-B)(I-B')]_{ii} = 1$.
Define $X' := (I-B')^{-1} \epsilon'$ with $\epsilon' \sim \cc{N}(0, \Omega')$ and let $S:= \supp{B'_{i:}}$.
Because the models are equivalent, the random vectors $X$ and $X'$ follow the same distribution. Thus, for all $x_S \in \bb{R}^{\abs{S}}$
\begin{align*}
     \var(X_i \mid X_s = x_S) &= \var(X'_i \mid X'_s = x_S)\\
     \text{by \eqref{eq:model}}\quad &= \var\Big({\sum_{j \in S} B'_{ij} X'_j + \epsilon'_i \mid X'_s = x_S}\Big)\\
     &= \var\left(B'_{iS} x_S + \epsilon'_i \mid X'_s = x_S\right)\\
     &= \var\left(\epsilon'_i \mid X'_s = x_S\right)\\
     \text{by \autoref{lemma:independence}}\quad&= \var(\epsilon'_i) = \Omega'_{ii}\\
     \text{by \eqref{eq:same_covariance_apx}}\quad&= \sum_{j \in [p]} M^2_{ij}\Omega_{jj}\\
     \text{because $M_{ii} = 0$}\quad&= \sum_{j \neq i} M^2_{ij}\Omega_{jj}.
\end{align*}
In other words, the conditional variance of $X_i$ given $X_S$ is not a function of its noise-term variance $\Omega_{ii}$.
\end{proof}
As additional empirical support for the claim that \autoref{assm:model_truth} is a direct consequence of faithfulness and our modeling assumptions, we check that the assumption indeed holds for all random models generated in the synthetic experiments of \autoref{s:experiments}.

\ntext{\section{Violation of score equivalence by the $\ell_0$-penalized score}
\label{s:score_equivalence_violation}

We offer some discussion and an example of how the score corresponding to the objective in \eqref{eq:mle_plain}, i.e., the $\ell_0$-penalized likelihood score from \citet{chickering2002optimal}, does not satisfy score equivalence. Namely, graphs in the same $\cc{I}$-equivalence class may obtain different scores on the same finite sample. For compactness, we state the score corresponding to \eqref{eq:mle_plain} here
\begin{equation}
\label{eq:l0_score}
\begin{aligned}
        \cc{S}(\cc{G}) := \max_{\substack{
    B \comp \text{DAG}\\    
    \text{diagonal pos. def. } \{\Omega^e\}_{e \in     
    \cc{E}}
    }}
    \;
    \sum_{e \in \cc{E}}
    \log \p(\boldsymbol{X}^e; B, \Omega^e)    
    - \lambda \norm{\cc{G}(B)}_0,
\end{aligned}
\end{equation}
The likelihood term can be formulated in terms of the sample covariances and covariances implied by the model for each environment $e \in \cc{E}$, that is
\begin{equation}
    \log \p\left(\boldsymbol{X}^e; {B}, \Omega^e\right) = -n_e\ln\mathrm{det}({K}^e) - n_e\mathrm{tr}({K}^e\hat{\Sigma}^e),
\end{equation}
where $\hat{\Sigma}^e$ is the sample covariance for environment $e$, and $K^e := [(I-B)^{-1}\Omega^e(I-B)^{-T}]^{-1}$ is the inverse of the covariance matrix implied by $B$ and $\Omega^e$ for the graph $\cc{G}$. Thus, a necessary condition for two graphs to attain the same score on any finite sample is that they produce the same sets of covariance matrices under the constraints in \eqref{eq:l0_score}. For a single environment ($|\cc{E}| = 1$, i.e., $\cc{I} = \emptyset$), this is true for any two Markov-equivalent graphs (see proof of \autoref{prop:score_properties} in Appendix~\ref{s:proofs_score_properties}). However this does not hold for $|\cc{E}| > 1$, because---without the additional (sufficient) constraints placed on the noise-term matrices $\{\Omega^e\}_{e \in \cc{E}}$ by the GnIES score in \eqref{eq:dist_condition}---two $\cc{I}$-equivalent graphs may yield disjoint sets of covariance matrices for each of the environments in $\cc{E}$.

As an example, consider the following two graphs: $X_1 \to X_2$, which we denote by $\cc{G}$, and $X_1 \leftarrow X_2$, which we denote by $\tilde{\cc{G}}$. The two are $\cc{I}$-equivalent for $\cc{I}=\emptyset$. Let $(B, \{\Omega, \Omega'\})$ and $(\tilde{B}, \{\tilde{\Omega}, \tilde{\Omega}'\})$ be any two models that satisfy the constraints in \eqref{eq:l0_score} for $\cc{G}$ and $\tilde{\cc{G}}$, respectively. Additionally, let $\Omega_{11} = \Omega'_{11}$, $\Omega'_{22} \neq \Omega_{22}$, and $B_{21} = 1$. For the first environment, the covariance implied by the first model (graph $\cc{G}$) is given by
$$\Sigma = (I-B)^{-1}\Omega(I-B)^{-T} = \begin{pmatrix}
\Omega_{11} & B_{21} \Omega_{11}\\
B_{21} \Omega_{11} & B_{21}^2\Omega_{11} + \Omega_{22}
\end{pmatrix},$$
and the covariance implied by $\tilde{\cc{G}}$ is
$$\tilde{\Sigma} = \begin{pmatrix}
\tilde{B}_{12}^2\tilde{\Omega}_{22} + \tilde{\Omega}_{11} & \tilde{B}_{12} \tilde{\Omega}_{22}\\
\tilde{B}_{12} \tilde{\Omega}_{22} & \tilde{\Omega}_{22}
\end{pmatrix}.$$
The two covariances are equal if and only if (note: $B_{21} = 1$)
\begin{equation}
\label{eq:counter_beta}
\tilde{\Omega}_{11} = \Omega_{11} - \frac{\Omega_{11}}{\Omega_{11} + \Omega_{22}},\; \tilde{\Omega}_{22} = \Omega_{11} + \Omega_{22},\text{ and }\; \tilde{B}_{12} = \frac{\Omega_{11}}{\Omega_{11} + \Omega_{22}}.
\end{equation}
For the second environment, the covariances matrices are given by
$$\Sigma = \begin{pmatrix}
\Omega'_{11} & B_{21} \Omega'_{11}\\
B_{21} \Omega'_{11} & B_{21}^2\Omega'_{11} + \Omega'_{22}
\end{pmatrix}\text{ and }\tilde{\Sigma}' = \begin{pmatrix}
\tilde{B}_{12}^2\tilde{\Omega}'_{22} + \tilde{\Omega}'_{11} & \tilde{B}_{12} \tilde{\Omega}'_{22}\\
\tilde{B}_{12} \tilde{\Omega}'_{22} & \tilde{\Omega}'_{22}
\end{pmatrix}.$$
As before, the two covariances are equal if and only if
$$\tilde{\Omega}'_{11} = \Omega'_{11} - \frac{\Omega'_{11}}{\Omega'_{11} + \Omega'_{22}},\; \tilde{\Omega}'_{22} = \Omega'_{11} + \Omega'_{22},\text{ and }\; \tilde{B}_{12} = \frac{\Omega'_{11}}{\Omega'_{11} + \Omega'_{22}} = \frac{\Omega_{11}}{\Omega_{11} + \Omega'_{22}}.$$
However, this is a contradiction since the value obtained for $\tilde{B}_{12}$ above differs from that in \eqref{eq:counter_beta}, which is not possible as $\tilde{B}_{12}$ is defined as being constant across environments.}
\section{GnIES Completion Algorithm}
\label{s:completion_algorithm}

Below we detail the completion algorithm used at the end of each step of the inner procedure from GnIES. The algorithm is also employed in the experiments to compute an estimate of the equivalence class from the graph and intervention targets returned by UT-IGSP.

\begin{algorithm}[H]
\caption{Completion algorithm for the inner procedure of GnIES}
\begin{algorithmic}[1]
\vspace{0.1in}
\STATE {\bf Input}: PDAG $P$, intervention set $\cc{I}$.\newline
Precondition: The PDAG $P$ has no undirected in- or out-going edges from the nodes in $\cc{I}$; this is guaranteed during the inner procedure of GnIES.
\vspace{0.05in}
\STATE {\bf Obtain the CPDAG} $C$ by completing $P$ using the GES completion algorithm \citep[Appendix C]{chickering2002optimal}
\STATE {\bf Orient edges}: For every node in $\cc{I}$, orient all in- and out-going edges to match the orientation in the original PDAG $P$
\STATE {\bf Apply the Meek rules} \citep{Meek1995CausalIA} iteratively to $C$ until no more edges can be oriented.
\STATE{\bf Output:} The I-CPDAG $C$ representing the $\imec$.
\end{algorithmic} \label{algo:completion}
\end{algorithm}
\section{Software Contributions}
\label{s:software}

The code, data sets and instructions to reproduce the experiments and figures in the manuscript can be found in the repository \href{https://github.com/juangamella/gnies-paper}{\texttt{github.com/juangamella/gnies-paper}}. The Python implementation of the GnIES algorithm is available through the \href{https://github.com/juangamella/gnies}{\texttt{gnies}} package (see details below). During our work on this paper, we developed additional software which we make available through the Python packages listed below. The packages are well documented, actively maintained, and available through the python package index\footnote{See \href{https://pypi.org/}{\texttt{https://pypi.org/}}.}, that is, they can easily be installed with Python's package manager by typing \texttt{pip install <package\_name>} into a suitable terminal. More details and installation instructions can be found in their respective repositories, listed below.

\begin{center}
\begin{tabular}{ | m{1.5cm} | m{5.5cm}| m{7cm} | } 
  \hline
  Package & Description & Details and documentation \\ 
  \hline
  \hline
  \texttt{gnies} & The GnIES algorithm. & \href{https://github.com/juangamella/gnies}{\texttt{github.com/juangamella/gnies}}\\
  \hline
  \texttt{ges} & A pure Python implementation of the GES algorithm \citep{chickering2002optimal}. & \href{https://github.com/juangamella/ges}{\texttt{github.com/juangamella/ges}}\\
  \hline
  \texttt{gies} & A pure Python implementation of the GIES algorithm from \citet{hauser2012characterization}. & \href{https://github.com/juangamella/gies}{\texttt{github.com/juangamella/gies}}\\
  \hline
  \texttt{sempler} & Used to generate the synthetic and semi-synthetic data used in the experiments. & \href{https://github.com/juangamella/sempler}{\texttt{github.com/juangamella/sempler}}\\
  \hline
  
\end{tabular}
\end{center}

\section{Semi-synthetic Data Generation Procedure}
\label{s:drfs}

In this section we summarize the procedure used to generate the semi-synthetic data set from \autoref{s:experiments}, generated by combining the biological data set and consensus graph given in \citet{sachs2005}. The procedure is general in that it can be used with any other data set and graph; an implementation is provided in the python package\footnote{See \href{https://github.com/juangamella/sempler}{\texttt{github.com/juangamella/sempler}} for more details.} \texttt{sempler}.

As input, we are given a directed acyclic graph $\cc{G}$ over $p$ variables and a data set
consisting of their observations under different environments. The procedure then fits a non-parametric Bayesian network which approximates the conditional distributions entailed by the factorization of the graph $\cc{G}$ via distributional random forests \citep{cevid2020distributional}. Once fitted, we can sample from this collection of forests to produce a new, semi-synthetic data set that respects the conditional independence relationships entailed by $\cc{G}$, while its marginal and conditional distributions closely match those of the original data set (Figures \ref{fig:sachs_normality} and \ref{fig:sachs_nonlinear}).

The procedure is motivated by a fundamental concept from graphical models. We say a distribution $P$ satisfies the \emph{global Markov} property with respect to a DAG $\cc{G}$ if and only if it factorizes according to its parental relations \citep{pearl1988probabilistic}, that is, if
$$\p(X_1,...,X_p) = \prod_{j \in [p]} \p(X_j\mid X_{\pa_\cc{G}(j)}).$$
Thus, we can model the joint data-generating distribution by independently modeling the conditional distribution of each variable given its parents in $\cc{G}$. We take this approach and model the collection of data-generating distributions $\{\p^e\}_{e \in \cc{E}}$ as follows: for each environment $e \in \cc{E}$ and each node $j \in [p]$, we estimate $\hat\p^e_{X_j \mid X_{\pa_\cc{G}(j)}=x}$ by fitting a distributional random forest \citep{cevid2020distributional}. Note that while the distributions $\{P^e\}_{e \in \cc{E}}$ of the data may not satisfy the Markov condition with respect to $\cc{G}$, the estimated distributions $\{\hat{P}^e\}_{e \in \cc{E}}$ do. 

We can now generate new data by sampling from the estimated conditional distributions, following the causal ordering of the DAG $\cc{G}$. We proceed as follows:
\begin{enumerate}
    \item First, we obtain the causal ordering $O_\cc{G}$ of $\cc{G}$.
    \item Then, for every environment $e \in \cc{E}$ and the next (or first) node in $O_\cc{G}$:
    \begin{enumerate}
        \item if $j$ is a source node, we generate a sample $\left\{\hat{x}^e_j(i)\right\}_{i=1,...,n_e}$ via bootstrapping, where $n_e$ is the number of observations from environment $e$ in the original data set.
        \item If $j$ is not a source node, for every new observation $i=1,...,n_ e$ we sample  
        $$\hat{X}^e_j(i) \sim \hat{\p}^e_{X_j \mid X_{\pa_\cc{G}(j)} = \hat{x}^e_{\pa_\cc{G}(j)}(i)}$$
        using the technique provided by \citet{cevid2020distributional}, where $\hat{x}^e_{\pa_\cc{G}(j)}(i)$ is the previously sampled observation for the parents of $j$.
    \end{enumerate}
    \item We collect all the samples, resulting in a new semi-synthetic data set.
\end{enumerate}

\begin{figure}[H]
\centering
\includegraphics[width=0.99\textwidth]{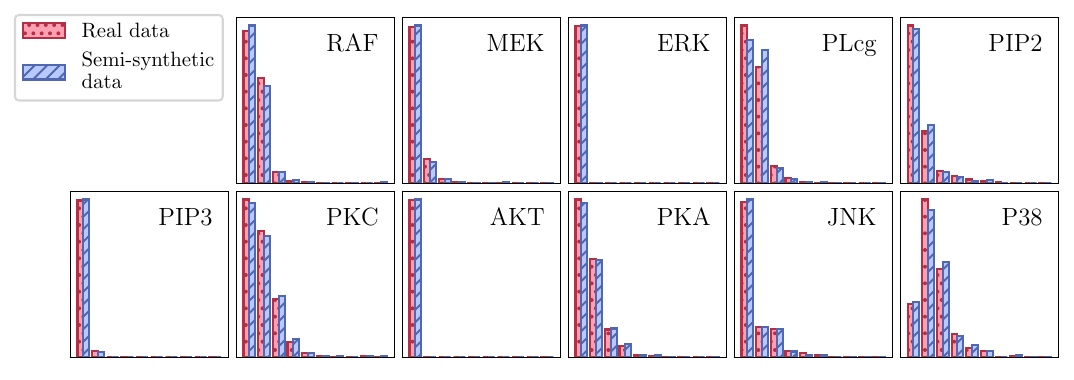}
\caption{Histograms of the observations of each variable measured in the Sachs data set, for the real data (red, dots) and the semi-synthetic data (blue, stripes) sampled from the network fitted to the data set using distributional random forests \citep{cevid2020distributional} following the consensus network. The marginals of the semi-synthetic data closely resemble those of the real data, and both show strong indications of non-normality.}
\label{fig:sachs_normality}
\end{figure}

\begin{figure}[H]
\centering
\includegraphics[width=1.0\textwidth]{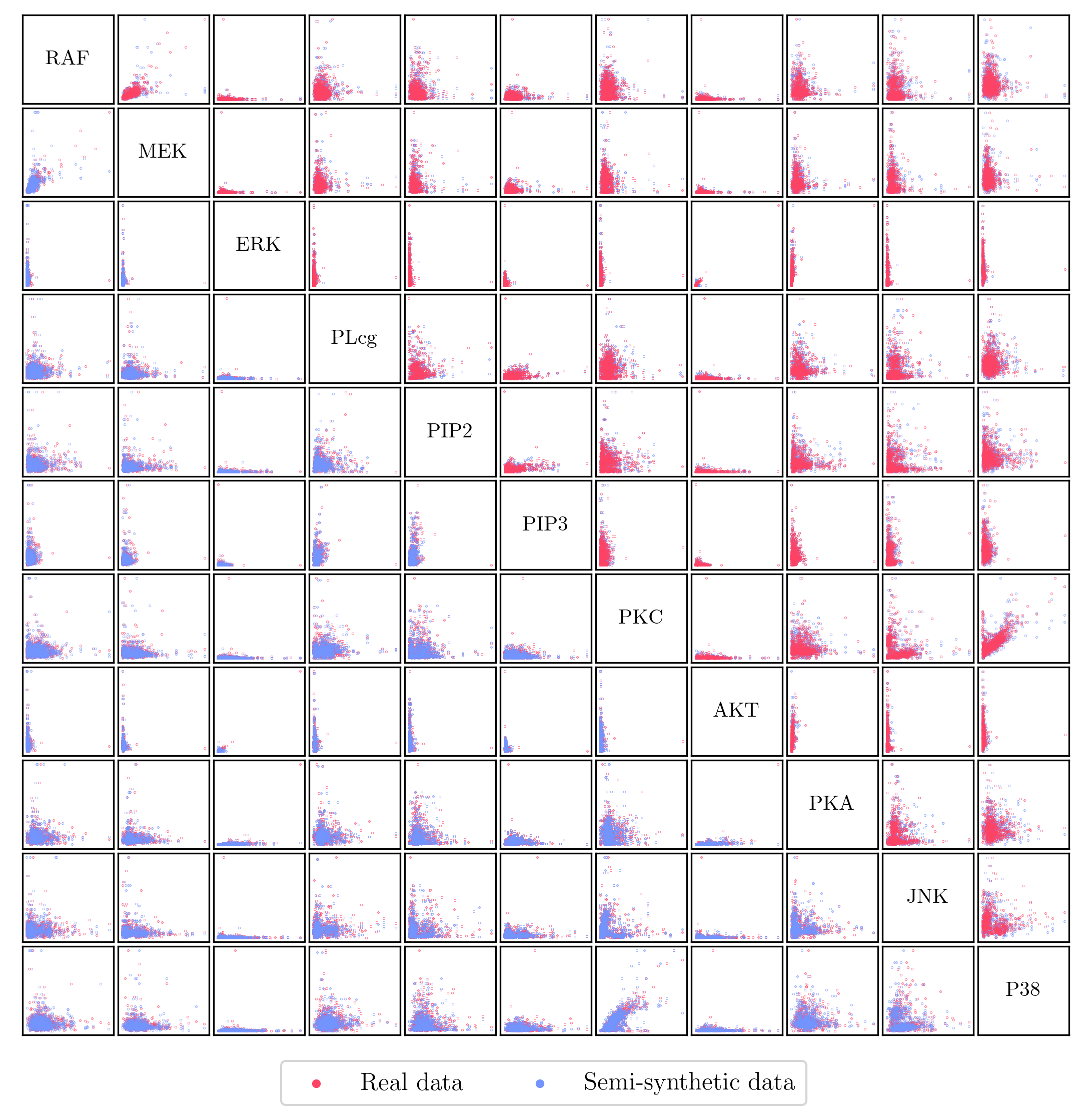}
\caption{Pairwise observations of all variables in the Sachs data set, for the real (red) and semi-synthetic (blue) data. The conditional distributions of the semi-synthetic data set closely follow those of the real data set, and both exhibit a high degree of non-linearity.}
\label{fig:sachs_nonlinear}
\end{figure}


\ntext{ \section{\ntext{Additional Synthetic Experiments}}
\label{s:additional_experiments}

In this section we provide the results of additional synthetic experiments to evaluate GnIES and the compared methods on data from denser ground-truth graphs (\autoref{fig:denser}), multiple target interventions (\autoref{fig:multi}), and a misspecified intervention model (\autoref{fig:model_mismatch}). In all figures, we evaluate the recovery of the ground-truth equivalence class using the metrics described in \autoref{def:metrics}; the point $(0,1)$ corresponds to perfect recovery. The results are an average across 1000 data sets originating from 100 linear Gaussian SCMs, which are otherwise generated following the procedure in \autoref{ss:sim}. We show the regularization path for all methods. For GnIES, GIES and GES, the regularizer is set to $\lambda := \lambda'\log(N)$ where $N$ is the total number of observations and $\lambda' \in [0.01, 2]$; the additional point shown on the regularization path of these methods corresponds to the BIC score ($\lambda'=1/2$). For UT-IGSP, the regularizers are the levels of the conditional independence and invariance tests, taken between $10^{-5}$ and $0.1$. The poor performance of sortnregress in all experiments indicates the low varsortability in the data after it is standardized.

\subsection{Misspecified intervention model}
\label{ss:hard_interventions}

We evaluate the performance of GnIES on data from a misspecified intervention model, where the interventions are \emph{do-} or hard interventions \citep[Section 3.2.2]{pearl2009causal} instead of the noise interventions described in \autoref{s:model}. In particular, the interventions remove the causal effect of the parents on the target and set it to a normal distribution with mean zero and variance sampled uniformly at random from $[3,4]$. The data is otherwise generated as in \autoref{ss:sim}.

We show the results in \autoref{fig:model_mismatch}. We employ GIES \citep{hauser2012characterization} as an additional baseline, as this setting precisely satisfies the modeling assumptions of the method save for one aspect: that knowledge of the intervention targets is required. To overcome this we provide GIES with the true intervention targets, thus acting as an indicator of the ``optimum performance'' we would hope to obtain if we had such knowledge. Furthermore, the equivalence class of the data-generating model is no longer the $\cc{I}$-equivalence class described in this paper, but rather the interventional Markov equivalence class introduced by \citet{hauser2012characterization}. In turn, we now use the completion algorithm from GIES to compute the equivalence class estimated by UT-IGSP, from its estimated DAG and intervention targets; we note that, in practice, this would require knowledge of the model violation. Because UT-IGSP allows for a wider range of intervention types, its performance is essentially undisturbed by the change to hard interventions on the same targets. The results are shown in \autoref{fig:model_mismatch} below. As expected, GnIES is sensitive to the misspecification and performs worse than GIES. The relative advantage of GnIES over UT-IGSP is reduced and then reversed as the sample size increases.

\begin{figure}[h]
\centering
\includegraphics[width=0.85\textwidth]{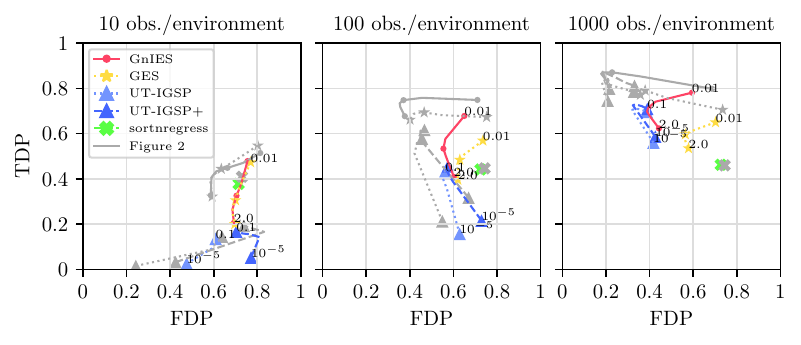}
\caption{\ntext{(denser graphs) Comparing the effect of denser ground-truth graphs with the results from \autoref{fig:model_match} (shown in gray). The ground-truth models and datasets are generated as in \autoref{ss:sim}, but we increase the average degree of the underlying Erd{\"o}s-Renyi graphs from 2.7 to 4. As expected, the performance of all methods deteriorates, but their relative performance remains similar.}}
\label{fig:denser}
\end{figure}

\begin{figure}[h]
\centering
\includegraphics[width=0.85\textwidth]{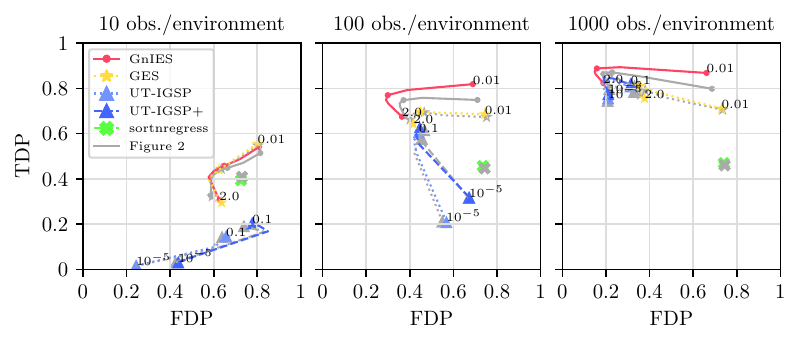}
\caption{\ntext{(multiple target interventions) Comparing the effect of multiple-target interventions with the results from \autoref{fig:model_match} (shown in gray). The datasets are generated as in \autoref{ss:sim}, but the number of intervention targets in each environment is increased to 3. The performance of the compared methods remains virtually unchanged or even improves slightly for GnIES in the larger sample sizes.}}
\label{fig:multi}
\end{figure}

\begin{figure}[h]
\centering
\includegraphics[width=0.85\textwidth]{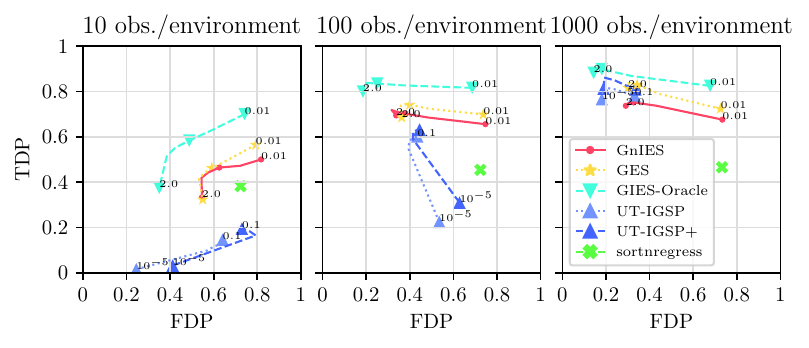}
\caption{\ntext{(misspecified intervention model) In this case, the true equivalence class is the interventional Markov equivalence class given in \citep{hauser2012characterization} for hard interventions (c.f. $\cc{H}$-MEC in \autoref{sss:hauser}). GIES is run with full knowledge of the true intervention targets. Because UT-IGSP returns a single DAG as an estimate, we use its estimated targets and the completion algorithm of GIES to produce an estimate of the equivalence class (displayed as UT-IGSP+). For each method, the regularization path is shown.}}
\label{fig:model_mismatch}
\end{figure}

}
\section{Additional Figures}
\label{s:figs}

\begin{figure}[H]
\centering
\includegraphics[width=0.97\textwidth]{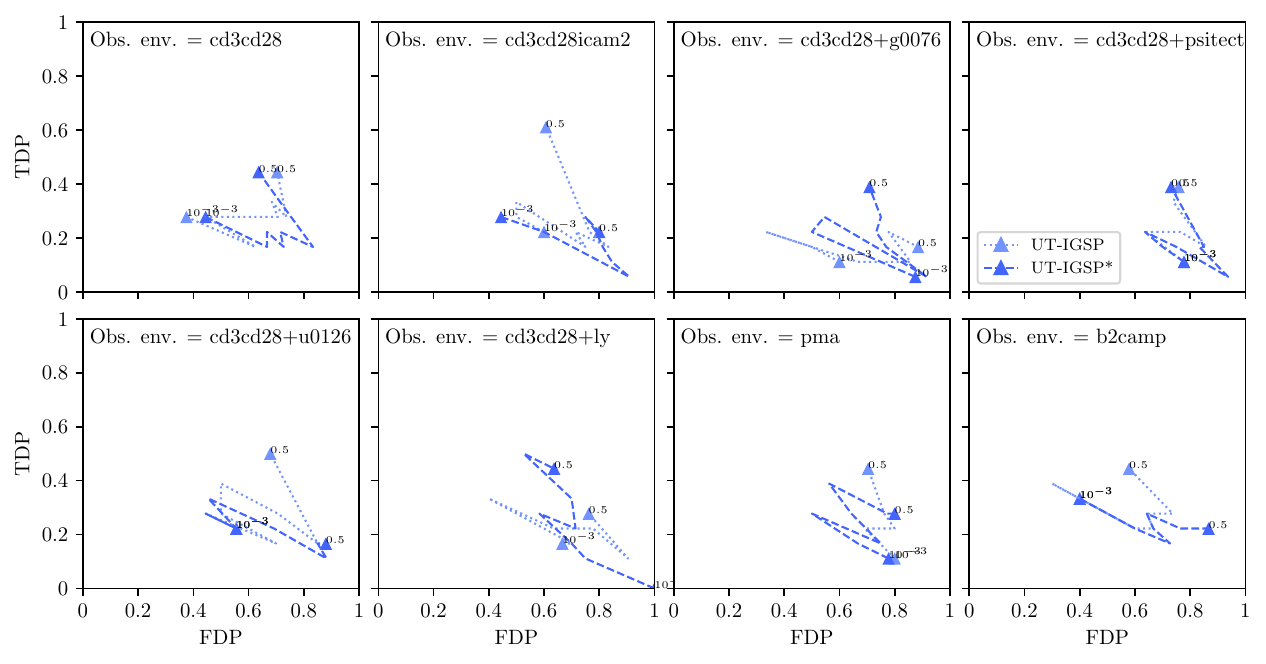}
\caption{Performance of UT-IGSP on the Sachs data set for other choices of the observational environment. The remaining environment \texttt{cd3cd28+aktinhib} is displayed in \autoref{fig:sachs} in the main text.}
\label{fig:sachs_obs_real}
\end{figure}

\begin{figure}[H]
\centering
\includegraphics[width=1\textwidth]{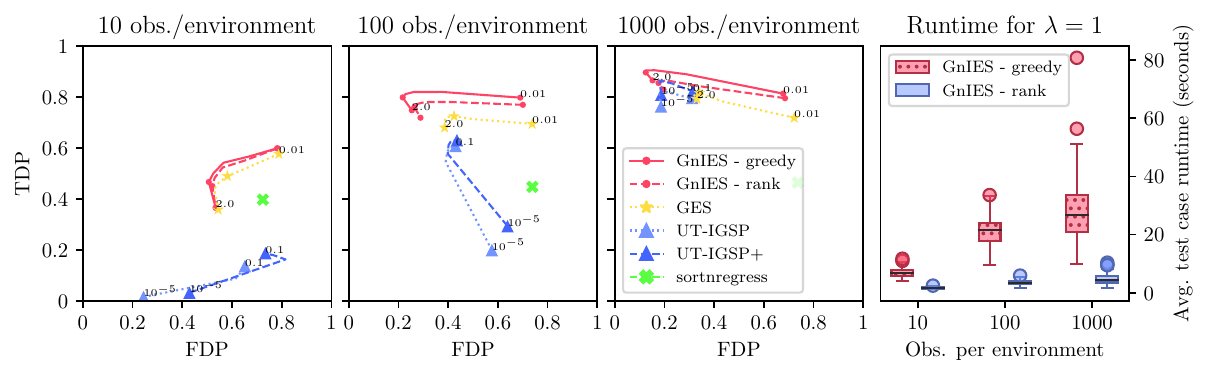}
\caption{Recovery of the interventional Markov equivalence class for GnIES with the ranking outer procedure (red, circles, dashed) compared to the greedy approach and other algorithms. Bottom right: its runtime compared to GnIES with a greedy outer procedure. The data sets, metrics, and other algorithms are the same as in \autoref{fig:model_match}. At the cost of a slightly reduced performance in class recovery, the ranking procedure results in a significant speed-up. The relative weight of the penalization term in the score decreases as the sample size increases, resulting in more targets being added and a higher runtime for both approaches.}
\label{fig:performance}
\end{figure}

\begin{figure}[H]
\centering
\includegraphics[width=0.85\textwidth]{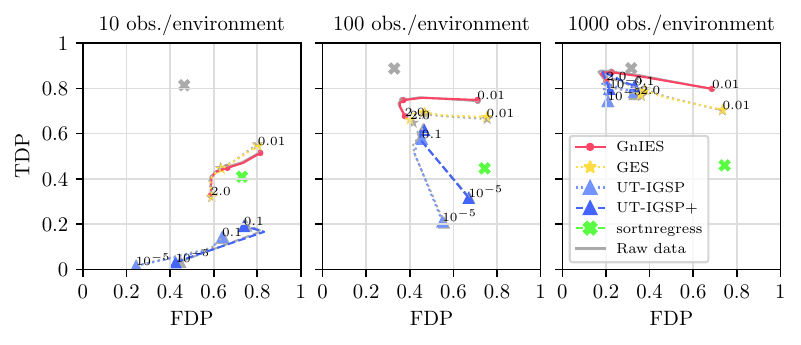}
\includegraphics[width=0.85\textwidth]{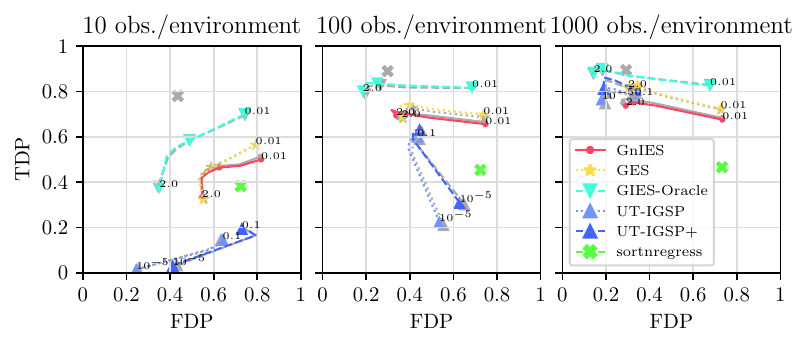}
\caption{Performance of GnIES and the other algorithms when fed standardized (color) and raw (gray) data, for the settings with a model match (top, cf. \autoref{fig:model_match}) and a model mismatch in the form of \emph{hard} interventions (bottom, cf. \autoref{fig:model_mismatch}). The sortnregress algorithm explicitly exploits varsortability, and its performance is heavily affected when it is removed from the data via standardization. The other algorithms, including GnIES, show no significant change in performance; this can be taken as an indication that they do not rely on varsortability for inference. In both settings, the average varsortability score \citep[Section 3]{reisach2021varsort} is around $0.9$ for the raw data and $0.5$ for the synthetic data.}
\label{fig:varsort}
\end{figure}


\bibliography{refs}

\end{document}